\def\dOi{11(2:?72015}
  \subjclass{F.4.1 Mathematical Logic; F.4.3 Formal Languages; F.3.2
    Semantics of Programming Languages; F.3.1 Specifying and Verifying
    and Reasoning about Programs}%
\newcommand{\expand}{\mathit{expand}}
\newcommand{\BT}{BT}
\newcommand{\rBT}{\mathit{rBT}}
\newcommand{\ABT}{ABT}
\newcommand{\promote}{^\uparrow}
\newcommand{\dar}{\!\downarrow}
\newcommand{\order}{\mathit{order}}
\newcommand{\iterate}{\operatorname{\textit{iterate}}}
\newcommand{\val}{\upsilon}
\newcommand{\costep}{\nearrow}%
\newcommand{\ttop}{\top\!\!\!\!\top}%
\newcommand{\bbot}{\bot\!\!\!\!\bot}%
\newcommand{\fix}{\mathrm{fix}}%
\newcommand{\Fix}{\mathrm{Fix}}%
\newcommand{\mon}[1]{\mathrm{mon}\lbrack #1 \rbrack}%
\newcommand{\mbrack}[1]{\left\lbrack #1 \right\rbrack}%
\begin{document}

%\title{Dealing with divergence in recursive schemes}
\title{Using models to model-check recursive schemes}
\author[S.~Salvati]{Sylvain Salvati\rsuper a}	%required
\address{{\lsuper a}INRIA, LaBRI, 351, cours de la Libération F-33405 Talence France}	%required
\email{sylvain.salvati@labri.fr}  %optional
\author[I.~Walukiewicz]{Igor  Walukiewicz\rsuper b}
\address{{\lsuper b}CNRS, LaBRI, 351, cours de la Libération F-33405 Talence France}
\email{igw@labri.fr}
\thanks{{\lsuper{a,b}}This work has been
    supported by ANR 2010 BLANC 0202 01 FREC}

  \keywords{Higher-order model checking; simply typed lambda-calculus;
    tree automata; denotational semantics}%
  \titlecomment{{\lsuper *}This paper is a long version of a paper that was
    presented at TLCA 2013~\cite{SWTLCA13}}

\begin{abstract}             
  We propose a model-based approach to the model checking problem for
  recursive schemes. Since simply typed lambda calculus with the
  fixpoint operator, $\l Y$-calculus, is equivalent to schemes, we
  propose the use of a model of $\l Y$-calculus to discriminate the terms that
  satisfy a given property. If a model is finite in every type, this
  gives a decision procedure. We provide a construction of such a
  model for every property expressed by automata with trivial
  acceptance conditions and divergence testing. Such properties pose
  already interesting challenges for model construction.  Moreover, we
  argue that having models capturing some class of properties has
  several other virtues in addition to providing decidability of the
  model-checking problem. As an illustration, we show a very simple
  construction transforming a scheme to a scheme reflecting a 
  property captured by a given model.
\end{abstract}

\maketitle

\section{Introduction}
\label{sec:introduction}

We are interested in the relation between the effective
denotational semantics of the simply typed $\lambda Y$-calculus and
the logical properties of B\"ohm trees. By \emph{effective
  denotational} semantics we mean semantic spaces in which the
denotation of a term can be computed; in this paper, these effective
denotational semantics will simply be finite models of the $\l
Y$-calculus, but $Y$ will often be interpreted neither as the
least nor as the greatest fixpoint.

Understanding  properties of B\"ohm trees from a logical point of
view is a problem that arises naturally in the model checking of
higher-order programs.  Often this problem is presented in the
context of higher-order recursive schemes that generate a possibly
infinite tree.  Nevertheless, higher-order recursive schemes can be
represented faithfully by $\l Y$-terms, in the sense that the infinite
trees they generate are precisely the B\"ohm trees $\l Y$-terms define.

The technical question we address here is whether the B\"ohm tree of a given
term is accepted by a given tree automaton. We consider only automata
with trivial acceptance conditions which we call \emph{TAC
  automata}. The principal technical challenge we face is that we allow
automata to detect if a term has a head normal form. We call such
automata \emph{insightful} as opposed to \emph{$\W$-blind} automata that are
insensitive to divergence. For example, the models studied by Aehlig
 or
 Kobayashi~\cite{DBLP:journals/lmcs/Aehlig07,DBLP:conf/popl/Kobayashi09}
 are $\W$-blind. 
The construction of a model of the $\l Y$-calculus that can at the
same time represent safety properties (as defined by trivial automata)
and check whether a computation is diverging is truly challenging.
% Considering safety properties and
%  divergence at the same time poses serious challenges to
% representing with denotational semantics what it means for an automaton
% to accept a B\"ohm tree.  
Indeed, non-convergence has to have a
non-standard interpretation, and this affects strongly the way the
interpretations of terms are computed.
As we show here,  $Y$ combinators cannot be interpreted as an extremal
fixpoint in this case, so 
known algorithms for verification of safety properties cannot take
non-convergence into account in a non-trivial way.

Let us explain the difference between insightful and $\W$-blind
conditions. The definition of a B\"ohm tree says that if the head
reduction of a term does not terminate then in the resulting tree we
get a special symbol $\W$. Yet this is not how this issue is treated
in all known solutions to the model-checking problem. There, instead
of reading $\W$, the automaton is allowed to run on the infinite sequence
of unproductive reductions. In the case of automata with trivial
conditions, this has as an immediate consequence that such an infinite
computation is accepted by the automaton.  From a denotational
semantics perspective, this amounts to interpreting the fixpoint
combinator $Y$ as a greatest fixpoint on some finite monotonous model.
So, for example, with this approach to semantics, the language of
schemes that produce at least one head symbol is not definable by
automata with trivial conditions.  Let us note that this problem
disappears once we consider B\"uchi conditions as they permit one to
detect an infinite unproductive execution. So here we look at a
particular class of properties expressible by B\"uchi conditions.
In summary, the problem we address is a non-trivial extension of what is
usually understood as verification of safety properties for recursive schemes. 

Our starting point is the proof that the usual methods for treating
the safety properties of higher-order schemes cannot capture the
properties described with insightful automata.  
The first result of the paper  shows that extremal fixpoint
models can only capture boolean combinations of $\W$-blind TAC automata.
Our main result is the construction of a model capturing insightful automata.  
This construction
is based on an interpretation of the fixpoint operator
which is neither the greatest nor the least one.  The main difficulty is
to obtain a definition that guaranties the existence and uniqueness of
the fixpoint at every type.

In our opinion, providing models capturing certain classes of
properties is an important problem both from foundational and
practical points of view.  On the theoretical side, models
need to handle all the constructions of the $\l$-calculus while, for
example, the type systems proposed so far by
Kobayashi~\cite{DBLP:conf/popl/Kobayashi09}, and by Kobayashi and
Ong~\cite{kobayashi09:_type_system_equiv_to_modal} do not cater for
$\lambda$-abstraction. Moreover, in op.\ cit.\ the treatment of recursion is
performed by means of a parity game that is not incorporated with the
type system. In contrast, we interpret the $Y$ combinator as an element
of the model we construct.
% In consequence the model-based approach gives
% more insight into the solution.  
On the practical side,
models capturing classes of properties set the stage to define
algorithms to decide these properties in terms of evaluating
$\l$-terms in them.  One can remark that models offer most of the
algorithmic advantages of other approaches. As illustrated
by~\cite{salvati12:_loader_urzyc_logic_relat},  the
typing discipline of~\cite{DBLP:conf/popl/Kobayashi09} can be
completely rephrased in terms of simple models.   
More generally, model theoretic methods based on duality offer ways 
to transform questions about the value of $\l Y$-terms in models into
typing problems. Such methods have been largely explored
in~\cite{abramsky91:_domain}.
This approach should allow one to
transfer the algorithms based on types to the approach based on
models. 
This practical
interest of models has been made into a slogan by
Terui~\cite{terui12:rta}: \emph{better semantics, faster computation}.
To substantiate further the interest of models we also present a
straightforward transformation of a scheme to a scheme reflecting a
given property~\cite{Broadbent:2010:RSL:1906484.1906730}.  From a
wider perspective, the model based approach opens a new bridge
between the $\l$-calculus and model-checking communities. In particular,
the model we construct for insightful automata brings into the front
stage particular non-extremal fixpoints. To our knowledge these have
not been studied much in the $\l$-calculus literature.
\medskip

\noindent{\bf Related work} The model checking problem has been solved
by Ong~\cite{DBLP:conf/lics/Ong06} and subsequently revisited in a
number of
ways~\cite{Hague08:collapsible_pushdown_automata_and_recursion_schemes,kobayashi09:_type_system_equiv_to_modal,salvati_walukiekicz11:_krivin_machin_higher_order_schem}.
A much simpler proof for the same problem in the case of $\W$-blind
TAC automata has been given by
Aehlig~\cite{DBLP:journals/lmcs/Aehlig07}. In his influential work,
Kobayashi~\cite{DBLP:conf/popl/Kobayashi09,DBLP:conf/asian/Kobayashi09,DBLP:conf/aplas/Kobayashi09}
has shown that many interesting properties of higher-order recursive
programs can be analyzed with recursive schemes and $\W$-blind TAC
automata.  He has also proposed an intersection type system for the
model-checking problem.  The method has been applied to the
verification of higher-order
programs~\cite{DBLP:conf/fossacs/Kobayashi11}. Another method based on
higher-order collapsible pushdown automata uses invariants expressed
in terms of regular properties of higher-order stacks that is close in
spirit to intersection types \cite{BCHS12}.  Let us note that at
present all algorithmic effort concentrates on $\W$-blind TAC
automata.  Ong and Tsukada~\cite{OngT12} provide a
game semantics model corresponding to Kobayashi's style of type
system. Their model can handle only $\W$-blind automata, but then,
thanks to game semantics, it is fully abstract. In recent
work~\cite{tsukada2014compositional} they extend this method to all
parity automata. The obtained model is infinitary though. 
 We cannot hope to have the
full abstraction in our approach using simple constructions; moreover
it is well-known that it is in general not possible to effectively
construct fully abstract models even in the finite
case~\cite{DBLP:journals/tcs/Loader01}.  In turn, as we mention
in~\cite{WalMFCS12} and show here, handling $\W$-blind automata with
simple models is straightforward. The reflection property for schemes
has been proved by Broadbent
et.\ al.~\cite{Broadbent:2010:RSL:1906484.1906730}. Haddad gives a
direct transformation of a scheme to an equivalent scheme without
divergent computations~\cite{DBLP:journals/corr/abs-1202-3498}.
\medskip

\noindent{\bf Organization of the paper} The
next section introduces the objects of our study: $\l Y$-calculus and
automata with trivial acceptance conditions (TAC automata). 
In Section~\ref{sec:prop-capt-great} we present the correspondence between models
of $\l Y$ with greatest fixpoints and boolean combinations of
$\W$-blind TAC automata. In Section~\ref{sec:model-capt-conv} we give
the construction of the model for insightful TAC automata. The last
section presents a transformation of a term into a term reflecting a
given property.

%%% Local Variables: 
%%% mode: latex
%%% TeX-master: "m"
%%% End: 

%  LocalWords:  fixpoint decidability denotational TAC Aehlig uchi et
%  LocalWords:  Kobayashi combinator extremal foundational Ong Terui
%  LocalWords:  fixpoints Tsukada Kobayashi's Broadbent al Haddad

\section{Preliminaries}\label{sec:preliminaries}

The two basic objects of our study are: $\l Y$-calculus and TAC
automata. We will look at $\l Y$-terms as mechanisms for generating
infinite trees that are then accepted or rejected by a TAC
automaton. The definitions we adopt are standard ones in the
$\l$-calculus and in the automata theory. The only exceptions are the notion
of a tree signature used to simplify the presentation, and the notion
of $\W$-blind/insightful automata that are specific to this paper.

\subsection{\texorpdfstring{$\lambda Y$}{lambda Y}-calculus and models}\label{sec:lambda-y-calculus}
The \emph{set of types} $\Tt$ is constructed from a unique \emph{basic type}
$0$ using a binary operation $\to$. Thus $0$ is a type and if $\a$,
$\b$ are types, so is $(\a\to \b)$. The order of a type is defined by:
$\order(0)=0$, and $\order(\a\to\b)=max(1+\order(\a),\order(\b))$.  We
assume that the symbol $\to$ associates to the right. More
specifically we shall write $\a_1\to \dots \to \a_n \to \b$ so as to
denote the type $(\a_1\to (\dots(\a_{n-1} \to (\a_n\to \b))\dots ))$.

A \emph{signature}, denoted $\S$, is a set of typed constants, \textit{i.e.}
symbols with associated types from $\Tt$. We will assume that for
every type $\a\in \Tt$ there are constants $\w^\a$, $\W^\a$ and
$Y^{(\a\to\a)\to\a}$. A constant $Y^{(\a\to\a)\to\a}$ will stand for a
fixpoint operator. Both $\w^\a$ and $\W^\a$ will stand for undefined terms.
The reason why we need two different constants to denote undefined
terms is clarified in
Section~\ref{sec:model-capt-conv}.  

Of special interest to us will be
\emph{tree signatures} where all constants other than $Y$, $\w$ and
$\W$ have order at most $1$. Observe that types of order $1$ have the
form $0^i\to 0$ for some $i$; the latter is a short notation for $0\to
0\to\dots\to 0\to 0$, where there are $i+1$ occurrences of $0$.

\textbf{Proviso:}\label{proviso} to simplify the notation we will
suppose that all the constants in a tree signature are either of type
$0$ or of type $0\to 0\to 0$. So they are either a constant of the
base type or a function of two arguments over the base
type. This assumption does not affect the results of the paper. \medskip

The set of \emph{simply typed $\l$-terms} is defined inductively as
follows. A constant of type $\a$ is a term of type $\a$. For each type
$\a$ there is a countable set of variables $x^\a,y^\a,\dots$ that are
also terms of type $\a$. If $M$ is a term of type $\b$ and $x^\a$ a
variable of type $\a$ then $\l x^{\a}.M$ is a term of type
$\a\to\b$. Finally, if $M$ is of type $\a\to\b$ and $N$ is a term of
type $\a$ then $MN$ is a term of type $\b$. We shall use the usual
convention about dropping parentheses in writing $\l$-terms and we
shall write sequences of $\l$-abstractions $\l x_1.\dots \l x_n.M$ with
only one $\l$: $\l x_1\dots x_n.M$. Even shorter, we shall write $\l
\vec{x}. M$ when $\vec{x}$ stands for a sequence of variables.

The usual operational semantics of the $\lambda$-calculus is given by
$\beta$-contraction. To give the meaning to fixpoint constants we use
$\delta$-contraction ($\to_\d$).  Of course those rules may be applied
at any position in a term:
\begin{equation*}
(\l x. M)N\to_\b M[N/x]\qquad YM\to_\d M (YM).  
\end{equation*}
We write $\to^*_{\b\d}$ for the $\b\d$-reduction, the reflexive and
transitive closure of the sum of the two relations (we write
$\to^+_{\b\d}$ for its transitive closure). This relation defines an
operational equality on terms. We write $=_{\b\d}$ for the smallest
equivalence relation containing $\to^*_{\b\d}$. It is called
\emph{$\b\d$-conversion} or \emph{$\b\d$-equality}. Given a term $M =
\l x_1\dots x_n.N_0N_1\dots N_p$ where $N_0$ is of the form $(\l
x.P)Q$ or $YP$, then $N_0$ is called the \emph{head redex} of $M$.  We
write $M\to_{ h}M'$ when $M'$ is obtained by $\b\d$-contracting the
head redex of $M$ (when it has one). We write $\to_{ h}^*$ and
$\to_{ h}^+$ respectively for the reflexive and transitive closure
and the transitive closure of $\to_{ h}$.  The relation $\to_{
  h}^*$ is called \emph{head reduction}. A term with no head redex is
said to be in \emph{head normal form}.

Thus, the operational semantics of the $\lambda Y$-calculus is the
$\beta\delta$-reduction. It is well-known that this semantics is
confluent~\cite{statman04} and enjoys subject reduction (\textit{i.e.} the type of
terms is invariant under $\b\d$-reduction). So every term has at most one
normal form, but due to $\delta$-reduction there are terms without a
normal form.  A term may not have a normal form because it does not
have head normal form, in such case it is called
\emph{unsolvable}. Even if a term has a head normal form, i.e.\ it is
\emph{solvable}, it may contain an unsolvable
subterm that prevents it from having a normal form. Finally, it may be
also the case that all the subterms of a term are solvable but the
reduction generates an infinitely growing term. It is thus classical
in the $\lambda$-calculus to consider a kind of infinite normal form
that by itself is an infinite tree, and in consequence it is not a
term of the $\lambda
Y$-calculus~\cite{Barendregt84,amadio98:_domain_lambd_calcul}. This infinite
normal form is called a \emph{B\"ohm tree}.
%We define it below.

A \emph{B\"ohm tree} is an unranked, ordered, and potentially infinite
tree with nodes labeled by terms of the form $\l x_1.\dots x_n. N$;
where $N$ is a variable or a constant and $n\geq 0$ (so, in particular,
the sequence of $\lambda$-abstractions may be empty). So for example
$x^0$, $\W^0$, $\l x^0. \w^0$ are labels, but $\l y^0.\ x^{0\to 0}
y^0$ is not.

\begin{defi}\label{df:Bohm tree}
A \emph{B\"ohm tree} of a term $M$ is obtained in the following way. 
\begin{itemize}
\item If $M\to^*_{\b\d} \l \vec x.N_0N_1\dots N_k$ with $N_0$ a
  variable or a constant then $BT(M)$ is a tree having root 
  labeled  by $\l \vec x.N_0$ and having
  $BT(N_1)$, \dots, $BT(N_k)$ as subtrees.
\item Otherwise $BT(M)=\W^\a$, where $\a$ is the type of $M$.
\end{itemize}
\end{defi}
Observe that a term $M$ without the constants $\W$ and $\w$ has a
$\b\d$-normal form if and only if $BT(M)$ is a finite tree without the
constants $\W$ and $\w$. In this case the B\"ohm tree is just another
representation of the normal form. Unlike in the standard theory of
the simply typed $\l$-calculus we will be rather interested in terms
with infinite B\"ohm trees.

Recall that in a tree signature all constants except  $Y$, $\W$, and
$\w$ are of type $0$ or $0\to 0\to 0$. A closed term  without
$\l$-abstraction and $Y$ over such a signature is just a finite binary
tree, where constants of type $0$ occur at leaves, and constants of type
$0\to 0\to 0$  are in the internal nodes. The same holds for B\"ohm
trees:

\begin{lem}\label{lemma:Bohm tree is a tree}
  If $M$ is a closed term of type $0$ over a tree signature then
  $BT(M)$ is a potentially infinite binary tree.
\end{lem}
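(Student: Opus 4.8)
The plan is to exploit the corecursive structure of Definition~\ref{df:Bohm tree} and to carry the invariant that every node of the tree under construction is $BT(M')$ for some \emph{closed} term $M'$ of type $0$. Since $BT(M)$ is obtained precisely by unfolding this recursion, it is enough to show that a single unfolding step applied to a closed type-$0$ term yields either a leaf or a node with exactly two children, and that in the latter case the two children are again closed terms of type $0$. The statement then follows by coinduction, with $M$ itself supplying the base of the invariant.

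Fix a closed term $M$ of type $0$ and consider the two clauses of the definition. The \emph{otherwise} clause gives $BT(M)=\W^0$, a leaf, and nothing further is required. In the main clause $M\to^*_{\b\d}\l\vec x.N_0N_1\dots N_k$ with $N_0$ a variable or a constant. I would first invoke subject reduction to conclude that this head normal form again has type $0$; as $0$ is not an arrow type, the prefix $\vec x$ must be empty and the label is simply $N_0$. Since $\b\d$-reduction introduces no free variables, $N_0N_1\dots N_k$ is closed, and with no enclosing abstraction $N_0$ cannot be a variable, so it is a constant. Moreover $N_0\neq Y$, because $YP$ is a $\d$-redex and hence a term with head $Y$ is never in head normal form.

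It then remains to match arities. By the Proviso every constant other than $Y$ has type $0$ or $0\to 0\to 0$ (the divergence symbol $\W^0$ and the undefined symbol $\w^0$ falling under the former). If $N_0$ has type $0$ then $k=0$ and the node is a leaf; if $N_0$ has type $0\to 0\to 0$ then, for the application to have type $0$, necessarily $k=2$ with $N_1,N_2$ of type $0$, so the node carries exactly the two children $BT(N_1)$ and $BT(N_2)$. Each $N_i$ is a type-$0$ subterm of the closed term $N_0N_1N_2$ and is therefore itself closed, which re-establishes the invariant for the recursive calls and closes the coinduction.

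I expect the only delicate point to be the analysis of the root label: arguing that the abstraction prefix vanishes, that the head is a constant rather than a variable, and that this constant is neither $Y$ nor of an arity incompatible with type $0$. Once these facts about a single head normal form are secured, the assembly of the whole (possibly infinite) binary tree is routine coinductive bookkeeping, so I would keep that part brief.
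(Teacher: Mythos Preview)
Your argument is correct. The paper states this lemma without proof, treating it as an immediate consequence of the tree-signature proviso together with subject reduction; your case analysis on the head normal form is precisely the routine verification the authors leave implicit. One tiny sharpening: your justification that $N_0\neq Y$ appeals to $YP$ being a $\d$-redex, which tacitly assumes $k\geq 1$; strictly, $Y$ standing alone is already in head normal form, so you should first observe that the type $(\a\to\a)\to\a$ of $Y$ is never $0$, whence typing forces $k\geq 1$, and only then invoke the $\d$-redex argument.
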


We will consider finitary models of the $\lambda Y$-calculus. In the
first part of the paper we will concentrate on those where $Y$ is
interpreted as the greatest fixpoint.  The models interpreting $Y$ as
least fixpoints are dual and capture the same class of properties as
the models based on greatest fixpoints for interpreting the $Y$
combinator.

\begin{defi}\label{df:GFP model}
  A \emph{GFP-model} of a signature $\S$ is a tuple
  $\Ss=\struct{\set{\Ss_\a}_{\a\in\Tt},\r}$ where $\Ss_0$ is a finite
  lattice, called the \emph{base set} of the model, and for every type $\a\to\b\in \Tt$, $\Ss_{\a\to\b}$ is the
  lattice $\mon{\Ss_\a\to\Ss_\b}$ of monotone functions from $\Ss_\a$
  to $\Ss_\b$ ordered coordinatewise. The valuation function $\r$ is
  required to satisfy certain conditions:
  \begin{itemize}
  \item If $c\in\S$ is a constant of type $\a$ then $\r(c)$ is an
    element of $\Ss_\a$.  
  \item For every $\a\in \Tt$, both $\r(\w^\a)$ and $\r(\W^\a)$ are
    the greatest elements of $\Ss_\a$.
  \item Moreover, $\r(Y^{(\a\to\a)\to\a})$ is the function assigning
    to every function $f\in \Ss_{\a\to\a}$ its greatest fixpoint.
  \end{itemize}

\end{defi}

\noindent Observe that every $\Ss_\a$ is finite and is thus a complete lattice. 
Hence all the greatest fixpoints
exist without any additional assumptions.

A \emph{variable assignment} is a function $\val$ associating to a
variable of type $\a$ an element of $\Ss_\a$. If $s$ is an element of
$\Ss_\a$ and $x^\a$ is a variable of type $\a$ then $\val[s/x^\a]$ denotes
the valuation that assigns $s$ to $x^\a$ and that is identical to $\val$
everywhere else. 

The \emph{interpretation of a term} $M$ of type $\a$ in the model
$\Ss$ under the valuation $\val$ is an element of $\Ss_\a$ denoted
$\sem{M}_\Ss^\val$. The meaning is defined inductively:
\begin{itemize}
\item $\sem{c}^\val_\Ss=\r(c)$
\item $\sem{x^\a}^\val_\Ss=\val(x^\a)$
\item
  $\sem{MN}^\val_\Ss=\sem{M}^\val_\Ss(\sem{N}^\val_\Ss)$
\item $\sem{\l x^\a.M}^\val_\Ss$ is a function mapping an element
  $s\in \Ss_\a$ to $\sem{M}^{\val[s/x^\a]}_\Ss$ that by abuse of
  notation we may write $\l s.\sem{M}^{\val[s/x^\a]}_\Ss$.
\end{itemize}
It is well-known that the interpretations of terms are always monotone
functions. We refer the reader to~\cite{amadio98:_domain_lambd_calcul}
for details.  As usual, we will omit subscripts or superscripts in the
notation of the semantic function if they are clear from the context.

Of course a GFP model is sound with respect to
${\b\d}$-conversion. Hence two $\b\d$-convertible terms have the same
semantics in the model.  For us it is important that a stronger
property holds: if two terms have the same B\"ohm trees then they have
the same semantics in the model. For this we need to formally define
the semantics of a B\"ohm tree.

The semantics of a B\"ohm tree is defined in terms of its truncations.
For every $n\in \Nat$, we denote by $BT(M)\dar_n$\label{def:cut-BT} the finite term that
is the result of replacing in the tree $\BT(M)$ every subtree at depth
$n$ by the constant $\w^\a$ of the appropriate type. Observe that if
$M$ is closed and of type $0$ then $\a$ will always be the base type
$0$.  This is because we work with a tree signature. We define
\begin{equation*}\label{eq:BT-semantics}
  \sem{BT(M)}^\val_\Ss=\Land\set{\sem{BT(M)\dar_n}^\val_\Ss\mid n\in \Nat}.  
\end{equation*}

The above definitions are standard for $\l Y$-calculus, or more
generally for PCF~\cite{amadio98:_domain_lambd_calcul}. In particular
the following proposition, in a more general form, can be found as
Exercise 6.1.8 in op.\ cit\footnote{In this paper we work with models
  built with finite lattices and monotone functions which are a
  particular case of the directed complete partial order and
  continuous functions used
  in~\cite{amadio98:_domain_lambd_calcul}. We also use GFP models
  while \cite{amadio98:_domain_lambd_calcul} uses least fixpoints, but
  the duality between those two classes of models makes the proof of
  the proposition similar in the two cases.}.

% The following lemma justifies the use of the constant $\w$ in the
% definition of the B\"ohm tree.
% \begin{lemma}\label{lemma:GFP-value-of-unsolvable-term}
%   If $M$ is an unsolvable term of type $\a$ then
%   $\sem{M}=\sem{\w^\a}$.\todo{is this true for every $\a$}
% \end{lemma}

\begin{prop}\label{prop:semantics-of-BT}
  If $\Ss$ is a finite GFP-model and $M$ is a closed term then:
  $\sem{M}_{\Ss}=\sem{BT(M)}_{\Ss}$.
\end{prop}
Observe that $\W$ is used to denote divergence and $\w$ is used in the
definition of the truncation $\BT(M)\dar_n$. In GFP-models this is
irrelevant as the two constants are required to have the same meaning. Later we
will consider models that distinguish those two constants.

\subsection{TAC Automata}\label{sec:triv-autom-infin}
Let us fix a tree signature $\S$. Recall that this means that apart
from $\w$, $\W$ and $Y$ all constants have order at most
$1$. According to our proviso from page~\pageref{proviso} all
constants in $\S$ have either type $0$ or type $0\to 0\to 0$. In this
case, as we only consider closed terms of type $0$, by
Lemma~\ref{lemma:Bohm tree is a tree}, B\"ohm trees are potentially
infinite binary trees. Let $\S_0$ be the set of constants of type $0$,
and $\S_2$ the set of constants of type $0\to 0\to 0$.

\begin{defi}\label{df:trivial aut}
  A \emph{finite tree automaton with trivial acceptance condition}
  (TAC automaton) over the signature $\S=\S_0\cup\S_2$ is
\begin{equation*}
  \Aa=\struct{Q,\S,q^0\in Q,\d_0:Q\times (\S_0\cup\set{\W})\to
    \set{\ffalse,\ttrue},\d_2: Q\times \S_2 \to \Pp(Q^2)}
\end{equation*}  
where $Q$ is a finite set of states and $q^0\in Q$ is the initial
state. The transition function of the TAC automaton may be subject to the
additional restriction:
\begin{equation*}
  \text{\textbf{$\W$-blind:}\quad $\d_0(q,\W)=\ttrue$ for all $q\in Q$.}
\end{equation*}
An automaton satisfying this restriction is called
\emph{$\W$-blind}. For clarity, we use the term \emph{insightful} to
refer to automata without this restriction.
\end{defi}
Automata are used to define languages of possibly infinite binary
trees. More specifically, an automaton over $\S$ shall define a set of
$\S$-labelled binary trees. These trees are partial functions
$t:\set{1,2}^*\to\S\cup\set{\W}$ such that their domain is a binary
tree: (i) if $uv$ is in the domain of $t$ then so is $u$, (ii) if $u$
is in the domain of $t$ and $t(u)$ is in $\S_2$ then $u1$ and $u2$ are
in the domain of $t$, (iii) if $u$ is in the domain of $t$ and
$t(u)\in \S_0\cup\set{\W}$ then $u$ is called a \emph{leaf}, and if
$uv$ is in the domain of $t$ then $v$ is the empty string.

A \emph{run of $\Aa$ on $t$} is a mapping $r:\set{1,2}^*\to Q$
with the same domain as $t$ and such that: 
\begin{itemize}
\item $r(\e)=q^0$, here $\e$ is the root of $t$.
\item $(r(u1),r(u2))\in \d_2(t(u),r(u))$ if $u$ is an internal node.
\end{itemize}
A run is \emph{accepting} if $\d_0(r(u),t(u))=\ttrue$ for every leaf
$u$ of $t$ . A tree is \emph{accepted by $\Aa$} if there is an
accepting run on the tree. The \emph{language} of $\Aa$, denoted
$L(\Aa)$, is the set of trees that are accepted by $\Aa$.

Observe that TAC automata have acceptance conditions on leaves,
expressed with $\d_0$, but do not have acceptance conditions on
infinite paths. For example, this implies that every run on an
infinite tree with no leaves is accepting. This does not mean of
course that TAC automata accept all such trees as there may be no run
on a particular tree. Indeed it may be the case that $\d_2(q,c) = \es$
for some pairs $(q,c)$.

As underlined in the introduction, all the previous works on automata
with trivial conditions rely on the $\W$-blind restriction.  Let us give
some examples of properties that can be expressed with insightful
automata but not with $\W$-blind automata.\label{ex:properties}

\begin{itemize}
\item The set of terms not having $\W$ in their B\"ohm tree. To
  recognize this set we take the automaton with a unique state
  $q$. This state has transitions on all the letters from $\S_2$. It
  also can end a run in every constant of type $0$ except for $\W$:
  this means $\d_0(q,\W)=\ffalse$ and $\d_0(q,c)=\ttrue$ for all other
  $c$.

\item The set of terms having a head normal form. We take an automaton
  with two states $q$ and $q_\top$. From $q_\top$ the automaton accepts
  every tree. From $q$ it has transitions to $q_\top$ on all the
  letters from $\S_2$, on letters from $\S_0$ it behaves as the
  automaton above.

\item Building on these two examples one can easily construct an
  automaton for a property like ``every occurrence of $\W$ is preceded
  by a constant $err$''.
\end{itemize}
It is easy to see that none of these languages is recognized by any
$\W$-blind automaton since if such an automaton accepts a tree $t$
then it accepts also every tree obtained by replacing a subtree of $t$
by $\W$.  This observation also allows one to show that those languages
cannot be defined as boolean combinations of $\W$-blind automata.

%%% Local Variables: 
%%% mode: latex
%%% TeX-master: "m"
%%% End: 

\section{GFP models and \texorpdfstring{$\W$}{W}-blind TAC automata}\label{sec:prop-capt-great}

In this section we show that the recognizing power of GFP models
coincides with that of boolean combinations of $\W$-blind TAC
automata. For every automaton we will construct a model capable of
discriminating the terms accepted by the automaton. For the opposite
direction, we will use boolean combinations of TAC automata to
capture the recognizing power of the model. We
start with the expected formal definition of a set of $\lambda Y$-terms
recognized by a model.
\begin{defi}
  For a GFP model $\Ss$ over the base set $\Ss_0$. The \emph{language
    recognized} by a subset $F\incl \Ss_0$ is the set of closed
  $\lambda Y$-terms $\set{M \mid \sem{M}_\Ss\in F}$.
\end{defi}

We need to introduce some notations that we shall use in the course of the
proofs.  Given a closed term $M$ of type $0$, the tree $BT(M)$ can be
seen as a binary tree $t:\set{1,2}^*\to \S$.  For every node $v$ in the
domain of $t$, we write $M_v$ for the subtree of $t$ rooted at node
$v$.  The tree $\BT(M)\dar_k$ is a prefix of this tree containing
nodes up to depth $k$, denote it $t_k$ (c.f.~definition on page~\pageref{def:cut-BT}). It has three types of leaves:
``cut leaves'' are at depth $k$ and are labelled by $\w$,
``non-converging leaves'' labelled by $\W$, and ``normal leaves''
labelled by a constant of type $0$.  
Every node $v$ in the domain
of $t_k$ corresponds to a subterm  of $\BT(M)\dar_k$ that we
denote $M_{v}^k$.
In particular $M_\e^k$ is $BT(M)\dar_k$ since $\e$ is the root of $BT(M)\dar_k$.

\begin{prop}\label{prop:from_W_blind_TAC_to_GFP_models}
  For every $\W$-blind TAC automaton $\Aa$, the language of $\Aa$ is
  recognized by a GFP model. 
 \end{prop}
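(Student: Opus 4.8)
The plan is to build, directly from the automaton $\Aa=\struct{Q,\S,q^0,\d_0,\d_2}$, a GFP model whose base set records, for each binary tree, the set of states from which $\Aa$ has an accepting run. Concretely, I take $\Ss_0=\Pp(Q)$ ordered by inclusion, so the top element is $Q$ and the meet $\Land$ is intersection $\bigcap$. I interpret the constants as follows: for a leaf constant $c\in\S_0$ set $\r(c)=\set{q\mid \d_0(q,c)=\ttrue}$; for a binary constant $a\in\S_2$ let $\r(a)$ be the map sending $(S_1,S_2)$ to $\set{q\mid \exists (q_1,q_2)\in\d_2(q,a).\ q_1\in S_1 \text{ and } q_2\in S_2}$; and set $\r(\W)=\r(\w)=Q$, the top element. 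Monotonicity of $\r(a)$ is immediate, and because $\Aa$ is $\W$-blind we have $\set{q\mid \d_0(q,\W)=\ttrue}=Q$, so reading $\W$ as the top element is exactly right. This is the key place where the $\W$-blind hypothesis is used, and also the reason a GFP model, which forces $\r(\W)=\top$, cannot go beyond $\W$-blind automata. The recognizing set is $F=\set{S\subseteq Q\mid q^0\in S}$.

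The core of the argument is the identity $\sem{M}_\Ss=\set{q\mid \Aa \text{ has an accepting run on } BT(M) \text{ from } q}$ for every closed term $M$ of type $0$; granting it, $M\in L(\Aa)$ iff $q^0\in\sem{M}_\Ss$ iff $\sem{M}_\Ss\in F$, which is the claim. To prove the identity I first establish, by induction on the finite tree, that for any finite tree $t$ with cut-leaves labelled $\w$ one has $\sem{t}_\Ss=\set{q\mid q \text{ partially accepts } t}$, where a partial accepting run respects $\d_2$ at internal nodes, satisfies $\d_0$ at normal and at $\W$-leaves (the latter automatically, by $\W$-blindness), and is unconstrained at the $\w$-cut-leaves, the value $\r(\w)=Q$ encoding precisely this absence of constraint. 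Then, using Proposition~\ref{prop:semantics-of-BT} together with the definition of $\sem{BT(M)}_\Ss$ as $\Land_n\sem{BT(M)\dar_n}_\Ss=\bigcap_n\sem{BT(M)\dar_n}_\Ss$, the problem reduces to showing $\bigcap_n\set{q\mid q \text{ partially accepts } BT(M)\dar_n}=\set{q\mid q \text{ accepts } BT(M)}$.

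This last equality is where the real work lies. The inclusion $\supseteq$ is routine: restricting a full accepting run on $BT(M)$ to the nodes of $BT(M)\dar_n$ yields a partial accepting run, since the normal and $\W$-leaves of the truncation are genuine leaves of $BT(M)$ and the depth-$n$ cut-leaves carry no constraint. The inclusion $\subseteq$ is the main obstacle, and I would handle it by a K\"onig's-lemma / compactness argument: fix $q$ in the intersection and form the tree whose level-$n$ nodes are the partial accepting runs from $q$ on $BT(M)\dar_n$, with the parent relation given by restriction. I must first check that restricting a partial accepting run on $BT(M)\dar_{n+1}$ to $BT(M)\dar_n$ is again a partial accepting run, the one delicate point being that depth-$n$ nodes internal in $BT(M)$ become unconstrained $\w$-leaves after truncation, so no acceptance condition is lost. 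Since $Q$ is finite and $BT(M)$ is binary, each level of this tree is finite and, by hypothesis, non-empty, so K\"onig's lemma produces an infinite branch, i.e.\ a single run $r$ on all of $BT(M)$ consistent with every approximant. This $r$ respects $\d_2$ everywhere (each internal node lies in some truncation) and accepts at every genuine leaf (which, lying at some finite depth $d$, is a normal or $\W$-leaf of every $BT(M)\dar_n$ with $n>d$), so $q$ accepts $BT(M)$, completing the proof.
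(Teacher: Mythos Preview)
Your proposal is correct, and the model you build---base set $\Pp(Q)$, $\r(c)=\set{q\mid\d_0(q,c)=\ttrue}$, $\r(a)(S_1,S_2)=\set{q\mid\d_2(q,a)\cap S_1\times S_2\neq\es}$, recognizing set $F=\set{S\mid q^0\in S}$---is exactly the one the paper constructs. The direction ``accepting run $\Rightarrow q^0\in\sem{M}$'' is also argued the same way: restrict the run to each truncation $BT(M)\dar_k$ and show by induction on the height of a node $v$ that $r(v)\in\sem{M^k_v}$.

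The genuine difference is in the converse direction. You pass through the identity $\sem{M}=\bigcap_n\sem{BT(M)\dar_n}$, characterize each approximant as the set of states admitting a partial accepting run, and then invoke K\"onig's lemma on the tree of partial runs to assemble a full accepting run. The paper avoids the compactness step entirely: starting from $q\in\sem{M}$ it works directly with the term. If $M$ is unsolvable, $\W$-blindness gives acceptance immediately; otherwise $M$ head-reduces to $aM_1M_2$, so $\sem{M}=\r(a)(\sem{M_1},\sem{M_2})$, and the very definition of $\r(a)$ hands you a pair $(q_1,q_2)\in\d_2(q,a)$ with $q_i\in\sem{M_i}$. Iterating this coinductively builds the run node by node. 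Your approach is a clean, automata-flavoured argument and makes the role of finiteness of $Q$ explicit; the paper's approach is shorter because the semantics of $M$ already encodes, level by level, exactly the transition witnesses needed, so no limit argument is required.
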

\begin{proof}
  For the model $\Ss_\Aa$ in question we take a GFP model with the
  base set 
  $\Ss_0=\Pp(Q)$. This determines $\Ss_\a$ for every type $\a$. It
  remains to define the interpretation of constants other than $\w$, $\W$, or
  $Y$. A constant $c$ of type $0$ is interpreted as a set $\set{q \mid
    \d_0(q,c)=\ttrue}$. A constant $a$ of type $0\to 0\to 0$ is
  interpreted as a function whose value on $(S_0,S_1)\in \Pp(Q)^2$ is
  $\set{q \mid \d_2(q,a)\cap S_0\times S_1\not=\es}$. Finally, for the set
  $F_\Aa$ used to recognize $L(\Aa)$ we will take $\set{S\mid q^0\in
    S}$; recall that $q^0$ is the initial state of $\Aa$.  We want to
  show that for every closed term $M$ of type $0$:
  \begin{equation*}
    BT(M)\in L(\Aa)\quad \text{iff}\quad \sem{M}\in F_\Aa.
  \end{equation*}

  For the direction from left to right, we take a $\l Y$-term $M$ such
  that $BT(M)\in L(\Aa)$, and
  show that $q^0\in\sem{BT(M)}$. This will do as $\sem{BT(M)}=\sem{M}$
  by Proposition~\ref{prop:semantics-of-BT}. Recall that
  $\sem{BT(M)}=\Land\set{\sem{BT(M)\dar_k}\mid k=1,2,\dots}$. So it is
  enough to show that $q^0\in \sem{BT(M)\dar_k}$ for every $k$.

  % The tree $BT(M)$ can be seen as a binary tree $t:\set{1,2}^*\to
  % \S$. By assumption there is an accepting run $r$ of $\Aa$ on it. The
  % tree $\BT(M)\dar_k$ is a prefix of this tree containing nodes
  % up to depth $k$, denote it $t_k$. Every node $v$ in
  % the domain of $t_k$ corresponds to a subterm term of
  % $\BT(M)\dar_k$ that we denote $M^k_v$. 

  Let us assume that we have an accepting run $r$ of $\Aa$ on $BT(M)$.
  By induction on the height of $v$ in the domain of $BT(M)\dar_k$ we
  show that $r(v)\in\sem{M^k_v}$. The desired conclusion will follow
  by taking $v=\e$; that is the root of the tree.  If $v$ is a ``cut
  leaf'' then $M^k_v$ is $\w^0$. So $r(v)\in \sem{\w^0}$ since $\sem{\w^0}=Q$. If $v$ is
  a ``non-converging leaf'', then $M^k_v$ is $\W^0$ and $r(v)\in
  Q=\sem{\W^0}$. If $v$ is
  a ``normal'' leaf then $M^k_v$ is a constant $c$ of type $0$. We
  have $r(v)\in\set{q: \d(q,c)=\ttrue}$. If $v$ is an internal node
  then $M^k_v=aM^k_{v1}M^k_{v2}$.  By induction assumption $r(v1)\in
  \sem{M^k_{v1}}$ and $r(v2)\in \sem{M^k_{v2}}$. Hence by definition
  of $\r(a)$ we get
  \begin{equation*}
    r(v)\in
    \sem{M_v}=\r(a)(\sem{M^k_{v1}},\sem{M^k_{v2}})\ .
  \end{equation*}

  For the direction from right to left we take a term $M$ and a state
  $q\in\sem{M}$. We construct a run of $\Aa$ on $\BT(M)$ that starts
  with the state $q$. So we put $r(\e)=q$.  If $M$ has no head normal
  form $BT(M) = \W$ and, using Proposition~\ref{prop:semantics-of-BT},
  the conclusion is immediate as the automaton is $\W$-blind.  If $M$
  has as head normal form a nullary constant $a$, the conclusion
  follows from the definition $\sem{a}$.  Now if $M$ has as head
  normal form $a M_1M_2$, by definition of $\sem{a}$, there is
  $(q_1,q_2)$ in $\d(q,a)$ so that $q_1\in\sem{M_1}$ and
  $q_2\in \sem{M_2}$.We repeat the argument with the state $q_1$ from
  node $1$, and with the state $q_2$ from node $2$.  It is easy to see
  that this gives an accepting run of $\Aa$ on $BT(M)$.\end{proof}

 % By
 %  Proposition~\ref{prop:semantics-of-BT}, we know that $q\in
 %  \sem{BT(M)\dar_k}$ for all $k$. There is a constant $a$ such that
 %  every term $BT(M)\dar_k$ is either of the form $\W$ 
 %  or $a$ (when $a$ is nullary) or $M=_{\b\d}aM_1M_2$ (when $a$ is binary).
 %  In case it is of the form $\W$, then the
 %  conclusion follows immediately since the automaton is $\W$-blind. In
 %  case, it is of the form $a$ with $a$ nullary, the conclusion follows
 %  from the
 %  definition $\sem{a}$.  In
 %  case it is of the form $aM_1M_2$, since
 %  $\d(q,a)$ is a finite set of pairs, there is a pair
 %  $(q_1,q_2)\in\d(q,a)$ such that $q_1\in\sem{M_1}$ and
 %  $q_2\in\sem{M_2}$, therefore, $q_1$ is in $\sem{BT(M_1)\dar_k}$ for all $k$. We repeat the argument
 %  with the state $q_1$ from node $1$, and with the state $q_2$ from
 %  node $2$.  It is easy to see that this gives an accepting run of
 %  $\Aa$ on $BT(M)$.\end{proof}

%%% Local Variables: 
%%% mode: latex
%%% TeX-master: "m"
%%% End: 

As we are now going to see, the power of GFP models is characterized
by $\W$-blind TAC automata.  We will show that every language
recognized by a GFP model is a boolean combination of languages of
$\W$-blind TAC automata.  For the rest of the subsection we fix a tree
signature $\S$ and a GFP model
$\Ss=\struct{\set{\Ss_\a}_{\a\in\Tt},\r}$ over $\S$.

We construct a family of automata that reflect the model $\Ss$.  We
let $Q$ be equal to the base set $\Ss_0$ of the model. We define $\d_0:Q\times
(\S_0\cup\{\W\}) \to\set{\ffalse, \ttrue}$ and $\d_2: Q\times \S_2 \to
\Pp(Q^2)$ to be the functions such that:
\begin{align*}
\d_0(q,a) =& \ttrue \quad \text{iff\quad $q\leq \r(a)$ \qquad (in
the order of $\Ss_0$)}\\
\d_2(q,a) =& \set{(q_1,q_2)\mid q\leq
  \r(a)(q_1,q_2)}.  
\end{align*}
For $q$ in $Q$, we define $\Aa_q$ to be the automaton with the
starting state $q$ and the other components as above:
\begin{equation*} \Aa_q=\struct{Q,\S,q,\d_0,\d_1}\ .
\end{equation*}  
We have the following lemma:
\begin{lem}\label{lem:GFP_models_and_blind_TAC}
  Given a closed $\l$-term $M$ of type $0$: $BT(M)\in L(\Aa_q)$ iff
  $q\leq \sem{M}$.
\end{lem}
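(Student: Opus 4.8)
The plan is to prove both directions of the biconditional by relating runs of $\Aa_q$ on $BT(M)$ to the semantic value $\sem{M}$, closely mirroring the structure of the proof of Proposition~\ref{prop:from_W_blind_TAC_to_GFP_models}, but now tracking the order relation $q \leq \sem{M}$ rather than membership $q^0 \in \sem{M}$. The key tool throughout is Proposition~\ref{prop:semantics-of-BT}, which lets me replace $\sem{M}$ by $\sem{BT(M)}$, and the definition $\sem{BT(M)} = \Land\set{\sem{BT(M)\dar_k} \mid k \in \Nat}$, which converts questions about the infinite B\"ohm tree into questions about its finite truncations. Note that a run on $BT(M)$ in the sense of TAC automata is a run in the ordinary finite-tree sense on each prefix $t_k = BT(M)\dar_k$, except that the truncation introduces fresh $\w$-leaves where the tree was cut; I will need the transition functions $\d_0, \d_2$ as defined from $\r$ to interact correctly with these.

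\textbf{Left to right.} Suppose $BT(M) \in L(\Aa_q)$, witnessed by an accepting run $r$ with $r(\e) = q$. I want $q \leq \sem{M} = \sem{BT(M)}$, so it suffices to show $q \leq \sem{BT(M)\dar_k}$ for every $k$, since $\sem{BT(M)}$ is the meet of these. I would prove by induction on the height of $v$ in the domain of $t_k$ that $r(v) \leq \sem{M_v^k}$, and conclude by taking $v = \e$. At a ``cut leaf'' $M_v^k = \w^0$, whose interpretation is the top element of $\Ss_0$, so $r(v) \leq \sem{\w^0}$ holds trivially; the same works at a ``non-converging leaf'' since $\r(\W^0)$ is also top. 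At a ``normal leaf'' with label a constant $a$, the acceptance condition $\d_0(r(v), a) = \ttrue$ unfolds by definition to exactly $r(v) \leq \r(a) = \sem{a}$. At an internal node labelled $a$, the run condition $(r(v1), r(v2)) \in \d_2(r(v), a)$ unfolds to $r(v) \leq \r(a)(r(v1), r(v2))$; combining with the induction hypotheses $r(vi) \leq \sem{M_{vi}^k}$ and monotonicity of $\r(a)$ gives $r(v) \leq \r(a)(\sem{M_{v1}^k}, \sem{M_{v2}^k}) = \sem{M_v^k}$, as required.

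\textbf{Right to left.} Suppose $q \leq \sem{M}$. I construct an accepting run $r$ on $BT(M)$ with $r(\e) = q$, descending the tree and maintaining the invariant $r(v) \leq \sem{M_v}$ at each visited node $v$ of $BT(M)$, where $M_v$ is the subtree rooted at $v$. If a node carries label $\W$ (the term has no head normal form), it is a leaf and there is nothing to check beyond acceptance, which needs care: here the statement is stated for $\W$-blind $\Aa_q$ implicitly through the model, but I should note that $\d_0(r(v), \W)$ is defined by $r(v) \leq \r(\W) = \top$, which always holds, so such leaves are accepting. At a normal leaf with constant $a$, the invariant $r(v) \leq \sem{a} = \r(a)$ is precisely $\d_0(r(v), a) = \ttrue$. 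At an internal node labelled $a$, the invariant gives $r(v) \leq \sem{M_v} = \r(a)(\sem{M_{v1}}, \sem{M_{v2}})$; setting $r(vi) = \sem{M_{vi}}$ makes $(r(v1), r(v2)) \in \d_2(r(v), a)$ by definition and restores the invariant at the children. This defines a legal, accepting run, so $BT(M) \in L(\Aa_q)$.

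\textbf{The main obstacle} I anticipate is the interaction between the finite-truncation argument and the $\w$/$\W$ distinction. The forward direction is smooth because both $\w$ and $\W$ interpret to the top element of $\Ss_0$, so the inequalities at cut leaves and divergent leaves are free; the delicate point is only to be careful that $\sem{BT(M)\dar_k}$ really does meet-approximate $\sem{BT(M)}$ (which is Proposition~\ref{prop:semantics-of-BT} together with the definition of the B\"ohm tree semantics) and that the automaton's run on the infinite tree restricts correctly to each prefix. In the backward direction the subtle step is the divergent case: the construction must produce an \emph{accepting} run at $\W$-leaves, which relies essentially on $\r(\W)$ being the greatest element, so that $\d_0(q,\W) = \ttrue$ for all $q$ in this particular model built from $\Ss$ — exactly the fact that such reflecting automata behave in a $\W$-blind manner, which is the point of the characterization.
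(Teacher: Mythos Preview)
Your proposal is correct and follows essentially the same approach as the paper's proof: both directions use Proposition~\ref{prop:semantics-of-BT} and the truncation semantics, with the forward direction proving $r(v)\leq\sem{M^k_v}$ by induction on height, and the backward direction defining the run via $r(v)=\sem{M_v}$. The only cosmetic difference is that in the right-to-left direction the paper first builds a run of $\Aa_{\sem{M}}$ (with $r(v)=\sem{M_v}$ everywhere, including the root) and then observes it converts to a run of $\Aa_q$ since $q\leq\sem{M}$, whereas you put $q$ at the root directly and $\sem{M_{vi}}$ at the children; both are immediate from the downward-closedness of $\d_2$ in its state argument.
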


\begin{proof}
  We start by showing that if $\Aa_q$ accepts $BT(M)$ then $q\leq
  \sem{M}$.  Proposition~\ref{prop:semantics-of-BT} reduces this
  implication to proving that $q\leq \sem{BT(M)}$. Since $\sem{BT(M)}
  = \bigwedge\set{\sem{BT(M)\dar_k}\mid k \in \mathbb{N}}$, we need to
  show that for every $k>0$, $q\leq \sem{BT(M)\dar_k}$.  Fix an
  accepting run $r$ of $\Aa_q$ on $BT(M)$.  % For every node $v$ of
  % $BT(M)\dar_k$ let $M^k_v$ be the subterm of $BT(M)\dar_k$ rooted at
  % node $v$.
  We are going to show that for every $v$ in the domain of
  $BT(M)\dar_k$, $r(v)\leq \sem{M^k_v}$.  This
  will imply that $r(\e) = q\leq \sem{BT(M)}\dar_k$.

  We proceed by induction on the height of $v$.  In case $v$ is a
  ``cut leaf'' (or a ``non-converging'' leaf) then $M^k_v$ is $\w^0$
  (or $\W^0$) and $\sem{M^k_v}$ is the
  greatest element of $\Ss_0$ so that $r(v)$ is indeed smaller than
  $\sem{M^k_v}$.  In case $v$ is a ``normal leaf'' then $M^k_v$ is a
  constant $c$ of type $0$.  Since $r$ is an accepting run, we need to
  have, by definition, $r(v)\leq \r(c) = \sem{M^k_v}$.  In case $v$ is
  an internal node then $M^k_v = aM^k_{v1}M^k_{v2}$, and, by
  induction, we have that $r(vi)\leq \sem{M^k_{vi}}$.  Moreover,
  because $r$ is a run, we need to have $r(v)\leq
  \r(a)(r(v1))(r(v2))$, but since $\r(a)$ is monotone, and $r(vi)\leq
  \sem{M^k_{vi}}$, we have $\r(a)(r(v1))(r(v2)) \leq
  \r(a)(\sem{M^k_{v1}})(\sem{M^k_{v2}}) = \sem{M^k_v}$.  This proves,
  as expected, that $r(v)\leq \sem{M^k_v}$.

  Now given $q\leq \sem{M}$ we are going to construct a run of $\Aa_q$
  on $BT(M)$. Recall that for a node $v$ of $BT(M)$ we use $M_v$ to denote the subtree
  rooted in this node. Take $r$ defined by $r(v)=\sem{M_v}$ for every
  $v$.  We show that $r$ is a run of the automaton
  $\Aa_{\sem{M}}$. Since $q\leq \sem{M}$, by the definitions of $\d_0$
  and $\d_1$, this run can be easily turned into a run of
  $\Aa_q$.

  By definition $r(\e)=\sem{M}=\sem{BT(M)}$.  In
  case $v$ is a leaf $c$, then $r(v) = \r(c)$ and we have
  $\d_0(c,\r(c)) = \ttrue$.  In case $v$ is an internal node labeled
  by $a$, then, by definition
  $\sem{M_v}=\r(a)(\sem{M_{v1}},\sem{M_{v2}})$, so 
  $(\sem{M_{v1}},\sem{M_{v2}})$ is in $\d_1(a,\sem{M_v})$. 
  \end{proof}

  This lemma and Proposition~\ref{prop:from_W_blind_TAC_to_GFP_models}
  allow us to infer the announced correspondence.

\begin{thm}\label{thm:characterisation-for-GFP-models}
  A language $L$ of $\lambda$-terms is recognized by a GFP-model iff
  it is a boolean combination of languages of $\W$-blind TAC automata.
\end{thm}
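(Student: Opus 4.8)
The plan is to prove both directions by combining the two results already established for the individual automaton/model translations. The statement asserts that languages recognized by GFP-models coincide exactly with boolean combinations of languages of $\W$-blind TAC automata, so I would organize the argument around these two inclusions.

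For the direction from boolean combinations of $\W$-blind automata to GFP-models, I would first observe that Proposition~\ref{prop:from_W_blind_TAC_to_GFP_models} already handles a single automaton: for each $\W$-blind TAC automaton $\Aa$ there is a GFP-model $\Ss_\Aa$ and a subset $F_\Aa$ recognizing $L(\Aa)$. The task is then to show that the class of GFP-recognizable languages is closed under boolean operations (union, intersection, complement). The natural approach is a product construction: given GFP-models $\Ss$ and $\Ss'$ recognizing languages $L$ and $L'$ via accepting subsets $F$ and $F'$, I would build the product model whose base lattice is $\Ss_0 \times \Ss'_0$ (ordered coordinatewise, again a finite lattice), interpret each constant coordinatewise, and check that $Y$ is still interpreted as the greatest fixpoint, using that greatest fixpoints in a product lattice are computed coordinatewise. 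Since interpretation is compositional, $\sem{M}$ in the product is the pair $(\sem{M}_\Ss, \sem{M}_{\Ss'})$, so intersection is recognized by $F\times F'$ and union by $(F\times\Ss'_0)\cup(\Ss_0\times F')$. Complementation within a single model is immediate by replacing $F$ with $\Ss_0\setminus F$.

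For the converse direction, that every GFP-recognizable language is a boolean combination of $\W$-blind automata languages, I would use Lemma~\ref{lem:GFP_models_and_blind_TAC}. Fix a GFP-model $\Ss$ and an accepting subset $F\incl\Ss_0$. The lemma gives, for each $q\in Q=\Ss_0$, an automaton $\Aa_q$ with $BT(M)\in L(\Aa_q)$ iff $q\leq\sem{M}$. These automata are $\W$-blind by construction, since $\d_0(q,\W)$ is defined so that $q\leq\r(\W)$, and $\r(\W)$ is the greatest element, making the condition always true. The key observation is that $\sem{M}\in F$ can be expressed as a boolean combination of the conditions $q\leq\sem{M}$: specifically, because $\Ss_0$ is finite, the downward and upward structure of each element is determined by which $q$ lie below it, and one can write the characteristic condition of membership in $F$ using the predicates $\{q\leq\sem{M}\}_{q\in\Ss_0}$ together with their negations. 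Concretely, $\sem{M}\in F$ holds iff $\sem{M}$ equals some specific element of $F$, and each such equality $\sem{M}=s$ can be written as $(s\leq\sem{M})\wedge\bigwedge_{q\not\leq s}\neg(q\leq\sem{M})$; taking the disjunction over all $s\in F$ yields the desired boolean combination.

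The main obstacle I anticipate is the converse direction, and specifically the translation from the order-theoretic membership condition $\sem{M}\in F$ into a finite boolean combination of the threshold predicates $q\leq\sem{M}$. The subtlety is that $F$ is an arbitrary subset of $\Ss_0$, not necessarily upward or downward closed, so one cannot simply take a single $\Aa_q$; one must pin down the exact value $\sem{M}$ and this requires combining positive threshold tests $s\leq\sem{M}$ with negative ones $\neg(q\leq\sem{M})$ to isolate each element of the finite lattice. Verifying that this boolean expression is correct, and that finiteness of $\Ss_0$ guarantees the combination is finite, is the technically delicate point, though it is routine once the encoding of lattice elements by threshold predicates is made explicit.
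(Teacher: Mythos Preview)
Your proposal is correct and follows essentially the same approach as the paper: both directions use the product of GFP-models for closure under boolean operations, and Lemma~\ref{lem:GFP_models_and_blind_TAC} together with a boolean formula isolating each element of $\Ss_0$ for the converse. Your formula $(s\leq\sem{M})\wedge\bigwedge_{q\not\leq s}\neg(q\leq\sem{M})$ for pinning down $\sem{M}=s$ is exactly the set-theoretic expression $L_s = L(\Aa_s)\setminus\bigcup\{L(\Aa_q)\mid q\not\leq s\}$ the paper intends, just written logically rather than in terms of languages.
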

\begin{proof}
  For the left to right direction take a model $\Ss$ and $p\in
  \Ss_0$. By the above lemma we get that the language recognized by
  $\set{p}$ is
  \begin{equation*}
  L_p = L(\Aa_{p})-\bigcup\set{L(\Aa_q) \mid q\in \Ss_0\land q\neq
    p\land q \leq p}
  \end{equation*}
  So given $F$ included in $\Ss_0$, the language recognized by $F$ is
  $\bigcup_{p\in F} L_p$.

  For the other direction we take an automaton for every basic
  language in a boolean combination. We make a product of the
  corresponding GFP models given by
  Proposition~\ref{prop:from_W_blind_TAC_to_GFP_models}, and take the
  appropriate $F$ defined by the form of the boolean combination of
  the basic languages.
\end{proof}

Using the results in~\cite{salvati12:_loader_urzyc_logic_relat}, it
can be shown that typings in Kobayashi's type
systems~\cite{DBLP:conf/popl/Kobayashi09} give precisely values in GFP
models.%\todo{??}

%%% Local Variables: 
%%% mode: latex
%%% TeX-master: "m"
%%% End: 

\section{A model for insightful TAC automata}\label{sec:model-capt-conv}

The goal of this section is to present a model capable of recognizing
languages of insightful TAC automata. 
Theorem~\ref{thm:characterisation-for-GFP-models} implies that the
fixpoint operator in such a model can be neither the greatest nor the
least fixpoint. In the first subsection we will construct a model 
that is a kind of composition of a GFP model and a model for detecting
divergence. We cannot just take the product of the two models since we
want the fixpoint computation in the model detecting divergence to
influence the computation in the GFP model. In the second part of this section
we will show how to interpret insightful TAC automata in such a model.

\subsection{Model construction and basic
  properties}\label{sec:model-constr-basic}

We are going to build a model $\Kk$ intended to recognize the
language of a given insightful TAC automaton. This model is built on
top of the standard model $\Dd$ for detecting if a term has a
head-normal form.

The model $\Dd=\struct{\set{\Dd_\a}_{\a\in \Tt},\r}$ is  built from
the two elements lattice $\Dd_0=\set{\bot,\top}$. As
$\Dd_{\a\to\b}$ we take the set of monotone functions from $\Dd_\a$ to
$\Dd_\b$ ordered pointwise. So $\Dd_\a$ is a is finite lattice, for
every type $\a$. We write $\bot_\a$ and $\top_\a$, for the least,
respectively the greatest, element of the lattice $\Dd_\a$.
We interpret 
$\w^\a$ and $\W^\a$ as the least elements of $\Dd_\a$, and $Y^{(\a\to\a)\to\a}$ as
the least fixpoint operator.  So $\Dd$ is a dual of a GFP
model from Definition~\ref{df:GFP model}. The reason for
not taking a GFP model here is that we would prefer to use the greatest
fixpoint later in the construction. To all constants other than $Y$,
$\w$, and $\W$ the interpretation $\r$ assigns the greatest element of the
appropriate type. The following theorem is well-known
(cf~\cite{amadio98:_domain_lambd_calcul} page 130).

\begin{thm}\label{thm:D_and_convergence}
  For every closed term $M$ of type $0$ without $\w$ we have:
  \begin{equation*}
 \text{$BT(M) = \W$\quad iff\quad   $\sem{M}_\Dd = \bot$.    }
  \end{equation*}
\end{thm}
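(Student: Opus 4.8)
The plan is to reduce the whole statement to two facts about $\Dd$: that it is sound for $\beta\delta$-conversion, so that $\beta\delta$-convertible terms receive equal denotations, and the least-fixpoint dual of Proposition~\ref{prop:semantics-of-BT}, namely $\sem{M}_\Dd = \sem{BT(M)}_\Dd$, where for the model $\Dd$ the B\"ohm-tree semantics is the \emph{supremum} $\bigvee_n \sem{BT(M)\dar_n}_\Dd$ of its finite truncations. The supremum rather than the infimum is forced because $\w$ is interpreted as $\bot$ in $\Dd$, so deepening a truncation only replaces a $\bot$-leaf by more of the tree and hence can only increase its value; the footnote to Proposition~\ref{prop:semantics-of-BT} is exactly what guarantees that this dual form is available. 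I would also record the two computations that drive everything: every constant $c$ of type $0$ other than $\w,\W$ is interpreted as $\top$, and every constant $a$ of type $0\to 0\to 0$ is interpreted as the greatest monotone function, so that $\r(a)(x,y)=\top$ for all $x,y$, in particular for $x=y=\bot$.

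For the direction ``$\sem{M}_\Dd=\bot \Rightarrow BT(M)=\W$'' I would argue contrapositively. If $BT(M)\neq\W$, then by Definition~\ref{df:Bohm tree} together with Lemma~\ref{lemma:Bohm tree is a tree} the term $M$ head-reduces to a head normal form whose head is a genuine constant: the head cannot be a variable, since $M$ is closed and of type $0$; it is not $\W$, since $BT(M)\neq\W$; and it is not $\w$, since $M$ contains no $\w$ --- this is precisely where the hypothesis of the theorem is used. Thus $M\to^*_{\beta\delta} c$ or $M\to^*_{\beta\delta} a\,N_1N_2$ with $c,a$ genuine constants, and by soundness together with the two computations above, $\sem{M}_\Dd=\top\neq\bot$.

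For the converse, ``$BT(M)=\W \Rightarrow \sem{M}_\Dd=\bot$'', I would split into two cases. If $M$ is solvable with head $\W$, i.e.\ $M\to^*_{\beta\delta}\W^0$, then soundness gives $\sem{M}_\Dd=\sem{\W^0}_\Dd=\bot$ at once. If $M$ is unsolvable, then $BT(M)=\W^0$ has no proper subtrees, so every truncation $BT(M)\dar_n$ is again $\W^0$ with $\sem{BT(M)\dar_n}_\Dd=\bot$, and the dual B\"ohm-tree proposition yields $\sem{M}_\Dd=\bigvee_n \sem{BT(M)\dar_n}_\Dd=\bot$. I expect this unsolvable case to be the only real obstacle, as it is the adequacy half of the statement: a term with no head normal form must be sent to $\bot$. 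The B\"ohm-tree proposition conveniently packages the required content, but a self-contained proof would instead have to show directly that along the head reduction of an unsolvable term the least-fixpoint iterates $f^n(\bot)$ never escape $\bot$ at the base type, typically through a logical-relations argument; invoking the dual of Proposition~\ref{prop:semantics-of-BT} is exactly what lets me avoid redoing that work here.
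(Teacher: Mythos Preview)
The paper does not actually prove this theorem: immediately before its statement it says ``The following theorem is well-known (cf.~\cite{amadio98:_domain_lambd_calcul} page 130)'', and no proof is given. So there is no argument in the paper to compare yours against; you have supplied more than the authors do.

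Your reconstruction is sound and well-organised. The forward direction (head normal form with a genuine signature head implies value $\top$) is exactly the soundness calculation you give. For the hard direction you correctly identify that ``unsolvable $\Rightarrow$ $\bot$'' is the adequacy content, and you discharge it by invoking the least-fixpoint dual of Proposition~\ref{prop:semantics-of-BT}; the paper's footnote to that proposition explicitly licenses this dual form, so the reduction is legitimate and not circular (Proposition~\ref{prop:semantics-of-BT} is the stronger statement and its proof does not rely on Theorem~\ref{thm:D_and_convergence}).

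One small technical point: in your contrapositive argument you rule out the head being $\W$ ``since $BT(M)\neq\W$'', but that only excludes $\W^0$. Nothing in the hypothesis ``without $\w$'' prevents $M$ from containing $\W^{0\to 0\to 0}$, and $M=\W^{0\to 0\to 0}\,c_1\,c_2$ would have $\sem{M}_\Dd=\bot$ while $BT(M)\neq\W^0$. This is really an imprecision in the theorem's stated hypothesis rather than a flaw in your argument---the intended reading (and the paper's actual usage) is for input terms that carry no explicit $\W$'s at function types---but if you want the proof to match the letter of the statement you should note the extra assumption.
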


We fix a finite set $Q$ and its subset $Q_\W\incl Q$. Later these will be the set
of states of a TAC automaton, and the set of states from which this
automaton accepts $\W$, respectively.  To capture the power of such an
automaton, we are going to define a model $\Kk(Q,Q_\W)$  of the $\l
Y$-calculus based on an
applicative structure  $\Kk_{Q,Q_\W}= (\Kk_{\a})_{\a\in \Tt}$ and with
a non-standard interpretation of the
fixpoint. Roughly, this model will live inside the product of $\Dd$
and the GFP model $\Ss$ for an $\W$-blind automaton. The idea is that
$\Kk(Q,Q_\W)$ will have a projection on $\Dd$ but not necessarily
on $\Ss$. This allows the model to observe whether a term converges or not,
and at the same time to use this information in computing in the second
component.

\begin{defi}\label{def:K}
  For a given finite set $Q$ and a set $Q_\W\incl Q$, we define a family of
  sets $\Kk_{Q,Q_\W}= (\Kk_{\a})_{\a\in \Tt}$ by
  mutual recursion together with a logical relation $\Ll = (\Ll_{\a})_{\a\in
    \Tt}$ such that $\Ll_\a\subseteq \Kk_\a\times \Dd_\a$:
\begin{enumerate}
\item we  let $\Kk_0 = \{(\top,P) \mid P\subseteq Q\} \cup
  \{(\bot,Q_\W)\}$ with the order: $(d_1,P_1)\leq
  (d_2,P_2)$ iff $d_1\leq d_2$ in $\Dd_0$ and $P_1\subseteq
  P_2$. (cf. Figure~\ref{fig:order})
\item $\Ll_0 = \set{((d,P),d) \mid (d,P)\in \Kk_0}$,
\item $\Kk_{\a\to\b} = \set{f\in \mon{\Kk_\a\to\Kk_\b}\mid \exists_{d\in
  \Dd_{\a\to\b}}.\ \forall_{(g,e)\in \Ll_\a}.\ (f(g),d(e))\in\Ll_\b}$,
\item $\Ll_{\a\to\b} = \set{(f,d)\in\Kk_{\a\to\b}\times\Dd_{\a\to\b}\mid \forall_{(g,e)\in \Ll_\a}.\ (f(g),d(e))\in\Ll_\b}$.
\end{enumerate}

\end{defi}

\begin{figure}[tbhf]
  \centering
  \begin{tikzpicture}[node distance=1.5cm]
    \node (tt) {$(\top,\{1;2\})$};%
    \node[below left of= tt] (bt) {$(\top, \{1\})$};%
    \node[below right of =tt](tb) {$(\top,\{2\})$};%
    \node[below left of =tb] (bb) {$(\top,\es)$};%
    \node[below left of =bt] (w){$(\bot,\{1\})$};%
    \draw (bt)--(tt);%
    \draw (tb)--(tt);%
    \draw (bb)--(bt);%
    \draw (bb)--(tb);%
    \draw (w)--(bt);%
  \end{tikzpicture}
  \caption{The order $\Kk_0$ for $Q = \{1,2\}$ and $Q_\W = \{1\}$}
\label{fig:order}
\end{figure}

\begin{figure}[htbp]
  \centering
  \includegraphics{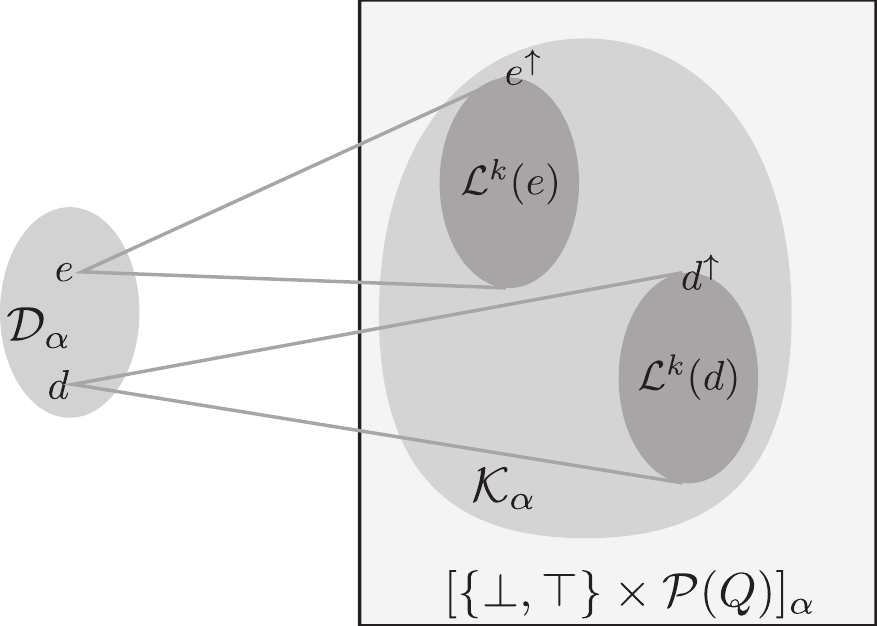}
  \caption{Model $\Dd$ is embeded into model $\Kk$ via logical relation $\Ll$.}
  \label{fig:model}
\end{figure}
Figure~\ref{fig:model} shows the intuition behind the construction.
Every $\Kk_\a$ is finite since it lives inside the standard
model constructed from $\Dd_0\times\Pp(Q)$ as the base set. Moreover,
as we shall see later, for every $\a$, $\Kk_\a$ is a join semilattice
and thus has a greatest element. The logical relation $\Ll$ will divide
$\Kk_\a$ into equivalence classes, one for every element of
$\Dd_\a$. Every equivalence class will also have semilattice structure.

Recall that a TAC automaton is supposed to accept unsolvable terms
from states $Q_\W$. So the unsolvable terms of type $0$ should have
$Q_\W$ as a part of their meaning. This is why $\bot$ of $\Dd_0$ is
associated to $(\bot,Q_\W)$ in $\Kk_0$ via the relation $\Ll_0$. This
also explains why we needed to take the least fixpoint in $\Dd$. If we
had taken the greatest fixpoint then the unsolvable terms would have
evaluated to $\top$ and the solvable ones to $\bot$. In consequence we
would have needed to relate $\top$ with $(\top,Q_\W)$, and we would
have been forced to relate $\bot$ with $(\bot,Q)$. But then
$(\top,Q_\W)$ and $(\bot,Q)$ are incomparable in $\Kk_0$, and this
makes it impossible to construct an order preserving injection from
$\Dd_0$ to $\Kk_0$.

\subsubsection{Structural properties of $\Kk(Q,Q_{\W})$} We are now
going to present some properties of the partial orders $\Kk_\a$.  The
following lemma shows that for every type $\a$, $\Kk_\a$ is a join
semilattice.
\begin{lem}\label{lem:Kk_is_a_join_semilattice}
  Given $(f_1,d_1)$ and $(f_2,d_2)$ in $\Ll_{\a}$, then $f_1\vee f_2$
  is in $\Kk_\a$ and $(f_1\vee f_2,d_1\vee d_2)$ is in $\Ll_{\a}$.
\end{lem}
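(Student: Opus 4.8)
The plan is to prove the statement by induction on the type $\a$, using throughout that the join $\vee$ is computed coordinate-wise in both hierarchies: at the base type it is the join of the ambient product $\Dd_0\times\Pp(Q)$, and at an arrow type $\gamma\to\delta$ it is the pointwise join $(f_1\vee f_2)(g)=f_1(g)\vee f_2(g)$. Before starting I would record one preliminary observation, the \emph{totality} of $\Ll$: by Definition~\ref{def:K} every $f\in\Kk_\a$ is $\Ll_\a$-related to some $d\in\Dd_\a$ (for $\a=0$ take the first coordinate of $f$; for an arrow type $d$ is exactly the existential witness that placed $f$ in $\Kk_{\gamma\to\delta}$). Once the lemma is known at $\a$, totality gives that $\Kk_\a$ is a join semilattice (any two elements are $\Ll_\a$-related to something, hence have a join lying in $\Kk_\a$), and this semilattice fact is what I feed back into the induction.

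\emph{Base case} $\a=0$. Here $(f_i,d_i)\in\Ll_0$ forces $f_i=(d_i,P_i)$, and I would simply check the ambient product join in the three possible cases. If $d_1=d_2=\top$ the join is $(\top,P_1\cup P_2)\in\Kk_0$; if one argument is $(\bot,Q_\W)$ and the other is $(\top,P)$ the join is $(\top,Q_\W\cup P)$, which lies in $\Kk_0$ because $\Kk_0$ contains all of $\{\top\}\times\Pp(Q)$; and $(\bot,Q_\W)\vee(\bot,Q_\W)=(\bot,Q_\W)$. In every case the first coordinate of the join equals $d_1\vee d_2$, so $(f_1\vee f_2,d_1\vee d_2)\in\Ll_0$ by definition of $\Ll_0$.

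\emph{Inductive case} $\a=\gamma\to\delta$. Given $(f_1,d_1),(f_2,d_2)\in\Ll_{\gamma\to\delta}$, I set $h(g)=f_1(g)\vee f_2(g)$ for $g\in\Kk_\gamma$; this is well defined since $\Kk_\delta$ is a join semilattice (induction hypothesis at $\delta$ plus totality at $\delta$), and monotone since $f_1,f_2$ and $\vee$ are. Put $d:=d_1\vee d_2$, which exists because $\Dd_{\gamma\to\delta}$ is a lattice. Fixing any $(g,e)\in\Ll_\gamma$, the hypotheses $(f_i,d_i)\in\Ll_{\gamma\to\delta}$ give $(f_i(g),d_i(e))\in\Ll_\delta$; applying the induction hypothesis to these two $\Ll_\delta$-related pairs yields $(f_1(g)\vee f_2(g),\,d_1(e)\vee d_2(e))\in\Ll_\delta$. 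Since joins in $\Dd_{\gamma\to\delta}$ are pointwise, $d_1(e)\vee d_2(e)=d(e)$, so this is exactly $(h(g),d(e))\in\Ll_\delta$, which is the defining clause of $(h,d)\in\Ll_{\gamma\to\delta}$. This witness $d$ also places $h$ in $\Kk_{\gamma\to\delta}$, and being the pointwise join, $h$ is the join of $f_1,f_2$ inside $\Kk_{\gamma\to\delta}$.

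The computation is essentially a routine logical-relations induction, so the difficulties I anticipate are organizational rather than technical. The first is to carry the totality observation along explicitly, so that $h$ is genuinely a function on all of $\Kk_\gamma$ and the pointwise join it returns always exists in $\Kk_\delta$. The one spot I would be most careful about is the base case: because $\Kk_0$ is not a full product lattice but a copy of $\Pp(Q)$ sitting at $\top$ with the single extra point $(\bot,Q_\W)$ hanging below $(\top,Q_\W)$, one must verify by hand both that the product join never escapes $\Kk_0$ and that its first coordinate is $d_1\vee d_2$ --- and it is precisely this that makes $\Ll_0$, and hence the whole coordinate-wise structure, compatible with joins.
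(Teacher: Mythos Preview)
Your proposal is correct and follows essentially the same route as the paper: induction on types, with the base case checked directly and the arrow case handled by applying the induction hypothesis pointwise to pairs $(f_i(g),d_i(e))\in\Ll_\delta$. Your write-up is in fact more careful than the paper's, which dismisses the base case as ``immediate from the definition'' where you spell out the three cases, and which leaves the totality observation implicit where you make it explicit.
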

\begin{proof}
  We proceed by induction on the structure of the type. For the base
  type the lemma is immediate from the definition. For the induction
  step consider a type of a form $\a\to\b$ and assume that $f_1$ and
  $f_2$ in $\mon{\Kk_{\a}\to\Kk_{\b}}$. Since, by induction, $\Kk_\b$
  is a join semilattice, we have that $f_1\vee f_2$ is also in
  $\mon{\Kk_{\a}\to\Kk_{\b}}$. By the assumptions of the lemma, for every
  $(p,e)$ in $\Ll_{\a}$ we have  $(f_1(p),d_1(e))$ and
  $(f_2(p),d_2(e))$ in $\Ll_{\b}$. The induction hypothesis
  implies that $(f_1(p)\vee f_2(p),d_1(e)\vee d_2(e))$ is in
  $\Ll_\b$. As by induction hypothesis $\Kk_{\b}$ is a join
  semilattice, we get $(f_1\vee f_2)(p) = f_1(p)\vee f_2(p)$ is in
  $\Kk_{\b}$.  Thus $((f_1\vee f_2)(p), (d_1\vee d_2)(e))$ is in
  $\Ll_\b$.  Since $(p,e)\in \Ll_\a$ was arbitrary this implies that
  $f_1\vee f_2$ is in $\Kk_{\a\to\b}$ and $(f_1\vee f_2,d_1\vee d_2)$
  is in $\Ll_{\a\to\beta}$.
\end{proof}

A consequence of this lemma and of the finiteness of $\Kk_\a$ is that
$\Kk_\a$ has a greatest element that we denote $\ttop_\a$.  The lemma
also implies the existence of certain meets.

\begin{cor}\label{coro:meets_in_Kk}
  For every type $\a$ and $f_1,f_2$ in $\Kk_\a$. If there is $g\in
  \Kk_\a$ such that $g\leq f_1$ and $g\leq f_2$
  then $f_1$ and $f_2$ have a greatest lower bound $f_1\land
  f_2$. Moreover, if $(f_1,d_1)$ and $(f_2,d_2)$ are in $\Ll_\a$ then
  $(f_1\land f_2,d_1\land d_2)$ is in $\Ll_\a$.
\end{cor}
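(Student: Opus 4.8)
The plan is to split the statement into two independent parts. The existence of the greatest lower bound $f_1\land f_2$ in $\Kk_\a$ I would obtain from Lemma~\ref{lem:Kk_is_a_join_semilattice} by a purely order-theoretic argument that is uniform over all types, while the preservation of the logical relation, $(f_1\land f_2,d_1\land d_2)\in\Ll_\a$, I would prove by induction on the type $\a$. Note first that $d_1\land d_2$ always makes sense, since each $\Dd_\a$ is a finite and hence complete lattice, and that in the function lattices meets are computed coordinatewise.

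For the existence, set $L=\set{h\in\Kk_\a\mid h\le f_1 \text{ and } h\le f_2}$, the set of common lower bounds of $f_1$ and $f_2$. By hypothesis $g\in L$, so $L$ is nonempty, and it is finite because $\Kk_\a$ is. By Lemma~\ref{lem:Kk_is_a_join_semilattice}, $\Kk_\a$ is a join semilattice, so the join $m=\bigvee L$ of this finite nonempty set exists in $\Kk_\a$. Since $f_1$ and $f_2$ are both upper bounds of $L$ and $m$ is its least upper bound, $m\le f_1$ and $m\le f_2$; hence $m\in L$, and being at once an upper bound of $L$ and a member of $L$ it is the greatest element of $L$. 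As every lower bound of $\set{f_1,f_2}$ lies in $L$, $m$ is their greatest lower bound, and I would set $f_1\land f_2:=m$.

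For the relation I would argue by induction on $\a$. The base case $\a=0$ is a short case analysis on the $\Dd$-components: if $d_1=d_2=\top$ the meet is $(\top,P_1\cap P_2)$ and $d_1\land d_2=\top$; if one component is $\bot$, the existence of a common lower bound in $\Kk_0$ forces that element to be $(\bot,Q_\W)$ and to lie below the other, so the meet is $(\bot,Q_\W)$ and $d_1\land d_2=\bot$; in each case the pair is in $\Ll_0$ by definition. For the step $\a\to\b$, the point is to show that the meet is taken pointwise. Given $(p,e)\in\Ll_\a$, the values $f_1(p),f_2(p)$ share the common lower bound $g(p)\in\Kk_\b$, so by the already-established existence their meet $f_1(p)\land f_2(p)$ exists in $\Kk_\b$; define $h$ by $h(p)=f_1(p)\land f_2(p)$. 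A direct check shows $h$ is monotone, and applying the induction hypothesis at type $\b$ to $(f_1(p),d_1(e))$ and $(f_2(p),d_2(e))$ gives $(h(p),(d_1\land d_2)(e))\in\Ll_\b$ for every $(p,e)\in\Ll_\a$. This simultaneously witnesses $h\in\Kk_{\a\to\b}$, via the single element $d_1\land d_2\in\Dd_{\a\to\b}$, and $(h,d_1\land d_2)\in\Ll_{\a\to\b}$. Finally I would identify $h$ with $f_1\land f_2$: as a common lower bound $h\le f_1\land f_2$, while $f_1\land f_2\le f_i$ yields $(f_1\land f_2)(p)\le f_i(p)$ and hence $(f_1\land f_2)(p)\le h(p)$ pointwise, so $h=f_1\land f_2$ and the step is complete.

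The main obstacle I anticipate is exactly this identification of the abstract meet $\bigvee L$ with the concrete pointwise function $h$, together with the verification that $h$ lands in $\Kk_{\a\to\b}$. Nothing guarantees a priori that greatest lower bounds in the order $\Kk_{\a\to\b}$ are computed coordinatewise, and membership in $\Kk_{\a\to\b}$ demands a single $\Dd$-witness uniform in all arguments. Both difficulties dissolve once $d_1\land d_2$ is passed through the induction hypothesis at type $\b$, and once $g(p)$ is used to supply the common lower bounds that make the meets at $\b$ exist; this is the one place where the hypothesis of a common lower bound $g$ is genuinely needed.
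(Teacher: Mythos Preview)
Your proposal is correct and follows essentially the same approach as the paper: the first part is identical (take the join of the nonempty finite set of common lower bounds), and for the second part the paper merely says ``a similar induction as in the proof of Lemma~\ref{lem:Kk_is_a_join_semilattice}'', which is exactly the type induction you spell out. Your explicit identification of the abstract meet $f_1\land f_2$ with the pointwise function $h(p)=f_1(p)\land f_2(p)$ is a detail the paper leaves implicit; it is needed because, unlike joins, meets in the sub-poset $\Kk_{\a\to\b}$ are not \emph{a priori} computed pointwise, and your two-line sandwich argument handles this cleanly.
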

\begin{proof}
  Let $F = \set{g\in \Kk_\a\mid g\leq f_1 \text{ and } g\leq f_2}$. As
  $\Kk_\a$ is finite, the set $F$ is finite. An iterative use of
  Lemma~\ref{lem:Kk_is_a_join_semilattice} shows that $\bigvee F$
  exists and is in $\Kk_\a$.
  It is then straightforward to see that $\bigvee F$ is
  indeed the greatest lower bound of $f_1$ and $f_2$.

  Now as $\Dd_\a$ is a complete lattice, we also have that $d_1\land
  d_2$ exits.  Then a similar induction as in the proof of
  Lemma~\ref{lem:Kk_is_a_join_semilattice} shows that when $(f_1,d_1)$
  and $(f_2,d_2)$ are in $\Ll_\a$, then $(f_1\land f_2,d_1\land d_2)$
  is in $\Ll_\a$.
\end{proof}

We are now going to show that every constant function of
$\mon{\Kk_\a\to\Kk_\b}$ is actually in $\Kk_{\a\to\b}$.

\begin{lem}\label{lem:constant_functions_are_in_Kk}
  For every  $q$ in $\Kk_\b$, the constant function
  $c_q\in\mon{\Kk_\a\to\Kk_\b}$ assigning $q$ to every element of $\Kk_\a$ is in
  $\Kk_{\a\to\b}$.
\end{lem}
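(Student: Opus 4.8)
The plan is to produce, for the constant function $c_q$, an explicit witness $d\in\Dd_{\a\to\b}$ for membership in $\Kk_{\a\to\b}$ according to Definition~\ref{def:K}(3). First I would note that $c_q$ is trivially monotone, so $c_q\in\mon{\Kk_\a\to\Kk_\b}$; only the existence of a suitable witness $d$ remains to be shown.

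The key observation I would isolate first is that every element $q\in\Kk_\b$ is $\Ll_\b$-related to some element $d_q\in\Dd_\b$. This is immediate by inspection of the mutually recursive definition. For $\b=0$, writing $q=(d',P)$, the pair $(q,d')$ lies in $\Ll_0$ by Definition~\ref{def:K}(2). For $\b=\gamma\to\delta$, membership $q\in\Kk_{\gamma\to\delta}$ by Definition~\ref{def:K}(3) is precisely the assertion that there exists some $d_q\in\Dd_{\gamma\to\delta}$ with $(q(g),d_q(e))\in\Ll_\delta$ for all $(g,e)\in\Ll_\gamma$, which by Definition~\ref{def:K}(4) says exactly that $(q,d_q)\in\Ll_{\gamma\to\delta}$.

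Given such a $d_q$, I would take as the witness $d$ the constant function in $\Dd_{\a\to\b}$ with value $d_q$, that is $d(e)=d_q$ for every $e\in\Dd_\a$. Since a constant function is monotone and $\Dd_{\a\to\b}$ consists of all monotone functions from $\Dd_\a$ to $\Dd_\b$, we have $d\in\Dd_{\a\to\b}$. Then for every $(g,e)\in\Ll_\a$ one computes $(c_q(g),d(e))=(q,d_q)$, which lies in $\Ll_\b$ by the choice of $d_q$. This verifies the condition in Definition~\ref{def:K}(3), so $c_q\in\Kk_{\a\to\b}$.

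There is no real obstacle here; the argument is a direct unfolding of the definitions with a constant function on the $\Dd$-side mirroring the constant function on the $\Kk$-side. The only point that deserves to be stated carefully is the preliminary observation that every element of $\Kk_\b$ has an $\Ll_\b$-partner in $\Dd_\b$, since it is exactly this fact that makes the constant witness $d_q$ available, and it is worth recording it explicitly as it will presumably be reused elsewhere in the development.
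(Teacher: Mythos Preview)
Your proof is correct and follows essentially the same approach as the paper: pick an $\Ll_\b$-partner $d_q$ of $q$ and take the constant function with value $d_q$ on the $\Dd$-side as the witness. You are slightly more explicit than the paper in justifying that such a $d_q$ exists (the paper simply asserts ``since $q$ is in $\Kk_\b$, there is $d$ such that $(q,d)\in\Ll_\b$''), but the argument is the same.
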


\begin{proof}
  To show that $c_q$ is in $\Kk_{\a\to\b}$, we need to find $h_q$ in
  $\Dd_{\a\to\b}$ such that for every $(p,e)$, $(c_q(p),h_q(e))$ is in
  $\Ll_\b$.  Since $q$ is in $\Kk_\b$, there is $d$ such that $(q,d)$
  is in $\Ll_\b$.  It suffices to take $h_q$ to be the function of
  $\Dd_{\a\to\b}$ such that for every $e$ in $\Dd_{\a}$, $h_q(e) =
  d$.
\end{proof}
As one easily observes that for every $p\in \Kk_\a$,
$\ttop_{\a\to\b}(p) = \ttop_\b$, a consequence of this lemma is that
$(\ttop_\a,\top_\a)$ is in $\Ll_\a$ for every $\a$.

This lemma allows us to define inductively on types the family of
constant functions $(\bbot_\a)_{\a\in \Tt}$ as follows:
\begin{enumerate}
\item $\bbot_0 = (\bot,Q_\W)$,
\item $\bbot_{\a\to\b}(h) = \bbot_{\b}$ for every $h$ in $\Kk_\a$.
\end{enumerate}
Notice that $\bbot_\a$ is a minimal element of $\Kk_\a$, but
$\Kk_\a$ does not have a least element in general.

\subsubsection{Galois connections between $\Kk_\a$ and $\Dd_\a$} In
this part, we wish to show that the relation $\Ll_\a$ is indeed
defining an injection from
$\Kk_\a$ to $\Dd_\a$ that we shall denote with
$(\bar{\cdot})$. Moreover, we are going to define a mapping
$(\cdot)^{\uparrow}$ from $\Dd_\a$ to $\Kk_\a$ so that $(\bar{\cdot})$
and $(\cdot)^{\uparrow}$ define a Galois connection between $\Kk_\a$
and $\Dd_\a$. This Galois connection plays a key role in allowing the
model to track convergence and, thus, in the definition of the
interpretation of fixpoints in the model. We shall also see that both
$(\bar{\cdot})$ and $(\cdot)^\uparrow$ commute with application.

So as to define this Galois connection, we need to introduce the
notion of $\Dd$-completeness of types.  This notion imposes some
basic properties that allow us to construct both $(\bar{\cdot})$ and
$(\cdot)^\uparrow$.  Our goal is to establish that every type is
$\Dd$-complete. 

For every $d$ in $\Dd_\a$, we denote by $L_d$ the set of elements of
$\Kk_\a$ that are related to it:
\begin{equation*}
  L_d=\set{p\in\Kk_\a\mid
    (p,d)\in\Ll_\a}.  
\end{equation*}

\begin{defi}
  A type $\a$  is \emph{$\Dd$-complete} if, for every $d$ in $\Dd_\a$:
    \begin{enumerate}
    \item $L_d$ is not empty,
    \item $\bbot_\a\leq \bigvee L_d$,
    \item for every $(f,e)$ in $\Ll_\a$: $f\leq \bigvee L_d$ iff $e
      \leq d$.
    \end{enumerate}
\end{defi}

Later we will show that every type is $\Dd$-complete, but for this we
will need some preparatory lemmas.

\begin{lem}\label{lem:join_L_d_d_in_Ll}
  If $\a$ is a $\Dd$-complete type and $d$ is in $\Dd_\a$ then $(\bigvee
  L_d,d)$ is in~$\Ll_\a$.
\end{lem}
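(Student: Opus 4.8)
The plan is to obtain this as an essentially immediate, finite-iteration consequence of Lemma~\ref{lem:Kk_is_a_join_semilattice}. First I would invoke $\Dd$-completeness of $\a$, and specifically condition~(1) of that definition, to guarantee that $L_d$ is nonempty. Since every $\Kk_\a$ is finite and, by Lemma~\ref{lem:Kk_is_a_join_semilattice}, a join semilattice, the join $\bigvee L_d$ then exists and lies in $\Kk_\a$ (an iterative use of that lemma, exactly as in the proof of Corollary~\ref{coro:meets_in_Kk}). The nonemptiness of $L_d$ is essential here: $\Kk_\a$ need not possess a least element, so the empty join would be undefined.

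The key step is to carry the second component through the join and see that it stays equal to $d$. By the definition of $L_d$, every $p\in L_d$ satisfies $(p,d)\in\Ll_\a$, and crucially they are all paired with the \emph{same} $d$. Writing $L_d=\set{p_1,\dots,p_n}$ with $n\geq 1$, I would apply Lemma~\ref{lem:Kk_is_a_join_semilattice} repeatedly: from $(p_1,d)$ and $(p_2,d)$ in $\Ll_\a$ it yields $(p_1\vee p_2,\ d\vee d)=(p_1\vee p_2,d)\in\Ll_\a$, and continuing in this fashion over the finite list gives $(p_1\vee\cdots\vee p_n,\ d)=(\bigvee L_d,\ d)\in\Ll_\a$. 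The second coordinate is unchanged at each step because $d\vee d=d$; this collapse is precisely what makes the argument work, and it again relies on $L_d$ being nonempty, since a vacuous join of copies of $d$ would not be controlled.

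I do not expect a genuine obstacle in this lemma: it is little more than a finite-induction wrapper around Lemma~\ref{lem:Kk_is_a_join_semilattice}, and of the three clauses of $\Dd$-completeness only clause~(1) is actually used. The single point deserving care is the well-definedness of $\bigvee L_d$ in the absence of a least element of $\Kk_\a$, which is exactly the reason the hypothesis of $\Dd$-completeness (hence nonemptiness of each $L_d$) is assumed here rather than reasoning for an arbitrary unconstrained $d$.
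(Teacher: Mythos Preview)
Your proposal is correct and follows essentially the same approach as the paper: invoke $\Dd$-completeness to get $L_d\neq\emptyset$, then iterate Lemma~\ref{lem:Kk_is_a_join_semilattice} over the finite set $L_d$ (with the second coordinate collapsing via $d\vee d=d$). The paper's proof is just a one-line compression of exactly this argument.
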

\begin{proof}
  Since $\a$ is $\Dd$-complete, $L_d$ is not empty, and the conclusion
  follows directly from Lemma~\ref{lem:Kk_is_a_join_semilattice}.
\end{proof}

\begin{lem}\label{lem:embedding_of_Dd_into_Kk_at_Dd_complete_types}
  If $\a$ is a $\Dd$-complete type, and $d,e\in\Dd_\a$ then:
  $e\leq d$ iff $\bigvee L_e\leq \bigvee L_d$.
\end{lem}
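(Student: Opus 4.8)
The plan is to derive the biconditional directly from the third clause of $\Dd$-completeness, applied to a single, canonically chosen pair in $\Ll_\a$. The key observation is that clause (3) of the definition of $\Dd$-completeness already states, for the fixed element $d$ and \emph{any} pair $(f,e')\in\Ll_\a$, the equivalence $f\leq\bigvee L_d$ iff $e'\leq d$. Hence it suffices to exhibit a pair in $\Ll_\a$ whose first component is $\bigvee L_e$ and whose second component is $e$; feeding such a pair into clause (3) yields exactly $\bigvee L_e\leq\bigvee L_d$ iff $e\leq d$, which is the assertion of the lemma (the two sides of the ``iff'' being interchanged, which is immaterial).

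To produce that pair, I would first invoke the $\Dd$-completeness of $\a$ at the element $e$: clause (1) guarantees that $L_e$ is nonempty, so Lemma~\ref{lem:join_L_d_d_in_Ll} applies and gives $(\bigvee L_e, e)\in\Ll_\a$. Instantiating clause (3) of the $\Dd$-completeness of $\a$---this time at the element $d$---with the pair $(f,e')=(\bigvee L_e, e)$ then delivers the required equivalence in one step. No case analysis is needed, and both implications of the stated ``iff'' are captured simultaneously.

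There is essentially no obstacle here, since all the genuine content has already been packaged into the preceding results: the fact that $\bigvee L_e$ is a legitimate element of $\Kk_\a$ related to $e$ via $\Ll_\a$ rests on the join-semilattice structure of $\Kk_\a$ (Lemma~\ref{lem:Kk_is_a_join_semilattice}), routed through Lemma~\ref{lem:join_L_d_d_in_Ll}, and the equivalence itself is exactly clause (3) of $\Dd$-completeness. The only point that requires care is bookkeeping: one must apply clause (3) at the correct element, namely $d$, while the element $e$ enters only through its own instance of clause (1), used solely to manufacture the pair $(\bigvee L_e, e)$.
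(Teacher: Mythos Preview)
Your proposal is correct and follows exactly the same approach as the paper's own proof: establish that $(\bigvee L_e,e)\in\Ll_\a$ via Lemma~\ref{lem:join_L_d_d_in_Ll} (using clause~(1) at $e$), then instantiate clause~(3) of $\Dd$-completeness at $d$ with this pair. Your write-up is slightly more explicit about the bookkeeping, but the argument is identical.
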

\begin{proof}
  As $\a$ is $\Dd$-complete both $L_e$ and $L_d$ are not empty and
  therefore, $\bigvee L_e$ and $\bigvee L_d$ are well-defined.
  Lemma~\ref{lem:join_L_d_d_in_Ll} also gives that $(\bigvee L_e,e)$
  is in $\Ll_{\a}$. Now from $\Dd$-completeness of $\a$, we have that
  $\bigvee L_e\leq \bigvee L_d$ iff $e\leq d$.
\end{proof}

The next step is to define the operation $(\cdot)^\uparrow$ that, as we will show
later, is an embedding of $\Dd$ into $\Kk$.  For this we need the
notion of \emph{co-step functions} that are particular functions from
a partial order $L_1$ to a partial order $L_2$, the latter having the
greatest element $\top_2$. Given two elements $p$ in $L_1$ and $q$ in
$L_2$, the co-step function $p\costep q$ is a function from $\mon{L_1\to L_2}$
such that for $r$ in $L_1$, 
$$(p\costep q)(r) = 
  \begin{cases}
  q & \text{ when }r\leq p\\
  \top_2 & \text{otherwise}\ .
  \end{cases}
 $$
\begin{defi}\label{df:uparrow} Let
  $\a,\b$ be $\Dd$-complete types. For every $h\in \Dd_{\a\to\b}$ and
  every $d\in\Dd_\a$ we define two monotone functions and the element
  $h^\uparrow$:
  \begin{gather*}
    f_{h,d}=\Lor L_d\costep \Lor L_{h(d)},\qquad \bar
    f_{h,d}=d\costep h(d),\\
    h^\uparrow=\Land_{d\in \Dd} f_{h,d}\ .
  \end{gather*}
  For $h$ in $\Dd_0$, we define $h^\uparrow$ to be  $(\bot,Q_\W)$ when $h
  =\bot$, and to be $(\top,Q)$ when $h = \top$.
\end{defi}

The next lemma summarizes all the essential properties of the model
$\Kk$. 
\newcommand{\fhd}{f_{h,d}}
\newcommand{\bfhd}{\bar f_{h,d}}
\begin{lem}\label{lem:uparrow_existence_and_definition}
  For all $\Dd$-complete types $\a$, $\b$, for every $h\in
  \Dd_{\a\to\b}$ and every $d\in\Dd_\a$:
  \begin{enumerate}
  \item $(\fhd,\bfhd)$ is in $\Ll_{\a\to\b}$;
  \item $\bbot_{\a\to\b}\leq \fhd$;
  \item $h^\uparrow$ is an element of $\Kk_{\a\to\b}$ and
    $(h^\uparrow,h)\in \Ll_{\a\to\b}$;
  \item if $(p,e)\in \Ll_\a$ then $h^\uparrow(p)=\Lor L_{h(e)}$;
  \item $h^\uparrow=\Lor L_h$.
  \end{enumerate}
\end{lem}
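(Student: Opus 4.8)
The plan is to prove the five items essentially in the order stated, treating items~1 and~2 as the combinatorial core and then bootstrapping items~3--5 from them. No induction on types is needed here, since the hypotheses hand us $\Dd$-completeness of both $\a$ and $\b$ outright; throughout I would lean on Lemma~\ref{lem:join_L_d_d_in_Ll}, Lemma~\ref{lem:embedding_of_Dd_into_Kk_at_Dd_complete_types}, Corollary~\ref{coro:meets_in_Kk}, and the already-noted fact that $(\ttop_\a,\top_\a)\in\Ll_\a$.

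For item~1 I would fix $(g,e)\in\Ll_\a$ and unfold the two co-step functions. The crucial link is property~(3) of $\Dd$-completeness of $\a$, which says $g\leq\bigvee L_d$ iff $e\leq d$; this makes the two co-steps fire simultaneously. When $e\leq d$ we get $(f_{h,d}(g),\bar f_{h,d}(e))=(\bigvee L_{h(d)},h(d))$, which is in $\Ll_\b$ by Lemma~\ref{lem:join_L_d_d_in_Ll} applied at the $\Dd$-complete type $\b$; otherwise both sides are the respective tops and $(\ttop_\b,\top_\b)\in\Ll_\b$. Item~2 is then immediate: $f_{h,d}$ takes only the values $\bigvee L_{h(d)}$ and $\ttop_\b$, and $\bbot_\b$ lies below both, the first by property~(2) of $\Dd$-completeness of $\b$.

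With items~1 and~2 in hand the finite family $\set{f_{h,d}}_{d\in\Dd_\a}$ has common lower bound $\bbot_{\a\to\b}$, so iterating Corollary~\ref{coro:meets_in_Kk} produces the meet $h^\uparrow$ inside $\Kk_{\a\to\b}$ together with $(h^\uparrow,\bigwedge_d\bar f_{h,d})\in\Ll_{\a\to\b}$. Computing $\bigwedge_d\bar f_{h,d}$ pointwise in $\Dd$ gives at each $e$ the value $\bigwedge_{d\geq e}h(d)$, which collapses to $h(e)$ by monotonicity of $h$; hence $\bigwedge_d\bar f_{h,d}=h$ and item~3 follows. For items~4 and~5 I would record the pointwise-meet function $m(p)=\bigwedge_d f_{h,d}(p)$, whose value exists in $\Kk_\b$ at each point since every $f_{h,d}(p)\geq\bbot_\b$. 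Using property~(3) of $\a$ to see that $f_{h,d}(p)=\bigvee L_{h(d)}$ exactly when $d\geq e$ (and $\ttop_\b$ otherwise), and then Lemma~\ref{lem:embedding_of_Dd_into_Kk_at_Dd_complete_types} with monotonicity of $h$ to see $\bigvee L_{h(d)}\geq\bigvee L_{h(e)}$ for $d\geq e$, the meet collapses to $m(p)=\bigvee L_{h(e)}$ for any $(p,e)\in\Ll_\a$.

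The main obstacle is that meets in $\Kk$ are not, a priori, computed pointwise, so I cannot simply declare $h^\uparrow(p)=m(p)$. I would close this gap by showing $m$ is itself in $\Kk_{\a\to\b}$: it is monotone, and the witness $h$ works since $(m(p),h(e))=(\bigvee L_{h(e)},h(e))\in\Ll_\b$ by Lemma~\ref{lem:join_L_d_d_in_Ll}. As $m$ is then a lower bound of the family lying in $\Kk_{\a\to\b}$, and any $\Kk$-lower bound is pointwise below $m$, it is exactly the greatest lower bound, i.e.\ $m=h^\uparrow$; this yields item~4. Finally, for item~5, the pairing $(h^\uparrow,h)\in\Ll_{\a\to\b}$ gives $h^\uparrow\in L_h$, hence $h^\uparrow\leq\bigvee L_h$; for the reverse I would show every $f\in L_h$ satisfies $f\leq f_{h,d}$ for all $d$, since off $\set{p\mid p\leq\bigvee L_d}$ the bound is $\ttop_\b$, and on it monotonicity together with $f(\bigvee L_d)\in L_{h(d)}$ (from $(f,h)\in\Ll_{\a\to\b}$ and $(\bigvee L_d,d)\in\Ll_\a$) gives $f(p)\leq\bigvee L_{h(d)}=f_{h,d}(p)$; thus $f\leq h^\uparrow$ and $\bigvee L_h\leq h^\uparrow$.
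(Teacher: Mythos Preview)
Your proof is correct and follows essentially the same route as the paper: items~1--3 match almost verbatim, and for item~4 you are in fact more careful than the paper, which simply asserts $h^\uparrow(p)=\bigwedge_{d} f_{h,d}(p)$ ``by definition'' without explicitly checking that the pointwise meet $m$ lies in $\Kk_{\a\to\b}$ and hence coincides with the abstract meet from Corollary~\ref{coro:meets_in_Kk}. For item~5 the paper argues directly from item~4 (any $g\in L_h$ and $(p,e)\in\Ll_\a$ give $g(p)\in L_{h(e)}$, hence $g(p)\leq\bigvee L_{h(e)}=h^\uparrow(p)$), whereas you bound $f\leq f_{h,d}$ for each $d$ and invoke the glb property; both arguments are equivalent.
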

\begin{proof}
  For the first item we take $(p,e)\in \Ll_\a$, and show that
  $(\fhd(p),\bfhd(e))\in \Ll_\b$. This will be sufficient by the
  definition of $\Ll_{\a\to\b}$.  Lemma~\ref{lem:join_L_d_d_in_Ll}
  gives $(\Lor L_d,d)\in \Ll_\a$ and $(\Lor L_{h(d)},h(d))\in
  \Ll_\b$. By $\Dd$-completeness of $\a$: $p\leq \Lor L_d$ iff $e\leq
  d$. We have two cases. If $p\leq \Lor L_d$ then $\fhd(p)=\Lor
  L_{h(d)}$ and $\bfhd(e)=h(d)$. Otherwise, $p\nleq\Lor L_d$ gives
  $\fhd(p)=\ttop_\b$ and $\bfhd(e)=\top_\b$. With the help of
  Lemma~\ref{lem:join_L_d_d_in_Ll} in both cases we have that the
  result is in $\Ll_\b$, and we are done.

  For the second item, by $\Dd$-completeness of $\b$ we have $\Lor
  L_{h(d)}\geq \bbot_\b$. In the proof of the first item we have seen
  that $\fhd(p)\geq \Lor L_{h(d)}$ for every $p\in \Kk_\a$. Since
  $\bbot_{\a\to\b}(p)=\bbot_\b$ we get $\bbot_{\a\to\b}\leq \fhd$.

  In order to show the third item we use the first item telling us
  that $(f_{h,e},\bar f_{h,e})$ is in $\Ll_{\a\to\b}$ for every $e\in
  \Dd_\a$. Since by the second item $\bbot_\a\leq f_{h,e}$,
  Corollary~\ref{coro:meets_in_Kk} shows that $(\Land_{e\in \Dd_\a}
  f_{h,e},\Land_{e\in \Dd_\a} \bar f_{h,e})$ is in
  $\Ll_{\a\to\b}$. Directly from the definition of co-step functions
  we have $\Land_{e\in \Dd_\a} e\costep h(e)=h$. This gives, as
  desired, $(\Land_{e\in \Dd_\a} f_{h,e},h)$ in $\Ll_{\a\to\b}$.

  For the fourth item, take an arbitrary $(p,e)\in \Ll_\a$. We show
  that $h^\uparrow(p)=\Lor L_{d(e)}$. By definition
  $h^\uparrow(p)=\Land_{e'\in \Dd_\a} f_{h,e'}(p)$. Moreover
  $f_{h,e'}(p)=\Lor L_{h(e')}$ if $p\leq \Lor L_{e'}$, and
  $f_{h,e'}(p)=\ttop_\b$ otherwise. By $\Dd$-completeness of $\a$:
  $p\leq \Lor L_{e'}$ iff $e\leq e'$. So $h^\uparrow(p)=\Land_{e'\in
    \Dd_\a} f_{h,e'}(p)=\Land\set{\Lor L_{h(e')} : e\leq e'}$. By
  Lemma~\ref{lem:embedding_of_Dd_into_Kk_at_Dd_complete_types}, if
  $e\leq e'$ then $\Lor L_{h(e)}\leq \Lor L_{h(e')}$. Hence
  $h^\uparrow(p)=\Lor L_{h(e)}$.

  For the last item we want to show that $h^\uparrow=\Lor L_h$. We
  know that $h^\uparrow\in L_h=\set{g\in \Kk_{\a\to\b} : (g,h)\in
    \Ll_\a}$ since $(h^\uparrow,h)\in \Ll_{\a\to\b}$ by the third
  item. We show that for every $g\in L_h$, $g\leq h^\uparrow$. Take
  some $(p,e)\in \Ll_\a$. We have $(g(p),h(e))\in \Ll_\b$, hence
  $g(p)\leq \Lor L_{h(e)}$ by definition of $L_{h(e)}$.  Since
  $h^\uparrow(p)=\Lor L_{h(e)}$ by the fourth item, we get $g\leq
  h^\uparrow$.
\end{proof}

\begin{lem}\label{lem:every_type_is_Dd_complete}
  Every type $\a$ is $\Dd$-complete.
\end{lem}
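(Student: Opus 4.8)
The plan is to prove the statement by structural induction on the type $\a$: the base type $0$ is handled by a direct finite computation, and every function type $\a\to\b$ is handled by invoking Lemma~\ref{lem:uparrow_existence_and_definition}, whose hypothesis that $\a$ and $\b$ be $\Dd$-complete is exactly the induction hypothesis.

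For the base case $\a=0$ I would just unfold the definitions of $\Kk_0$ and $\Ll_0$. For $d=\bot$ one has $L_\bot=\set{(\bot,Q_\W)}$, so $\bigvee L_\bot=\bbot_0$, while for $d=\top$ one has $\bigvee L_\top=(\top,Q)=\ttop_0$; conditions (1) and (2) are then immediate, and condition (3) holds because $((e,P),e)\in\Ll_0$ and $(e,P)\leq\bigvee L_d$ in $\Kk_0$ precisely when $e\leq d$.

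For the inductive step, fix a function type $\a\to\b$ with $\a$, $\b$ being $\Dd$-complete, so that all of Lemma~\ref{lem:uparrow_existence_and_definition} is available. Condition (1) comes from item (3), which gives $(d^\uparrow,d)\in\Ll_{\a\to\b}$ and hence $d^\uparrow\in L_d$. For condition (2), item (5) identifies $\bigvee L_d$ with $d^\uparrow=\Land_{e\in\Dd_\a}f_{d,e}$, and item (2) gives $\bbot_{\a\to\b}\leq f_{d,e}$ for every $e$, so $\bbot_{\a\to\b}\leq\bigvee L_d$. The real content is condition (3): for $(f,e')\in\Ll_{\a\to\b}$, I must show $f\leq\bigvee L_d=d^\uparrow$ iff $e'\leq d$, and for both directions I would lean on item (4), namely $d^\uparrow(p)=\bigvee L_{d(e)}$ whenever $(p,e)\in\Ll_\a$. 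For the forward direction, assuming $f\leq d^\uparrow$ and fixing $e\in\Dd_\a$, I would test at the related pair $(\bigvee L_e,e)\in\Ll_\a$ supplied by Lemma~\ref{lem:join_L_d_d_in_Ll}, obtaining $f(\bigvee L_e)\leq\bigvee L_{d(e)}$; since $(f(\bigvee L_e),e'(e))\in\Ll_\b$, the induction hypothesis (condition (3) at $\b$) turns this into $e'(e)\leq d(e)$, and ranging over $e$ gives $e'\leq d$. For the backward direction, assuming $e'\leq d$ and taking an arbitrary $p\in\Kk_\a$, I would use that $p$ is $\Ll_\a$-related to some $e_p$ to get $d^\uparrow(p)=\bigvee L_{d(e_p)}$ and $(f(p),e'(e_p))\in\Ll_\b$; as $e'(e_p)\leq d(e_p)$, condition (3) at $\b$ yields $f(p)\leq d^\uparrow(p)$, and since $p$ is arbitrary, $f\leq d^\uparrow$.

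I expect the only genuine subtlety to be bookkeeping about which completeness is used where, in order to avoid circularity: Lemma~\ref{lem:embedding_of_Dd_into_Kk_at_Dd_complete_types} must not be applied at the very type $\a\to\b$ whose completeness is still under proof, which is exactly why condition (3) is re-derived from item (4) together with the induction hypothesis at $\b$, rather than from that lemma. The second point needing care is that the backward direction of condition (3) must be verified at arbitrary arguments $p$, not only $\Ll_\a$-related ones; this is handled by the observation that, by the definition of $\Kk$, every element of $\Kk_\a$ is $\Ll_\a$-related to some element of $\Dd_\a$, which reduces the estimate at $p$ to the logical relation at the smaller type $\b$.
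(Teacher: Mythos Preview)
Your proposal is correct and follows essentially the same approach as the paper: structural induction on types, with the base case by direct inspection and the inductive step drawing conditions (1), (2), and the key identity $d^\uparrow(p)=\bigvee L_{d(e)}$ from Lemma~\ref{lem:uparrow_existence_and_definition}, then verifying condition (3) pointwise using $\Dd$-completeness of $\b$. The only cosmetic differences are that the paper invokes Lemma~\ref{lem:embedding_of_Dd_into_Kk_at_Dd_complete_types} (at $\b$) where you appeal directly to condition (3) at $\b$, and that it phrases the two directions of (3) in the opposite order; your remarks about avoiding circularity and about every $p\in\Kk_\a$ being $\Ll_\a$-related are exactly the points the paper relies on implicitly.
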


\begin{proof}
  This is proved by induction on the structure of the type.  The case
  of the base type follows by direct examination. For the induction
  step consider a type $\a\to\b$ and suppose that $\a$ and $\b$ are
  $\Dd$-complete.  Given $d$ in $\Dd_{\a\to\b}$,
  Lemma~\ref{lem:uparrow_existence_and_definition} gives that
  $(d^\uparrow,d)$ is in $\Ll_{\a\to\b}$ proving that $L_d\neq \es$, it
  also gives that $\bbot_{\a\to\b}\leq d^\uparrow$ and $d^\uparrow=\bigvee
  L_d$, so  we obtain $\bbot_{\a\to\b}\leq \bigvee L_d$.  It just remains
  to prove that for every  $(f,e)$ in $\Ll_{\a\to\b}$: $f\leq \bigvee L_d$ iff
  $e\leq d$.

  We first remark that, as by induction hypothesis, $\a$ and $\b$ are
  $\Dd$-complete, by Lemma~\ref{lem:uparrow_existence_and_definition}
  (items (4) and (5)), for every $(p,e')\in \Ll_\a$ we have:
  \begin{equation}
    \label{eq:1}
    \Lor
    L_{d(e')}=d^\uparrow(p)=\left(\bigvee L_d\right)(p)
  \end{equation}
  
  Let's first suppose that $e\leq d$. Take a $p\in \Kk_\a$. By
  definition of the model there is $e'$, such that $(p,e')\in
  \Ll_\a$. As $\a$ is $\Dd$-complete,
  Lemma~\ref{lem:embedding_of_Dd_into_Kk_at_Dd_complete_types} gives
  us $\bigvee L_{e(e')}\leq \bigvee L_{d(e')}$. By definition of
  $\Ll_{\a\to\b}$ we have that $(f(p),e(e'))\in \Ll_\b$, so $f(p)\leq
  \Lor L_{e(e')}$ by definition of $L_{e(e')}$. This gives $f(p)\leq
  \Lor L_{e(e')}\leq \Lor L_{d(e')}$. Finally Equation~(\ref{eq:1})
  shows the desired $f(p)\leq \left(\bigvee L_d\right)(p)$ for every
  $p\in\Kk_\a$.

  Let us now suppose that $f\leq \bigvee L_d$. The $\Dd$-completeness
  of $\a$ tells us that for every $e'$ in $\Dd_\a$ there is $p$ in $\Kk_{\a}$
  so that $(p, e')$ is in $\Ll_{\a}$.  Then
  Equation~(\ref{eq:1}) gives
  $f(p)\leq \left(\bigvee L_d\right)(p) = \bigvee L_{d(e')}$. Now, as
  by induction $\b$ is $\Dd$-complete, the
  fact that $(f(p),e(e'))\in \Ll_\b$ entails $e(e')\leq d(e')$.  As
  $e'$ was arbitrary we obtain $e\leq d$.
\end{proof}

The proposition below sums up the properties of the embedding
$(\cdot)^\uparrow$ from Definition~\ref{df:uparrow}.

\begin{prop}\label{prop:uparrow_basics}
  Given a type $\alpha$, and $d$ in $\Dd_\a$, the element $d^\uparrow$ from
  $\Kk_\a$ is such that:
  \begin{enumerate}
  \item $(d^\uparrow, d)$ is in $\Ll_\a$,
  \item if $e\in\Dd_\a$ and $d\leq e$ then $d^\uparrow\leq
    e^\uparrow$,
  \item if $(f,d)$ is in $\Ll_\a$, then $f\leq d^\uparrow$,
  \item if $\a = \a_1\to\a_2$ and $(g,e)$ is in $\Ll_{\a_1}$ then
    $d^\uparrow(g) = (d(e))^\uparrow$
  \end{enumerate}
\end{prop}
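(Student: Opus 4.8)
The statement collects four facts about the embedding $(\cdot)^\uparrow\colon \Dd_\a\to\Kk_\a$, and my plan is to read off each item from the heavy machinery already established in Lemma~\ref{lem:uparrow_existence_and_definition} and Lemma~\ref{lem:every_type_is_Dd_complete}. The crucial observation is that, by Lemma~\ref{lem:every_type_is_Dd_complete}, \emph{every} type is $\Dd$-complete, so all the conditional conclusions of Lemma~\ref{lem:uparrow_existence_and_definition}, which were stated under the hypothesis of $\Dd$-completeness, now hold unconditionally. This removes all side conditions and lets me use items (3), (4), (5) of that lemma freely.

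\emph{First I would dispatch item (1)} $(d^\uparrow,d)\in\Ll_\a$: this is exactly item (3) of Lemma~\ref{lem:uparrow_existence_and_definition} together with the base case, which I would check by direct inspection of Definition~\ref{df:uparrow} (for $d=\bot$ we get $(\bot,Q_\W)$, related to $\bot$ by the definition of $\Ll_0$; for $d=\top$ we get $(\top,Q)$, related to $\top$). \emph{For item (3)}, suppose $(f,d)\in\Ll_\a$; I want $f\leq d^\uparrow$. But item (5) of Lemma~\ref{lem:uparrow_existence_and_definition} identifies $d^\uparrow=\Lor L_d$, and $f\in L_d$ by definition of $L_d$, so $f\leq\Lor L_d=d^\uparrow$ is immediate. \emph{For item (2)}, monotonicity: assume $d\leq e$ in $\Dd_\a$. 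Using $d^\uparrow=\Lor L_d$ and $e^\uparrow=\Lor L_e$ from item (5), the inequality $\Lor L_d\leq\Lor L_e$ is precisely the content of Lemma~\ref{lem:embedding_of_Dd_into_Kk_at_Dd_complete_types} in the direction ``$d\leq e$ implies $\Lor L_d\leq\Lor L_e$'', which applies since $\a$ is $\Dd$-complete.

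\emph{For item (4)}, the application-commutation property, assume $\a=\a_1\to\a_2$ and $(g,e)\in\Ll_{\a_1}$; I want $d^\uparrow(g)=(d(e))^\uparrow$. The left-hand side is computed by item (4) of Lemma~\ref{lem:uparrow_existence_and_definition}, which gives $d^\uparrow(g)=\Lor L_{d(e)}$. The right-hand side is $(d(e))^\uparrow$, which by item (5) of the same lemma (applied at type $\a_2$, which is $\Dd$-complete) equals $\Lor L_{d(e)}$. Hence both sides coincide. I would note that item (4) of Lemma~\ref{lem:uparrow_existence_and_definition} is stated with $d^\uparrow(p)=\Lor L_{h(e)}$ for $(p,e)\in\Ll_\a$, so I am simply instantiating $h:=d$, $p:=g$.

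\emph{The main obstacle} is essentially bookkeeping rather than mathematics: I must make sure every invocation of Lemma~\ref{lem:uparrow_existence_and_definition} and Lemma~\ref{lem:embedding_of_Dd_into_Kk_at_Dd_complete_types} has its $\Dd$-completeness hypothesis discharged, which is now automatic by Lemma~\ref{lem:every_type_is_Dd_complete}, and that I correctly handle the base-type definition of $(\cdot)^\uparrow$ separately (since at type $0$ the general co-step formula is replaced by the explicit two-case definition). Once the identification $d^\uparrow=\Lor L_d$ is in hand, all four items follow in a few lines, so I expect this proof to be short and to consist almost entirely of citing the preceding lemmas.
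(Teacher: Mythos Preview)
Your proposal is correct and follows essentially the same approach as the paper: both derive the four items from Lemma~\ref{lem:uparrow_existence_and_definition} (made unconditional by Lemma~\ref{lem:every_type_is_Dd_complete}), with a short extra step for monotonicity. The only cosmetic difference is that for item~(2) the paper uses $\Dd$-completeness directly (from $(d^\uparrow,d)\in\Ll_\a$ and $d\leq e$ deduce $d^\uparrow\leq\Lor L_e=e^\uparrow$), whereas you invoke Lemma~\ref{lem:embedding_of_Dd_into_Kk_at_Dd_complete_types}; these are the same argument, and your explicit treatment of the base type is a welcome bit of care the paper leaves implicit.
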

\begin{proof}
  These properties follow directly from
  Lemma~\ref{lem:uparrow_existence_and_definition}, except for the
  second property for which a small calculation is needed. Since
  $(d^\uparrow,d)$ is in $\Ll_\a$ and $d\leq e$ then by
  Lemma~\ref{lem:uparrow_existence_and_definition}: $d^\uparrow\leq
  \Lor L_e$. The latter is precisely $e^\uparrow$ by
  Lemma~\ref{lem:uparrow_existence_and_definition}.
\end{proof}
In particular, in combination with item 3 of 
Lemma~\ref{lem:uparrow_existence_and_definition} , this
proposition shows that the operator $(\cdot)^\uparrow$ commutes with
the application: $d^\uparrow(e^\uparrow) = (d(e))^\uparrow$.

The next lemma shows that the relation $\Ll_\a$ is functional. 

\begin{lem}\label{lem:unicity_of_left_projection}
  For every type $\a$ and $f$ in $\Kk_\a$: if $(f,d_1)$ and $(f,d_2)$ are
  in $\Ll_{\a}$, then $d_1 = d_2$.
\end{lem}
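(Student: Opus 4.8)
The plan is to reduce functionality of $\Ll_\alpha$ entirely to the third clause of $\Dd$-completeness, which by Lemma~\ref{lem:every_type_is_Dd_complete} holds at every type $\a$. The crucial elementary observation to record first is that whenever $(f,d)\in\Ll_\a$, the element $f$ lies in $L_d$ by the very definition of $L_d$, and hence $f\leq \bigvee L_d$. This is the only ``input'' the argument needs beyond the completeness condition itself.

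So suppose $(f,d_1)$ and $(f,d_2)$ are both in $\Ll_\a$. First I would apply the third clause of $\Dd$-completeness to the pair $(f,d_1)\in\Ll_\a$, taking the threshold to be $d=d_2$: the clause asserts $f\leq\bigvee L_{d_2}$ iff $d_1\leq d_2$. Since $(f,d_2)\in\Ll_\a$ gives $f\in L_{d_2}$ and therefore $f\leq\bigvee L_{d_2}$, the left-hand side holds and we obtain $d_1\leq d_2$. Symmetrically, applying the same clause to $(f,d_2)$ with threshold $d=d_1$, and using $f\in L_{d_1}$ to get $f\leq\bigvee L_{d_1}$, yields $d_2\leq d_1$. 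Antisymmetry of the order on $\Dd_\a$ then forces $d_1=d_2$, which is exactly the claim; this also shows that the map $(\bar{\cdot})$ sending $f$ to its unique $\Dd$-image is well defined.

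I do not expect a genuine obstacle here, since all the inductive labour has already been absorbed into Lemma~\ref{lem:every_type_is_Dd_complete}; the argument is a two-line application of one of its defining conditions and needs no induction on types of its own. The only point worth double-checking is that the suprema $\bigvee L_{d_1}$ and $\bigvee L_{d_2}$ used above are genuinely defined, i.e. that $L_{d_1}$ and $L_{d_2}$ are nonempty and admit joins in $\Kk_\a$; but nonemptiness is clause~(1) of $\Dd$-completeness and the existence of the joins follows from Lemma~\ref{lem:Kk_is_a_join_semilattice}, so every expression invoked in the proof is legitimate.
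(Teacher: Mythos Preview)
Your proof is correct, but it follows a different route from the paper's. The paper argues by a fresh induction on the structure of $\a$: at base type it is immediate, and at $\a\to\b$ one evaluates $f$ on elements of the form $e^\uparrow$ (using Lemma~\ref{lem:uparrow_existence_and_definition}) to obtain $(f(e^\uparrow),d_1(e))$ and $(f(e^\uparrow),d_2(e))$ in $\Ll_\b$, so the induction hypothesis at $\b$ gives $d_1(e)=d_2(e)$ for every $e$. Your argument instead cashes in the already-proved $\Dd$-completeness of every type (Lemma~\ref{lem:every_type_is_Dd_complete}), reading off $d_1\leq d_2$ and $d_2\leq d_1$ directly from clause~(3). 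This is more economical---no new induction is needed, and it makes transparent that functionality of $\Ll_\a$ is really an order-theoretic consequence of the equivalence $f\leq\bigvee L_d \Leftrightarrow e\leq d$. The paper's version, on the other hand, is more operational: it pinpoints that evaluating at the canonical lifts $e^\uparrow$ suffices to recover the $\Dd$-component, which foreshadows the definition and properties of the projection $\overline{(\cdot)}$ used immediately afterwards.
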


\begin{proof}
  We proceed by induction on the structure of the type. The case of
  the base type follows from a direct inspection. For the induction
  step suppose that both $(f,d_1)$ and $(f,d_2)$ are in
  $\Ll_{\a\to\b}$. Take an arbitrary $e\in \Dd_\a$. By
  Lemma~\ref{lem:uparrow_existence_and_definition} we have
  $(e^\uparrow,e)\in \Ll_\a$. Therefore $(f(e^\uparrow),d_1(e))$ and
  $(f(e^\uparrow),d_2(e))$ in $\mathcal{L}_{\b}$.  The induction
  hypothesis implies that $d_1(e) = d_2(e)$. Since $e$ was arbitrary
  we get $d_1=d_2$.
\end{proof}

Since, by definition, for every $f\in \Kk_\a$ we have $(f,d)\in\Ll_\a$
for some $d\in\Dd_\a$, the above lemma gives us a  projection
of $\Kk_\a$ to $\Dd_\a$.  For this we re-use the notation we have
introduced in Definition~\ref{df:uparrow}.
\begin{defi}\label{df:bar}
  For every type $\a$ and $f\in \Kk_\a$ we let
  $\bar f$  be the unique element of $\Dd_\a$ such that $(f,\bar
  f)\in \Ll_\a$.
\end{defi}
Notice that $\overline{d^\uparrow} = d$ for every $d$ in $\Dd_\a$,
since $(d^\uparrow,d)$ is in $\Ll_\a$ by Proposition~\ref{prop:uparrow_basics}.

We immediately state some properties of the projection. We start by
showing that it commutes with the application.
% \begin{prop}\label{prop:uparrow_basics}
%   Given a type $\alpha$, and $p$ in $\Kk_\a$, there is $\bar p$ in
%   $\Dd_\a$ such that:
%   \begin{enumerate}
%   \item $(p,\bar p)$ is in $\Ll_\a$, and $\bar p$ is the unique such
%     element of $\Dd_\a$. 
%   \end{enumerate}
% \end{prop}  

\begin{lem}\label{lem:commutation_overline_application}
  Given $f$ in $\Kk_{\a\to\b}$ and $p$ in $\Kk_\a$, $\overline{f(p)} =
  \overline{f}(\overline{p})$.
\end{lem}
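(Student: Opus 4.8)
The plan is to prove that $\overline{f(p)} = \overline{f}(\overline{p})$ by directly unwinding the definitions of the projection $(\bar{\cdot})$ and of the logical relation $\Ll$. By Definition~\ref{df:bar}, $\overline{p}$ is the unique element of $\Dd_\a$ with $(p,\overline{p})\in\Ll_\a$, and similarly $(f,\overline{f})\in\Ll_{\a\to\b}$. First I would recall the defining clause of $\Ll_{\a\to\b}$ (item~4 of Definition~\ref{def:K}): since $(f,\overline{f})\in\Ll_{\a\to\b}$, for every $(g,e)\in\Ll_\a$ we have $(f(g),\overline{f}(e))\in\Ll_\b$.

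The key step is to instantiate this defining property at the particular pair $(p,\overline{p})$, which belongs to $\Ll_\a$ by the definition of $\overline{p}$. This immediately yields $(f(p),\overline{f}(\overline{p}))\in\Ll_\b$. On the other hand, by Definition~\ref{df:bar} applied at type $\b$, the element $f(p)\in\Kk_\b$ has a unique $\Dd_\b$-partner, namely $\overline{f(p)}$, so $(f(p),\overline{f(p)})\in\Ll_\b$ as well. Thus $f(p)$ is related both to $\overline{f(p)}$ and to $\overline{f}(\overline{p})$ in $\Ll_\b$.

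The conclusion then follows from Lemma~\ref{lem:unicity_of_left_projection}, which asserts that $\Ll_\b$ is functional: whenever $(f(p),d_1)$ and $(f(p),d_2)$ are both in $\Ll_\b$, we must have $d_1 = d_2$. Applying this with $d_1 = \overline{f(p)}$ and $d_2 = \overline{f}(\overline{p})$ gives the desired equality $\overline{f(p)} = \overline{f}(\overline{p})$.

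I do not anticipate a genuine obstacle here: the statement is a direct corollary of the uniqueness of left projections (Lemma~\ref{lem:unicity_of_left_projection}) together with the very definition of $\Ll_{\a\to\b}$ as the set of pairs that are \emph{compatible} with application. The only point requiring mild care is making sure that the well-definedness of $\overline{p}$ and $\overline{f}$ is justified, i.e.\ that every element of $\Kk_\a$ really has a $\Dd_\a$-partner; but this is exactly what Definition~\ref{df:bar} guarantees (every $f\in\Kk_\a$ satisfies $(f,d)\in\Ll_\a$ for some $d$ by the definition of the model, and $d$ is unique by Lemma~\ref{lem:unicity_of_left_projection}). Hence the whole argument is a three-line application of definitions plus functionality, with no induction and no subtle case analysis needed.
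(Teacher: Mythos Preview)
Your proposal is correct and follows essentially the same route as the paper: the paper's proof is a terse two-liner observing that $(f,\overline f)\in\Ll_{\a\to\b}$ and $(p,\overline p)\in\Ll_\a$ give $(f(p),\overline f(\overline p))\in\Ll_\b$, whence $\overline{f(p)}=\overline f(\overline p)$ by the defining uniqueness of $\overline{(\cdot)}$. You simply spell out that last step by naming Lemma~\ref{lem:unicity_of_left_projection} explicitly, which is fine.
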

\begin{proof}
  We have $(f,\overline{f})$ in $\Ll_{\a\to \b}$ and
  $(p,\overline{p})$ in $\Ll_\a$, so that $(f(p),
  \overline{f}(\overline{p}))$ is in $\Ll_\b$ and thus
  $\overline{f(p)} = \overline{f}(\overline{p})$.
\end{proof}

\begin{lem}\label{lem:overline_ordering_lowering}
  Given $f$ and $g$ in $\Kk_\a$, if $f\leq g$ then
  $\overline{f}\leq\overline{g}$.
\end{lem}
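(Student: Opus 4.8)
The plan is to reduce the statement to the third clause of $\Dd$-completeness, which characterizes being below $\bigvee L_e$ in terms of the order on $\Dd_\a$. Write $d=\bar f$ and $e=\bar g$, so that by Definition~\ref{df:bar} and the functionality of $\Ll_\a$ established in Lemma~\ref{lem:unicity_of_left_projection} we have $(f,d)$ and $(g,e)$ in $\Ll_\a$. The goal is then exactly $d\leq e$, which is what I would aim to extract from the hypothesis $f\leq g$.

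First I would record that $g\leq \bigvee L_e$. Since $(g,e)\in\Ll_\a$, item~3 of Proposition~\ref{prop:uparrow_basics} gives $g\leq e^\uparrow$, and item~5 of Lemma~\ref{lem:uparrow_existence_and_definition} identifies $e^\uparrow$ with $\bigvee L_e$; combining this with $f\leq g$ yields $f\leq \bigvee L_e$ by transitivity. Next I would invoke $\Dd$-completeness of $\a$, which holds for every type by Lemma~\ref{lem:every_type_is_Dd_complete}. Its third clause, applied to the pair $(f,d)\in\Ll_\a$ and the element $e$, states that $f\leq \bigvee L_e$ iff $d\leq e$. Since the left-hand side has just been established, we conclude $d\leq e$, that is $\overline f\leq\overline g$.

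There is essentially no obstacle here: the entire content of the lemma is already packaged in the third clause of $\Dd$-completeness, so the monotonicity of $(\bar\cdot)$ falls out as a direct corollary of the Galois-connection apparatus built in the preceding lemmas. The only two points requiring care are applying the biconditional in the correct direction, and recognizing that $\bigvee L_e$ coincides with $e^\uparrow$, which is precisely what lets the hypothesis $f\leq g$ be pushed below $\bigvee L_e$ before the completeness characterization is invoked.
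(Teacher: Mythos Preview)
Your argument is correct and is a genuinely different route from the paper's. The paper proves the lemma by induction on types: at the base type by inspection, and at an arrow type $\a\to\b$ by applying both sides to $d^\uparrow$ for arbitrary $d\in\Dd_\a$, invoking Lemma~\ref{lem:commutation_overline_application} to rewrite $\overline{f(d^\uparrow)}$ as $\overline f(d)$, and then using the induction hypothesis at $\b$. Your proof instead extracts the result in one stroke from the third clause of $\Dd$-completeness, which is effectively the Galois-connection inequality $f\leq \bigvee L_e \Leftrightarrow \bar f\leq e$. This is more conceptual and avoids induction entirely; the paper's version is more elementary in that it only relies on the commutation lemma rather than the full $\Dd$-completeness package.

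One small citation quibble: item~5 of Lemma~\ref{lem:uparrow_existence_and_definition} is stated only for $h\in\Dd_{\a\to\b}$, so it does not literally cover the base type. This is harmless here, since you do not actually need the identification $e^\uparrow=\bigvee L_e$: from $(g,e)\in\Ll_\a$ you have $g\in L_e$ directly, hence $g\leq\bigvee L_e$, and then $f\leq g\leq\bigvee L_e$ feeds straight into the $\Dd$-completeness biconditional. Dropping the detour through $e^\uparrow$ makes the argument slightly cleaner and removes the need for Proposition~\ref{prop:uparrow_basics} altogether.
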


\begin{proof}
  We proceed by induction on the structure of the types. The case of
  the base type follows by a straightforward inspection. For the
  induction step take $f\leq g$ in $\Kk_{\a\to \b}$. For an arbitrary
  $d\in \Dd_\a$ we have $f(d^\uparrow)\leq g(d^\uparrow)$. By
  induction hypothesis on type $\b$ we get $\bar{f(d^\uparrow)}\leq
  \bar{g(d^\uparrow)}$. By
  Lemma~\ref{lem:commutation_overline_application} we obtain
  $\overline{f(d^\uparrow)} = \overline{f}(\overline{d^\uparrow}) =
  \overline{f}(d)$. The last equality follows from the fact that
  $\bar{d^\uparrow}=d$ since $(d^\uparrow,d)$ is in $\Ll_\a$ by
  Proposition~\ref{prop:uparrow_basics}. Of course the same equalities
  hold for $g$ too. So $\bar f(d)\leq \bar g(d)$ for arbitrary $d$,
  and we are done.
\end{proof}
Taking an abstract view on the operations $(\cdot)^\uparrow$ and
$\overline{(\cdot)}$, we can summarise all the properties we have
shown as follows:
\begin{cor}
For the models $\Dd$ and $\Kk$ as defined above.
\begin{enumerate}
\item Mapping $(\cdot)^\uparrow$ is a functor from $\Dd$  to $\Kk$.
\item Mapping $\overline{(\cdot)}$ is a functor from $\Kk$ to $\Dd$.
\item At every type both mappings are monotonous and moreover they
  form a Galois connection in the sense that $\overline{f}\leq d$ iff
  $f\leq d^\uparrow$.
\item The pair $\overline{(\cdot)}$, $(\cdot)^\uparrow$ forms a retraction:
  $\overline{d^\uparrow} = d$.
\end{enumerate}
\end{cor}

\subsubsection{Interpretation of fixpoints} We are now going to give
the definition of the interpretation of the fixpoint combinator in
$\Kk$.  This definition is based on that of the fixpoint operator in
$\Dd$. We write $\fix_\a$ for the operation in
$\Dd_{(\a\to\a)\to\a}$ that maps a function of $\Dd_{\a\to\a}$ to its
least fixpoint.

\begin{lem}\label{lem:fixpoint_definition}
  Given $f$ in $\Kk_{\a\to\a}$, we have
  $f(\fix_\a(\overline{f})^\uparrow)\leq \fix_\a(\overline{f})^\uparrow$.
\end{lem}

\begin{proof}
  By proposition~\ref{prop:uparrow_basics}, 
  $(\fix_\a(\overline{f})^\uparrow,\fix_\a(\overline{f}))$ is in
  $\mathcal{L}_\a$.  Moreover, as $(f,\overline{f})$ is in
  $\Ll_{\a\to\a}$, by definition of $\Ll_{\a\to\a}$, we have
  $(f(\fix_\a(\overline{f})^\uparrow),\overline{f}(\fix_\a(\overline{f})))
  = (f(\fix_\a(\overline{f})^\uparrow),\fix_\a(\overline{f}))$ is in
  $\mathcal{L}_\a$. Then by Proposition~\ref{prop:uparrow_basics} we
  get $f(\fix_\a(\overline{f})^\uparrow)\leq
  \fix_\a(\overline{f})^\uparrow$.
\end{proof}

The above lemma guarantees that the sequence 
$f^n(\fix_\a(\overline{f})^\uparrow)$ is  decreasing. We can now
define an operator that, as we will show, is the fixpoint operator we
are looking for. 
\begin{defi}
  For every type $\a$ and $f\in\Kk_\a$ define 
  \begin{equation*}
    \Fix_\a (f)=
    \bigwedge_{n\in\mathbb{N}}(f^n(\fix_\a(\overline{f})^\uparrow))\ .
  \end{equation*}
\end{defi}
We show that $\Fix_\a$ is monotone.
\begin{lem}\label{lem:fixpoint_monotonicity}
  Given $f$ and $g$ in $\Kk_{\a\to \a}$, if $f\leq g$ then
  $\Fix_\a(f)\leq \Fix_\a(g)$.
\end{lem}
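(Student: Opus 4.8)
The plan is to prove monotonicity of $\Fix_\a$ by showing that each term in the defining meet is monotone in $f$, and then invoking monotonicity of the meet operation. Recall that
\begin{equation*}
\Fix_\a(f)=\bigwedge_{n\in\mathbb{N}}\bigl(f^n(\fix_\a(\overline{f})^\uparrow)\bigr),
\end{equation*}
so it suffices to establish that for each fixed $n$, the map $f\mapsto f^n(\fix_\a(\overline{f})^\uparrow)$ is monotone; the conclusion then follows since the meet of pointwise-smaller elements is smaller (and all the meets in sight exist in $\Kk_\a$ by Corollary~\ref{coro:meets_in_Kk}, given that the relevant sequences are decreasing by Lemma~\ref{lem:fixpoint_definition}).

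\textbf{Handling the seed of the iteration.} Suppose $f\leq g$ in $\Kk_{\a\to\a}$. First I would compare the starting points $\fix_\a(\overline{f})^\uparrow$ and $\fix_\a(\overline{g})^\uparrow$. By Lemma~\ref{lem:overline_ordering_lowering}, $f\leq g$ gives $\overline{f}\leq\overline{g}$ in $\Dd_{\a\to\a}$. Since $\fix_\a$ is the least-fixpoint operator on the finite lattice $\Dd_\a$, it is monotone, so $\fix_\a(\overline{f})\leq\fix_\a(\overline{g})$. Applying item~(2) of Proposition~\ref{prop:uparrow_basics} (monotonicity of $(\cdot)^\uparrow$) yields $\fix_\a(\overline{f})^\uparrow\leq\fix_\a(\overline{g})^\uparrow$. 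Thus the base elements of the two iterations are already ordered correctly.

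\textbf{Propagating the inequality through the iteration.} Now I would prove by induction on $n$ that $f^n(\fix_\a(\overline{f})^\uparrow)\leq g^n(\fix_\a(\overline{g})^\uparrow)$. The base case $n=0$ is exactly the inequality just established. For the inductive step, write $p=f^n(\fix_\a(\overline{f})^\uparrow)$ and $p'=g^n(\fix_\a(\overline{g})^\uparrow)$, so $p\leq p'$ by hypothesis. Then
\begin{equation*}
f^{n+1}(\fix_\a(\overline{f})^\uparrow)=f(p)\leq f(p')\leq g(p'),
\end{equation*}
where the first inequality uses monotonicity of the function $f\in\Kk_{\a\to\a}$ applied to $p\leq p'$, and the second uses $f\leq g$ in the coordinatewise order of $\Kk_{\a\to\a}$ evaluated at the common argument $p'$. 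This gives $f^{n+1}(\fix_\a(\overline{f})^\uparrow)\leq g^{n+1}(\fix_\a(\overline{g})^\uparrow)$, completing the induction.

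\textbf{Concluding.} Having shown $f^n(\fix_\a(\overline{f})^\uparrow)\leq g^n(\fix_\a(\overline{g})^\uparrow)$ for every $n$, I take meets over $n$: since every term on the left is bounded by the corresponding term on the right, the greatest lower bound on the left is bounded by the greatest lower bound on the right, giving $\Fix_\a(f)\leq\Fix_\a(g)$. I do not expect a serious obstacle here; the only point deserving care is making sure the two inequalities in the inductive step are correctly attributed — one to \emph{internal} monotonicity of a fixed function $f$, the other to the \emph{external} order $f\leq g$ on functions — and confirming that all the meets involved are legitimate elements of $\Kk_\a$, which is precisely what the decreasing property from Lemma~\ref{lem:fixpoint_definition} together with Corollary~\ref{coro:meets_in_Kk} secures.
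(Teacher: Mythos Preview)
Your proof is correct and follows essentially the same route as the paper: use $\overline{(\cdot)}$-monotonicity to compare the seeds, then propagate the inequality through the iteration using $f\leq g$, and finally pass to the meet. The paper simply states the iterated inequality $f^k(\fix_\a(\overline{f})^\uparrow)\leq g^k(\fix_\a(\overline{g})^\uparrow)$ without spelling out the two-step argument in the inductive case, so your version is a more detailed rendering of the same proof.
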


\begin{proof}
  By Lemma~\ref{lem:overline_ordering_lowering}, $f\leq g$ implies
  $\overline{f}\leq \overline{g}$, as $\fix_\a$ is monotone, we have
  $\fix_\a(\overline{f})\leq \fix_\a(\overline{g})$ and
  $\fix_\a(\overline{f})^\uparrow\leq \fix_\a(\overline{g})^\uparrow$
  by Proposition~\ref{prop:uparrow_basics}.  As $f\leq g$ we have
  $f^k(\fix_\a(\overline{f})^\uparrow)\leq
  g^k(\fix_\a(\overline{g})^\uparrow)$ for every $k$ in
  $\mathbb{N}$. Therefore
  $\bigwedge_{n\in\mathbb{N}}f^n(\fix_\a(\overline{f})^\uparrow)\leq
  \bigwedge_{n\in\mathbb{N}}g^n(\fix_\a(\overline{g})^\uparrow)$.
\end{proof}

The last step is to show that $\Fix_\a$ is actually in
$\Kk_{(\a\to\a)\to \a}$.

\begin{lem}\label{lem:fixpoint_overline}
  For every $\a$, $\Fix_\a$ is in $\Kk_\a$ and $(\Fix_\a,\fix_\a)$ is
  in $\mathcal{L}_{(\a\to\a)\to\a}$.
\end{lem}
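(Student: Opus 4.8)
The plan is to reduce the whole statement to a single claim: for every $(g,e)\in\Ll_{\a\to\a}$ one has $(\Fix_\a(g),\fix_\a(e))\in\Ll_\a$. This one claim does double duty. Taking $\fix_\a\in\Dd_{(\a\to\a)\to\a}$ as the witnessing $\Dd$-element in clause (3) of Definition~\ref{def:K}, it shows $\Fix_\a\in\Kk_{(\a\to\a)\to\a}$; and read against clause (4) it is exactly the assertion $(\Fix_\a,\fix_\a)\in\Ll_{(\a\to\a)\to\a}$. The two side conditions needed for membership in the function space are already available: monotonicity of $\Fix_\a$ is Lemma~\ref{lem:fixpoint_monotonicity}, and well-definedness of $\Fix_\a(g)$ follows because the sequence $g^n(\fix_\a(\overline g)^\uparrow)$ is decreasing by Lemma~\ref{lem:fixpoint_definition}, so in the finite poset $\Kk_\a$ it stabilises and its meet equals the stable (smallest) value, which is itself one of the chain elements and hence lies in $\Kk_\a$.

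For the central claim I would fix $(g,e)\in\Ll_{\a\to\a}$ and first observe, using the functionality of $\Ll$ (Lemma~\ref{lem:unicity_of_left_projection} and Definition~\ref{df:bar}), that $\overline g=e$, so that $\Fix_\a(g)=\bigwedge_n g^n\big((\fix_\a(e))^\uparrow\big)$. Writing $d^\ast=\fix_\a(e)$, the decisive structural fact is that $d^\ast$ is a fixpoint of $e$, i.e.\ $e(d^\ast)=d^\ast$. I would then prove by induction on $n$ that $(g^n(d^{\ast\uparrow}),d^\ast)\in\Ll_\a$. The base case $n=0$ is $(d^{\ast\uparrow},d^\ast)\in\Ll_\a$, which is Proposition~\ref{prop:uparrow_basics}(1). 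For the step, applying the defining property of $(g,e)\in\Ll_{\a\to\a}$ to the pair $(g^n(d^{\ast\uparrow}),d^\ast)$ yields $(g^{n+1}(d^{\ast\uparrow}),e(d^\ast))\in\Ll_\a$, and rewriting $e(d^\ast)=d^\ast$ gives precisely $(g^{n+1}(d^{\ast\uparrow}),d^\ast)\in\Ll_\a$.

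It remains to pass from the chain elements to their meet. Each $g^n(d^{\ast\uparrow})$ lies in the class $L_{d^\ast}$, and since the chain is decreasing in the finite poset $\Kk_\a$ it stabilises at some $g^N(d^{\ast\uparrow})$; this stable value is exactly $\Fix_\a(g)$ and still lies in $L_{d^\ast}$, giving $(\Fix_\a(g),\fix_\a(e))\in\Ll_\a$ as required. (One could instead invoke Corollary~\ref{coro:meets_in_Kk} to see directly that $L_{d^\ast}$ is closed under the relevant meet, using $d^\ast\wedge d^\ast=d^\ast$, but for a chain the stabilisation argument is cleaner.) I expect no serious obstacle here: the only genuinely load-bearing observation is the identification $\overline g=e$, which makes the least-fixpoint equation $e(d^\ast)=d^\ast$ available to close the induction, after which preservation of $\Ll$ under application of related functions carries the argument through.
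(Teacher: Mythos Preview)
Your proposal is correct and follows essentially the same approach as the paper. The paper's proof picks an arbitrary $f\in\Kk_{\a\to\a}$, works with $\overline f$, and shows by induction (using that $\fix_\a(\overline f)$ is a fixpoint of $\overline f$) that each $(f^n(\fix_\a(\overline f)^\uparrow),\fix_\a(\overline f))\in\Ll_\a$, then concludes by finiteness of $\Kk_\a$; your version simply makes the identification $\overline g=e$ explicit via functionality and is slightly more careful about the side conditions (monotonicity, stabilisation of the chain), but the argument is the same.
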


\begin{proof}
  We know that $(f,\overline{f})$ in $\mathcal{L}_{\a\to\a}$. As we
  have seen in the proof of Lemma~\ref{lem:fixpoint_definition},
  $(f(\fix_\a(\overline{f})^\uparrow),\fix_\a(\overline{f}))$ is in
  $\mathcal{L}_\a$. Using repeatedly the defining properties of
  $\Ll_{\a\to\a}$, we obtain that for every $n\in\Nat$,
  $(f^n(\fix_\a(\overline{f})^\uparrow),\fix_\a(\overline{f}))$ is in
  $\mathcal{L}_\a$. But $f^{n}(\fix_\a(\overline{f})^\uparrow)$ is
  decreasing by Lemma~\ref{lem:fixpoint_definition}. Since $\Kk_\a$ is
  finite, we get $(\bigwedge_{n\in
    \mathbb{N}}f^n(\fix_\a(\overline{f})^\uparrow),\fix_\a(\overline{f}))$
  in $\Ll_\a$. We are done since $\bigwedge_{n\in
    \mathbb{N}}f^n(\fix_\a(\overline{f})^\uparrow)=\Fix_\a(f)$.
\end{proof}

\subsubsection{A model of the $\l Y$-calculus} We are ready to define the model we were looking for.

\begin{defi}\label{def:Kmodel}
  For a finite set $Q$ and its subset $Q_\W\incl Q$ consider a tuple
  $\Kk(Q,Q_\W,\r)=(\set{\Kk_\a}_{\a\in \Tt},\rho)$ where
  $\set{\Kk_\a}_{\a\in \Tt}$ is as in Definition~\ref{def:K} and
  $\rho$ is a valuation such that for every type $\a$: $\w^\a$ is
  interpreted as the greatest element of $\Kk_\a$,
  $Y^{(\a\to\a)\to\a}$ is interpreted as $\Fix_\a$, and $\W^\a$ is
  interpreted as $\bbot_\a$.
\end{defi}
Notice that, according to this definition, $\W^0$ is interpreted as
$(\bot,Q_\W)$. So the semantics of $\W$ and $\w$ are different in this
model. Recall that $\W$ is used to denote divergence, and  $\w$ is
used in the definition of the truncation operation from the semantics of
B\"ohm trees (cf. page~\pageref{eq:BT-semantics}).

We will show $\Kk(Q,Q_\W,\r)$ is indeed a model of the $\l Y$-calculus.
Since $\Kk_{\a\to\b}$ does not contain all the functions from $\Kk_\a$
to $\Kk_\b$ we must show that there are enough of them to form a model
of $\l Y$, the main problem being to show that $\sem{\l
  x. M}^\val_\Kk$ defines an element of $\Kk$.  For this, it is
sufficient to prove that constant functions and the combinators $S$
and $K$ exist in the model.

\begin{lem}\label{lem:combinatorial_completeness_of_Kk}
  For every sequence of types $\vec \a=\a_1\ldots\a_n$ and every types
  $\b$, $\g$ we have the following:
  \begin{itemize}
  \item For every constant $p\in \Kk_\b$ the constant function
    $f_p:\a_1\to\dots\to\a_n\to\b$ belongs to $\Kk$.
  \item For $i=1,\dots,n$, the projection
    $\pi_i:\a_1\to\dots\to\a_n\to\a_i$ belongs to $\Kk$.
  \item If $f:\vec\a\to(\b\to \g)$ and $g:\vec \a\to\b$ are in $\Kk$
    then $\lambda \vec p. f \vec p(g \vec p) :\vec\a\to\g$ is in $\Kk$.
  % \item if $f:\vec\a\to\b$ is in $\Kk$ then for every $i$ the function
  %   $g:\vec\a\to(\a_i\to\a)$ defined by
  %   $g(p_1,\dots,p_n)(p')=f(p_1,\dots,p_{i-1},p',p_{i+1},\dots,p_n)$ is in $\Kk$.
  \end{itemize}
\end{lem}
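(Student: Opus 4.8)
The plan is to exploit that $\Dd$ is the \emph{full} model of monotone functions over the two-point lattice $\Dd_0$, and hence is combinatorially complete: it already contains all constant functions, all projections, and is closed under the composition defining the third item. The logical relation $\Ll$ will then transport this completeness to $\Kk$. Concretely, to witness that a monotone function $F$ over the $\Kk_\a$'s lies in $\Kk$, I will in each case exhibit the corresponding combinator $D$ constructed inside $\Dd$ and verify that $(F,D)\in\Ll$. The single recurring tool is the observation that $\Ll$ is preserved under full application along a curried type: if $(F,D)\in\Ll_{\a_1\to\dots\to\a_n\to\b}$ and $(p_j,e_j)\in\Ll_{\a_j}$ for each $j$, then $(F\,p_1\cdots p_n,\,D\,e_1\cdots e_n)\in\Ll_\b$; this follows by unfolding the definition of $\Ll$ at arrow types $n$ times. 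Conversely, once every intermediate partial application is known to land in $\Kk$, this fully-applied condition is \emph{equivalent} to membership in $\Ll_{\a_1\to\dots\to\a_n\to\b}$. All three items are then proved by induction on $n=|\vec\a|$.

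For the constant functions, the base case $n=1$ is precisely Lemma~\ref{lem:constant_functions_are_in_Kk}; for $n>1$ the function $f_p:\a_1\to\dots\to\a_n\to\b$ returns the constant function $f'_p:\a_2\to\dots\to\a_n\to\b$, which lies in $\Kk$ by the induction hypothesis, so a second appeal to Lemma~\ref{lem:constant_functions_are_in_Kk} places $f_p$ in $\Kk$, with $\Dd$-witness the constant function returning $\bar p$ (well defined by Lemma~\ref{lem:unicity_of_left_projection} and Definition~\ref{df:bar}). For the projections, applying $\pi_i$ to its first argument $p_1$ yields the constant function returning $p_1$ when $i=1$, and the shorter projection onto $\a_2\to\dots\to\a_n\to\a_i$ when $i>1$; both belong to $\Kk$ by the previous item or by induction, so $\pi_i$ maps $\Kk_{\a_1}$ into the correct restricted space. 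The $\Dd$-witness is the analogous $\Dd$-projection, and the relation holds because a projection sends related inputs $(p_i,e_i)$ to the related pair $(p_i,e_i)$.

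The genuinely load-bearing case is the third. Writing $S(f,g)=\l\vec p.\,f\vec p\,(g\vec p)$, the key identity is $S(f,g)(p_1)=S(f p_1,\,g p_1)$: applying the first argument strictly decreases $n$. Since $\Kk_{\a_1\to\dots}$ consists of functions \emph{into} $\Kk$, we have $f p_1\in\Kk$ and $g p_1\in\Kk$, and the induction hypothesis gives $S(f p_1,g p_1)\in\Kk_{\a_2\to\dots\to\g}$; hence $S(f,g)$ is a monotone function from $\Kk_{\a_1}$ into the right restricted space, and iterating shows every partial application lands in $\Kk$. For the witness I take $S(\bar f,\bar g)\in\Dd_{\vec\a\to\g}$, which exists because $\Dd$ is full. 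To verify the relation I feed related arguments $(p_j,e_j)\in\Ll_{\a_j}$: the preservation property yields $(f\vec p,\bar f\vec e)\in\Ll_{\b\to\g}$ and $(g\vec p,\bar g\vec e)\in\Ll_\b$, whence the defining clause of $\Ll_{\b\to\g}$ gives $\big((f\vec p)(g\vec p),\,(\bar f\vec e)(\bar g\vec e)\big)\in\Ll_\g$, i.e.\ $\big(S(f,g)\vec p,\,S(\bar f,\bar g)\vec e\big)\in\Ll_\g$. By the equivalence noted in the first paragraph, this is exactly $\big(S(f,g),S(\bar f,\bar g)\big)\in\Ll_{\vec\a\to\g}$, so $S(f,g)\in\Kk$.

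I expect the main obstacle to be conceptual bookkeeping rather than computation: the subtle point is that membership in $\Kk$ at a function type requires the function to land \emph{inside} the restricted spaces $\Kk_\b$ at every level of currying, not merely that some $\Dd$-witness exist. The $\Dd$-witness is essentially free, because $\Dd$ is the full monotone model; all the real content is the closure of $\Kk$ under these operations, and this is precisely what the induction on $n$, combined with the application-preservation property of $\Ll$, delivers.
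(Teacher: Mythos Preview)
Your proposal is correct and follows essentially the same approach as the paper: the paper's proof also dispatches the first item via Lemma~\ref{lem:constant_functions_are_in_Kk}, calls the second item no harder, and handles the third by the very ``equivalence'' you isolate---namely that a curried monotone $f$ together with a $\Dd$-side $g$ satisfying $(f(d_1,\dots,d_n),g(\bar d_1,\dots,\bar d_n))\in\Ll_\b$ for all $d_i\in\Kk_{\a_i}$ forces $f\in\Kk_{\vec\a\to\b}$ and $(f,g)\in\Ll_{\vec\a\to\b}$. The only difference is presentational: you make the induction on $n$ explicit (and separately verify that intermediate partial applications land in $\Kk$), whereas the paper packages this into a single observation and invokes Proposition~\ref{prop:uparrow_basics} and the definition of the model without spelling out the induction.
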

\begin{proof}
  The first item of the lemma is given by
  Lemma~\ref{lem:constant_functions_are_in_Kk}, the second does not
  present more difficulty.  Finally, the third proceeds by a direct
  examination once we observe the following property of
  $\Kk(Q,Q_\W,\r)$.  Given two elements $f$ of
  $\mon{\Kk_{\a_1}\to\cdots \to \mon{\Kk_{\a_n}\to \Kk_{\b}}}$ and $g$
  of $\Dd_{\a_1\to\cdots\to\a_n\to\b}$, if for every $d_1$, \ldots,
  $d_n$ in $\Kk_{\a_1}$, \ldots, $\Kk_{\a_n}$,
%  \begin{equation*}
 $   (f(d_1,\ldots, d_n),g(\overline{d_1},\ldots,\overline{d_n}))\in \Ll_{\b}$
%  \end{equation*}
  then $f$ is in $\Kk_{\a_1\to\cdots\to\a_n\to\b}$ and $(f,g)$ is in
  $\Ll_{\a_1\to\cdots\to\a_n\to\b}$.
  This observation follows directly from
  Proposition~\ref{prop:uparrow_basics} and the definition of the
  model. 
\end{proof}
The above lemma allows us to define the interpretation of terms in the
usual way:
\begin{itemize}
\item $\sem{Y^{(\b\to\b)\to\b}}^{\val}_{\Kk} =  \Fix_\b$
\item $\sem{a}_{\Kk}^{\val} =  \rho(a)$
\item $\sem{x^{\a}}^{\val}_{\Kk} = \val(x)$
\item $\sem{\w^\b}_{\Kk}^\val=\ttop_\b$
\item $\sem{\W^\b}_{\Kk}^\val=\bbot_\b$
\item $\sem{MN}_{\Kk}^{\val}=\sem{M}_{\Kk}^\val(\sem{N}^\val_{\Kk})$
\item $\sem{\l x^\a.M}_{\Kk}^\val(a) = \sem{M}_{\Kk}^{[\val[a/x]]}$,
  for every $a\in \Kk_\a$.
\end{itemize}
We need to check that for every valuation $\val$ and every term $M$ of type
$\a$, $\sem{M}^\val_{\Kk}$ is indeed in $\Kk_\a$. For this we take a
list of variables $x_1^{\a_1}$, \dots, $x_n^{\a_n}$ containing all
free varaibles of $M$, and we show that the function $\l p_1\dots
p_n. \sem{M}^{[p_1/x_1,\dots,p _n/x_n]}_\Kk$ is in
$\Kk_{\a_1\to\cdots\to\a_n\to\a}$.  The proof is a simple induction on
the structure of $M$. Lemma~\ref{lem:fixpoint_overline} and
Lemma~\ref{lem:combinatorial_completeness_of_Kk} ensure that this is
the case when $M=Y$. For the
other constants, $a$, $\w$ and $\W$, we use the fact that constant
functions are in the model. 
The remaining  cases are handled by 
Lemma~\ref{lem:combinatorial_completeness_of_Kk}: variable and
application clauses use $K$ and $S$ combinators respectively.

These observations allow us to conclude that $\Kk(Q,Q_\W,\r)$ is indeed a
model of the $\l Y$-calculus, that is:
  \begin{enumerate}
  \item for every term $M$ of type $\a$ and every valuation $\val$
    ranging of the free variables of $M$, $\sem{M}^\val_\Kk$ is in
    $\Kk_\a$,
  \item given two terms $M$ and $N$ of type $\a$, if $M=_{\b\d} N$,
    then for every valuation $\val$, $\sem{M}^\val_\Kk = \sem{N}^\val_\Kk$.
  \end{enumerate}

\begin{thm}\label{thm:Kmodel}
  For every finite set $Q$ and every set $Q_\W\incl Q$ the model
  $\Kk(Q,Q_\W,\r)$ as in Definition~\ref{def:Kmodel} is a model of the
  $\l Y$-calculus.
\end{thm}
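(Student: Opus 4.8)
The plan is to verify the two conditions stated just above the theorem: (1) that $\sem{M}^\val_\Kk$ always lands in $\Kk_\a$, and (2) that the interpretation is invariant under $\b\d$-conversion. Both reduce to results already established. For (1) the essential content is the combinatory completeness recorded in Lemma~\ref{lem:combinatorial_completeness_of_Kk} together with the fact that the fixpoint operator lives in the model (Lemma~\ref{lem:fixpoint_overline}); for (2) I need the standard substitution lemma for $\b$ and, for $\d$, the observation that $\Fix_\a(f)$ is a genuine fixpoint of $f$. I would isolate this last point as the only genuinely new step.

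For condition (1) I would prove the strengthened statement: for every term $M$ of type $\a$ whose free variables lie among $x_1^{\a_1},\dots,x_n^{\a_n}$, the function $\lambda p_1\dots p_n.\,\sem{M}^{[p_1/x_1,\dots,p_n/x_n]}_\Kk$ belongs to $\Kk_{\a_1\to\cdots\to\a_n\to\a}$, by induction on $M$. A variable $x_i$ gives a projection and a constant gives a constant function, both handled by Lemma~\ref{lem:combinatorial_completeness_of_Kk}; the constant $Y$ is covered by Lemma~\ref{lem:fixpoint_overline}, and $\w$, $\W$ by constant functions. An application $PQ$ is handled by the third (the $S$-like) item of Lemma~\ref{lem:combinatorial_completeness_of_Kk} applied to the abstractions of $P$ and $Q$ supplied by the induction hypothesis. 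The abstraction case $M=\lambda y.P$ is immediate: because the strengthened statement already abstracts over all free variables, the function for $\lambda y.P$ is, up to currying, the very function the induction hypothesis provides for $P$. Specialising to a concrete valuation $\val$ then gives $\sem{M}^\val_\Kk\in\Kk_\a$.

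For the $\b$ part of condition (2) I would first prove the substitution lemma $\sem{M[N/x]}^\val_\Kk = \sem{M}^{\val[\sem{N}^\val_\Kk/x]}_\Kk$ by a routine induction on $M$, and then conclude $\sem{(\lambda x.M)N}^\val_\Kk = \sem{\lambda x.M}^\val_\Kk(\sem{N}^\val_\Kk) = \sem{M}^{\val[\sem{N}^\val_\Kk/x]}_\Kk = \sem{M[N/x]}^\val_\Kk$ directly from the defining clause of $\sem{\lambda x.M}$; since both sides of a $\b$-step have equal semantics and the interpretation is compositional, $\b$-convertible terms are identified.

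The $\d$ part is where the argument needs care, and I expect it to be the main obstacle. I must show $\Fix_\a(f) = f(\Fix_\a(f))$ for every $f\in\Kk_{\a\to\a}$, which yields $\sem{YM}^\val_\Kk = \Fix_\a(\sem{M}^\val_\Kk) = \sem{M}^\val_\Kk(\Fix_\a(\sem{M}^\val_\Kk)) = \sem{M(YM)}^\val_\Kk$. Writing $g=\fix_\a(\overline f)^\uparrow$, Lemma~\ref{lem:fixpoint_definition} gives $f(g)\le g$, so by monotonicity of $f$ the chain $(f^n(g))_{n\in\mathbb{N}}$ is decreasing. Finiteness of $\Kk_\a$ then forces it to stabilise: there is $N$ with $f^N(g)=f^{N+1}(g)$, whence
\[
\Fix_\a(f)=\bigwedge_{n\in\mathbb{N}}f^n(g)=f^N(g),\qquad f(\Fix_\a(f))=f^{N+1}(g)=f^N(g)=\Fix_\a(f).
\]
Thus $\Fix_\a$ is a fixpoint operator, $\d$-steps preserve meaning, and combining the $\b$ and $\d$ cases establishes condition (2). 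Together with (1) this shows that $\Kk(Q,Q_\W,\r)$ is a model of the $\l Y$-calculus.
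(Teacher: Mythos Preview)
Your proposal is correct and matches the paper's approach essentially verbatim: the paper proves condition~(1) by the same strengthened statement and structural induction you describe, invoking Lemmas~\ref{lem:combinatorial_completeness_of_Kk} and~\ref{lem:fixpoint_overline} for exactly the same cases. For condition~(2) the paper is in fact terser than you---it simply asserts soundness under $\beta\delta$-conversion without spelling out the substitution lemma or the fact that $\Fix_\a(f)=f(\Fix_\a(f))$---so your stabilisation argument from finiteness and Lemma~\ref{lem:fixpoint_definition} is precisely the detail a careful reader would supply, and it is correct.
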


Let us mention the following useful fact showing a
correspondence between the meanings of a term in $\Kk$ and in
$\Dd$. The proof is immediate since $\set{\Ll_\a}_{\a\in\Tt}$ is a
logical relation (cf~\cite{amadio98:_domain_lambd_calcul}).

\begin{lem}\label{lemma:K and D semantics}
  For every type $\a$ and closed term $M$ of type $\a$:
  $$(\sem{M}_\Kk,\sem{M}_\Dd)\in\Ll_\a\,.$$
\end{lem}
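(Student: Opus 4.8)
The plan is to prove Lemma~\ref{lemma:K and D semantics} by establishing the more general statement for open terms and then specializing to closed terms. Concretely, I would show by induction on the structure of $M$ that for every term $M$ of type $\a$ and every pair of valuations $\val_\Kk$, $\val_\Dd$ that are \emph{$\Ll$-related at every free variable} (meaning $(\val_\Kk(x),\val_\Dd(x))\in\Ll_{\g}$ for each free variable $x^\g$ of $M$), we have $(\sem{M}^{\val_\Kk}_\Kk,\sem{M}^{\val_\Dd}_\Dd)\in\Ll_\a$. The closed case of the lemma then follows immediately since there are no free variables to constrain. This is the standard ``fundamental lemma of logical relations'', and the only work is to verify that the base cases—the constants and fixpoint—respect $\Ll$, since the inductive cases are forced by the very definition of $\Ll_{\a\to\b}$.

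First I would dispatch the structural cases. For a variable $x^\g$ the conclusion is exactly the hypothesis on the valuations. For an application $MN$, the induction hypothesis gives $(\sem{M}_\Kk,\sem{M}_\Dd)\in\Ll_{\a\to\b}$ and $(\sem{N}_\Kk,\sem{N}_\Dd)\in\Ll_\a$, and then clause~(4) of Definition~\ref{def:K}—the defining property of $\Ll_{\a\to\b}$—yields $(\sem{MN}_\Kk,\sem{MN}_\Dd)\in\Ll_\b$ directly. For an abstraction $\l x^\a.M$ of type $\a\to\b$, I unfold $\Ll_{\a\to\b}$: given any $(g,e)\in\Ll_\a$, I extend both valuations by $g$ and $e$ respectively; these extended valuations remain $\Ll$-related at all free variables of $M$, so the induction hypothesis applied to $M$ gives $(\sem{M}^{\val_\Kk[g/x]}_\Kk,\sem{M}^{\val_\Dd[e/x]}_\Dd)\in\Ll_\b$, which is precisely what the definition of $\Ll_{\a\to\b}$ demands.

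Next come the constants. For a tree-signature constant $a$, I must check $(\r_\Kk(a),\r_\Dd(a))\in\Ll_\a$. Since both interpretations are the greatest element of the respective type (the top of $\Kk_\a$ and $\top_\a$ of $\Dd_\a$), this is exactly the remark following Lemma~\ref{lem:constant_functions_are_in_Kk} that $(\ttop_\a,\top_\a)\in\Ll_\a$. For $\w^\a$, interpreted as $\ttop_\a$ in $\Kk$ and as $\bot_\a$ in $\Dd$, I need $(\ttop_\a,\bot_\a)\in\Ll_\a$; this can be unfolded directly from the definition, as the top of $\Kk$ sends every argument to the top and the constant-$\bot$ function matches this through $\Ll$. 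For $\W^\a$, interpreted as $\bbot_\a$ in $\Kk$ and as $\bot_\a$ in $\Dd$, I need $(\bbot_\a,\bot_\a)\in\Ll_\a$, which follows by an easy induction on types from the defining clauses $\bbot_0=(\bot,Q_\W)$, $\Ll_0=\set{((d,P),d)}$, and the recursive definitions of $\bbot$ and $\Ll$.

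The only genuinely substantive base case is the fixpoint $Y^{(\a\to\a)\to\a}$, interpreted as $\Fix_\a$ in $\Kk$ and as $\fix_\a$ in $\Dd$; here I need $(\Fix_\a,\fix_\a)\in\Ll_{(\a\to\a)\to\a}$. This is precisely the content of Lemma~\ref{lem:fixpoint_overline}, so I simply invoke it. I expect this to be the main conceptual obstacle, but all the difficulty has already been absorbed into the earlier development—the careful definition of $\Fix_\a$ as the decreasing meet $\bigwedge_n f^n(\fix_\a(\overline f)^\uparrow)$ and the proof that it stays $\Ll$-related to $\fix_\a$ using finiteness of $\Kk_\a$—so for the present lemma it reduces to a citation. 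With the base cases verified and the inductive cases forced by $\Ll$ being a logical relation, the induction closes and specializing to the empty set of free variables gives the stated result.
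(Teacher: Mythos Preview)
Your overall plan—the fundamental lemma for the logical relation $\Ll$, reducing to the base cases for constants and citing Lemma~\ref{lem:fixpoint_overline} for $Y$—is exactly what the paper intends by saying the proof is immediate because $\Ll$ is a logical relation. However, two of your constant cases are wrong.

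For the tree-signature constants you assert that the interpretation in $\Kk$ is the top element $\ttop_\a$. That is not what Definition~\ref{def:Kmodel} says: it constrains only $\w$, $\W$, and $Y$, and the concrete interpretation given on page~\pageref{sec:corr-compl-model-1} is \emph{not} top (for a nullary constant $c$ it is $(\top,\{q:\d(q,c)=\ttrue\})$, generally not $(\top,Q)$). What you actually need is $(\r_\Kk(a),\top_\a)\in\Ll_\a$; for the concrete interpretations this is precisely the unnamed lemma immediately following that definition, so the repair is to cite that instead of the remark about $(\ttop_\a,\top_\a)$.

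The $\w$ case is a genuine gap: you claim $(\ttop_\a,\bot_\a)\in\Ll_\a$, but already at type $0$ this fails, since $\ttop_0=(\top,Q)$ and $\Ll_0=\{((d,P),d):(d,P)\in\Kk_0\}$, so $((\top,Q),\bot)\notin\Ll_0$. The interpretations of $\w$ in $\Kk$ (top) and in $\Dd$ (bottom) are \emph{not} $\Ll$-related, and hence $\Ll$ is not a logical relation for the full signature including $\w$. The lemma is meant to be read for terms without $\w$, consistently with Theorem~\ref{thm:D_and_convergence} and with its only use in Proposition~\ref{prop:observing_convergence_in_Kk}; drop the $\w$ case rather than trying to prove a false statement.
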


%%% Local Variables: 
%%% mode: latex
%%% TeX-master: "m"
%%% End: 

\subsection{Correctness and completeness of the
  model}\label{sec:corr-compl-model}

It remains to show that the model we have constructed is indeed
sufficient to recognize languages of TAC automata. For the rest of the
section we fix a tree signature $\S$ and a TAC automaton
\begin{equation*}
  \Aa=\struct{Q,\S,q^0\in Q,\d_1:Q\times \S_1\to
    \set{\ffalse,\ttrue},\d_2: Q\times \S_2 \to \Pp(Q^2)}\ .
\end{equation*}  

We take a model $\Kk$ based on $\Kk(Q,Q_\W,\r)$ as in
Definition~\ref{def:Kmodel}, where $Q_\W$ is the set of states $q$
such that $\d(q,\W)=\ttrue$. It remains to specify the meaning of
constants like $c:0$ or $a:0^2\to 0$ in $\S$\label{sec:corr-compl-model-1}:
\begin{align*}
  \r(c)=&(\top,\set{q : \d(q,c)=\ttrue})\\
  \r(a)(d_1,R_1)(d_2, R_2) =& (\top,R)\qquad \text{where
    $d_1,d_2\in\set{\bot,\top}$ and }\\
&  \quad R = \{q \in
Q\mid \d(q,a)\cap R_1\times R_2\not=\es\}\ .
\end{align*}

\begin{lem}
  For every $a$ in $\S$ of type $o^2\to o$: $\r(a)$ is in
  $\Kk_{o^2\to o}$ and $(\r(a),\top_{o^2\to o})$ is in
  $\mathcal{L}_{o^2\to o}$.
\end{lem}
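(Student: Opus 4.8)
The statement I need to prove is that $\r(a)$, the interpretation of a binary tree constant, lives in $\Kk_{o^2\to o}$ and that the pair $(\r(a),\top_{o^2\to o})$ belongs to $\Ll_{o^2\to o}$. The natural tool is the closure property isolated at the end of Lemma~\ref{lem:combinatorial_completeness_of_Kk}: to show a function $f$ is in $\Kk_{\a_1\to\cdots\to\a_n\to\b}$ together with a companion $g\in\Dd$, it suffices to verify that $(f(d_1,\dots,d_n),g(\overline{d_1},\dots,\overline{d_n}))\in\Ll_\b$ for all arguments $d_i\in\Kk_{\a_i}$. So the plan is to instantiate this with $f=\r(a)$, with the two argument types both equal to $0$, with target type $0$, and with $g=\top_{o^2\to o}$. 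Since $\overline{\top_0}=\top_0$ and the top function sends everything to $\top_0$, the companion value $g(\overline{d_1},\overline{d_2})$ is simply $\top_0\in\Dd_0$.

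**The verification.** Each argument $(d_i,R_i)\in\Kk_0$ has left coordinate $d_i\in\{\bot,\top\}$, and by the definition of $\Ll_0$ the $\Dd$-projection $\overline{(d_i,R_i)}$ is exactly $d_i$. Reading off the definition of $\r(a)$ from page~\pageref{sec:corr-compl-model-1}, its left coordinate is always $\top$, regardless of the inputs; that is, $\r(a)((d_1,R_1),(d_2,R_2))=(\top,R)$ for the appropriate $R\subseteq Q$. Thus the element $\r(a)((d_1,R_1),(d_2,R_2))$ has left coordinate $\top$, so by the definition of $\Ll_0$ we have $(\r(a)((d_1,R_1),(d_2,R_2)),\top)\in\Ll_0$. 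On the other side, $\top_{o^2\to o}(\overline{(d_1,R_1)},\overline{(d_2,R_2)})=\top_0=\top$. Hence the required pair $(\r(a)(\cdots),\top_{o^2\to o}(\cdots))$ is exactly $((\top,R),\top)$, which lies in $\Ll_0$. Before invoking the closure lemma I also need monotonicity of $\r(a)$ as a function $\Kk_0\times\Kk_0\to\Kk_0$: if $(d_i,R_i)\leq(d_i',R_i')$ then $R_i\subseteq R_i'$, so $\d(q,a)\cap R_1\times R_2\subseteq\d(q,a)\cap R_1'\times R_2'$, whence the output set grows and the left coordinate stays $\top$, giving monotonicity.

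**The main obstacle.** There is no serious obstacle here; the argument is almost entirely a bookkeeping exercise once the closure property of Lemma~\ref{lem:combinatorial_completeness_of_Kk} is in hand. The only point requiring a moment's care is making sure the arguments to $\r(a)$ genuinely range over all of $\Kk_0$ and not merely over the image of $(\cdot)^\uparrow$: the $\bot$-row of $\Kk_0$ contains only the single element $(\bot,Q_\W)$, so when a left coordinate is $\bot$ the set $R_i$ is forced to be $Q_\W$, but this causes no difficulty because the left coordinate of the output is $\top$ unconditionally and so membership in $\Ll_0$ never hinges on the value of $R$. I would therefore simply remark that $\r(a)$ is well defined and monotone on $\Kk_0\times\Kk_0$, observe that its first coordinate is constantly $\top$, and conclude by the closure property with the constant companion $g=\top_{o^2\to o}$.
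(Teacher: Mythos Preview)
Your proposal is correct and follows essentially the same approach as the paper: verify that $\r(a)$ is monotone and take $\top_{o^2\to o}$ as the witnessing element in $\Dd$, so that the first coordinate of $\r(a)((d_1,R_1),(d_2,R_2))$ being unconditionally $\top$ yields membership in $\Ll_0$. The paper's proof is simply a two-sentence sketch of exactly this, whereas you spell it out via the closure observation from the proof of Lemma~\ref{lem:combinatorial_completeness_of_Kk}; this is the same idea made explicit.
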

\begin{proof}
  It is easy to see that $\r(a)$ is monotone. For the membership in
  $\Kk$ the witnessing function from $\Dd_{o^2\to o}$ is
  $\top_{0^2\to 0}$.
\end{proof}

Once we know that $\Kk$ is a model we can state some of its useful
properties. The first one tells what the meaning of unsolvable terms
is. The second indicates how unsolvability is taken into account in the
computation of a fixpoint.
\begin{prop}\label{prop:observing_convergence_in_Kk}
  Given a closed term $M$ of type $0$: $BT(M) = \W^0$ iff
  $\sem{M}_{\Kk} = (\bot,Q_\W)$.
\end{prop}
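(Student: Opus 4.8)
The plan is to transfer the question to the auxiliary model $\Dd$, where the characterization of divergence is already available as Theorem~\ref{thm:D_and_convergence}, and to use the logical relation $\Ll$ to bridge the two models. The whole argument hinges on a single observation: at the base type, the first component of $\sem{M}_\Kk$ is forced to coincide with $\sem{M}_\Dd$. Since both sides of the desired equivalence are then pinned down by the value $\sem{M}_\Dd\in\Dd_0$, I would establish the biconditional as a chain of equivalences rather than proving the two directions separately.

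First I would invoke Lemma~\ref{lemma:K and D semantics}, which gives $(\sem{M}_\Kk,\sem{M}_\Dd)\in\Ll_0$. Writing $\sem{M}_\Kk=(d,P)$ for some $d\in\Dd_0$ and $P\subseteq Q$, the definition $\Ll_0=\set{((d,P),d)\mid (d,P)\in\Kk_0}$ tells us that the $\Dd$-component of a related pair is exactly the first coordinate of its $\Kk$-component; hence $\sem{M}_\Dd=d$. So the first component of $\sem{M}_\Kk$ is precisely $\sem{M}_\Dd$. I would then combine this with Theorem~\ref{thm:D_and_convergence}, which says $BT(M)=\W$ iff $\sem{M}_\Dd=\bot$, to conclude that $BT(M)=\W^0$ iff the first component of $\sem{M}_\Kk$ equals $\bot$.

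To finish, I use the explicit description of the base set, $\Kk_0=\set{(\top,P)\mid P\subseteq Q}\cup\set{(\bot,Q_\W)}$: the unique element of $\Kk_0$ whose first component is $\bot$ is $(\bot,Q_\W)$. Consequently the first component of $\sem{M}_\Kk$ is $\bot$ iff $\sem{M}_\Kk=(\bot,Q_\W)$, and chaining these equivalences yields the claim in both directions at once.

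The argument is short and I expect no serious obstacle, the computation being entirely absorbed into the two cited results. The only point requiring care is that both Theorem~\ref{thm:D_and_convergence} and Lemma~\ref{lemma:K and D semantics} apply here precisely because the terms under consideration do not contain the auxiliary constant $\w$: its interpretations $\ttop$ in $\Kk$ and $\bot$ in $\Dd$ are not $\Ll$-related, whereas $\W$ poses no difficulty since $(\bbot_\a,\bot_\a)\in\Ll_\a$. Restricting to $\w$-free closed terms of type $0$, which is exactly the setting of the statement, both results apply verbatim and the equivalence follows.
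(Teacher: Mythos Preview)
Your proposal is correct and follows essentially the same route as the paper's own proof: both invoke Lemma~\ref{lemma:K and D semantics} to relate $\sem{M}_\Kk$ and $\sem{M}_\Dd$ via $\Ll_0$, appeal to Theorem~\ref{thm:D_and_convergence} for the characterisation of divergence in $\Dd$, and use the explicit description of $\Kk_0$ to see that $(\bot,Q_\W)$ is the only element paired with $\bot$. The only difference is presentational---you package the argument as a single chain of equivalences, whereas the paper proves the two implications separately---and your closing remark about $\w$-freeness is an extra caution the paper leaves implicit.
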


\begin{proof}
  If $\sem{M}_{\Kk}= (\bot,Q_\W)$ then Lemma~\ref{lemma:K and D
    semantics} gives us $\sem{M}_{\Dd} = \bot$. By
  Theorem~\ref{thm:D_and_convergence} this implies $BT(M) = \W^0$.

  If $BT(M) = \W^0$ then Theorem~\ref{thm:D_and_convergence} entails
  that $\sem{M}_{\Dd} = \bot$. By Lemma~\ref{lemma:K and D semantics}
  $(\sem{M}_{\Kk}, \bot)$ is in $\Ll_0$. But this is possible only if
  $\sem{M}_{\Kk} = (\bot,Q_\W)$.
\end{proof}

\begin{lem}\label{lemma:special-fix}
  Given a type $\b = \b_1\to\dots\to\b_l\to 0$, a sequence of types
  $\vec{\a} = \a_1,\dots,\a_k$, and a function $f\in \Kk_{\vec{\a}\to \b\to\b}$,
  consider the functions:
  \begin{equation*}
    h =\l p_1\dots p_k.\left( \fix_\b(\overline{ f (p_1)\dots
       ( p_k)})\right)\promote\qquad
    g = \l e_1\dots e_k.\fix_\b(\bar f (e_1)\dots (e_k))
  \end{equation*}
  that are respectively in
  $\mon{\Kk_{\a_1}\to\dots \to\mon{\Kk_{\a_{k}}\to\Kk_\b}}$ and in $\Dd_{\vec{\a}
    \to\b}$.  Then $h$ is in $\Kk_{\vec \a\to\b}$ and
  $(h,g)$ is in $\Ll_{\vec \a \to \b}$.  Moreover,  for every
  $p_1\in \Kk_{\a_1}$, \ldots, $p_k\in\Kk_{\a_k}$,
  $q_1\in\Kk_{\b_1}$,\ldots, $q_l\in \Kk_{\b_l}$ we have
  \begin{equation*}
   h (p_1,\dots ,p_k) (q_1,\dots , p_l)=
    \begin{cases}
      (\bot, Q_\W) & \text{if $g (\bar p_1,\dots, \bar p_k) (\bar q_1,\dots,\bar
        q_l)=\bot$}\\
      (\top ,Q) & \text {if $g(\bar p_1,\dots, \bar p_k)(\bar
        q_1,\dots,\bar q_l)=\top$}\ .
    \end{cases}
  \end{equation*}
\end{lem}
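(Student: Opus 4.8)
The plan is to reduce everything to a single identity for the values of $h$ and then to read off all three conclusions from the Galois-connection properties collected in Proposition~\ref{prop:uparrow_basics}. First I would fix arbitrary arguments $p_1\in\Kk_{\a_1},\dots,p_k\in\Kk_{\a_k}$ and set $d=\overline{f(p_1)\dots(p_k)}\in\Dd_{\b\to\b}$, so that by definition $h(p_1,\dots,p_k)=\fix_\b(d)\promote$. Iterating Lemma~\ref{lem:commutation_overline_application} gives $d=\bar f(\bar p_1)\dots(\bar p_k)$, whence $\fix_\b(d)=g(\bar p_1,\dots,\bar p_k)$. The monotonicity of $h$ and $g$ asserted in the statement follows from the monotonicity of $f$, of $\fix_\b$, of $\overline{(\cdot)}$ by Lemma~\ref{lem:overline_ordering_lowering}, and of $(\cdot)\promote$ by Proposition~\ref{prop:uparrow_basics}(2), so I may treat $h$ and $g$ as genuine elements of the ambient monotone-function spaces.

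For the membership $h\in\Kk_{\vec\a\to\b}$ and $(h,g)\in\Ll_{\vec\a\to\b}$ I would invoke the criterion established inside the proof of Lemma~\ref{lem:combinatorial_completeness_of_Kk}: it suffices to check that $\bigl(h(p_1,\dots,p_k),\,g(\bar p_1,\dots,\bar p_k)\bigr)\in\Ll_\b$ for every choice of the $p_i$. By the previous paragraph this pair is precisely $\bigl(\fix_\b(d)\promote,\ \fix_\b(d)\bigr)$, which lies in $\Ll_\b$ by Proposition~\ref{prop:uparrow_basics}(1). This settles the first two assertions simultaneously.

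For the final computation I would push the arguments $q_1,\dots,q_l$ through the embedding. Writing $w=\fix_\b(d)\in\Dd_\b$, each $q_j$ is related to its projection $\bar q_j$, so by repeated use of Proposition~\ref{prop:uparrow_basics}(4), the commutation of $(\cdot)\promote$ with application, I obtain $h(p_1,\dots,p_k)(q_1,\dots,q_l)=w\promote(q_1)\dots(q_l)=\bigl(w(\bar q_1)\dots(\bar q_l)\bigr)\promote$. Since $\b=\b_1\to\dots\to\b_l\to 0$, the element $w(\bar q_1)\dots(\bar q_l)$ lies in $\Dd_0=\set{\bot,\top}$, and by the first paragraph it equals $g(\bar p_1,\dots,\bar p_k)(\bar q_1,\dots,\bar q_l)$. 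The base clause of Definition~\ref{df:uparrow} then yields $\bot\promote=(\bot,Q_\W)$ and $\top\promote=(\top,Q)$, which is exactly the claimed case distinction.

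The step I expect to demand the most care is not conceptually deep but bookkeeping-heavy: justifying the two iterations, namely of Lemma~\ref{lem:commutation_overline_application} to rewrite $d$ and of Proposition~\ref{prop:uparrow_basics}(4) to extract the $q_j$, since each application must be performed with the correctly paired elements of the logical relation $\Ll$, and one must track the residual types $\b_{j+1}\to\dots\to 0$ at each stage to ensure that the final argument indeed lands in $\Dd_0$, where the base-case definition of $(\cdot)\promote$ applies.
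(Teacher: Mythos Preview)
Your proposal is correct and follows essentially the same route as the paper: both reduce the membership claim to the pointwise criterion from Lemma~\ref{lem:combinatorial_completeness_of_Kk}, identify the pair $\bigl(h(p_1,\dots,p_k),g(\bar p_1,\dots,\bar p_k)\bigr)$ with $\bigl(\fix_\b(d)\promote,\fix_\b(d)\bigr)$ via Lemma~\ref{lem:commutation_overline_application}, and then obtain the final case distinction by iterating Proposition~\ref{prop:uparrow_basics}(4) down to the base type and applying the base clause of Definition~\ref{df:uparrow}. Your write-up is in fact slightly more streamlined than the paper's, which phrases the last step as a separate claim about arbitrary $r\in\Dd_{\g_1\to\dots\to\g_n\to 0}$ before specializing to $r=g(\bar p_1,\dots,\bar p_k)$.
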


\begin{proof}
  To prove that $(h,g)$ is in $\Ll_{\vec \a\to \b}$, we resort to the
  remark we made in the proof of
  Lemma~\ref{lem:combinatorial_completeness_of_Kk}, so that it
  suffices to show that for every $p_1$, \ldots, $p_k$ respectively in
  $\Kk_{\a_1}$, \ldots, $\Kk_{\a_k}$, $(h(p_1,\ldots,p_k), g(\bar p_1,
  \dots ,\bar p_k))$ is in $\Ll_{\b}$.  We have that $h (p_1,\dots ,p_k) =
  \left( \fix_\b(\overline{ f (p_1,\dots,p_k)})\right)\promote$ that is
  in $\Kk_\b$, and then
  \begin{eqnarray*}
    \overline{h (p_1,\dots ,p_k)} &=& \overline{\left( \fix_\a(\overline{ f
          (p_1,\dots ,p_k)})\right)\promote}\\
    &=&\fix_\a(\overline{ f  (p_1,\dots ,p_k)})\\
    &=& \fix_\a(\bar f  (\bar p_1,\dots, \bar p_k)) \text{ by successive use of
    Lemma~\ref{lem:commutation_overline_application}}\\
  &=& g(\overline{p_1}, \dots, \bar p_k)\ .\\
\end{eqnarray*}
This shows that $(h,g)$ is in $\Ll_{\vec{\a}\to\b}$ and thus $h$ is in
$\Kk_{\vec{\a}\to\b}$.

So as to complete the proof of the lemma, we first prove the following
claim: for every for $r$ in $\Dd_{\g_1\to\dots\to\g_n\to 0}$, and
$q_1$, \dots, $q_n$ in $\Kk_{\g_1}$, \dots, $\Kk_{\g_n}$ we
have that:
\begin{itemize}
\item  $r^\uparrow(q_1,\dots, q_n) =(\bot,Q_\W)$ iff
  $(r(\overline{q_1},\dots,\overline{q_n}))^\uparrow = (\bot,Q_\W)$,
\item  $r^\uparrow(q_1,\dots, q_n) =(\top,Q)$ iff
  $(r(\overline{q_1},\dots,\overline{q_n}))^\uparrow = (\top,Q)$.
\end{itemize}

We first remark that, given $r$ in $\Dd_{\g\to\d}$, from the fourth
item of Proposition~\ref{prop:uparrow_basics}, we have that whenever
$(q,e)$ is in $\Ll_{\g}$, then $r^\uparrow(q) = (r(e))^\uparrow$, so
that in particular $r^\uparrow(q) = (r(\overline{q}))^\uparrow$.  A
simple induction shows then that, for $r$ in
$\Dd_{\g_1\to\dots\to\g_n\to\d}$, $$r^\uparrow(q_1,\dots, q_n) =
(r(\overline{q_1},\dots,\overline{q_n}))^\uparrow\ .$$ Therefore if $\d =
0$ and $r(\overline{q_1},\dots,\overline{q_n}) = \bot$, we have
$(r(\overline{q_1},\dots,\overline{q_n}))^\uparrow =
(\bot,Q_\W)$. Moreover, in case
$r(\overline{q_1},\dots,\overline{q_n}) = \top$, we have
$(r(\overline{q_1},\dots,\overline{q_n}))^\uparrow = (\top,Q)$. 

Now, the
lemma follows from choosing $r = g (\bar p_1,\dots,\bar p_k) $ and
remarking that we have $(g(\bar p_1,\dots ,\bar p_k))\promote =
h(p_1,\ldots,p_k)$.
\end{proof}

As in the case of GFP-models the semantics of a B\"ohm tree is defined
in terms of its truncations:
$\sem{BT(M)}_\Kk=\Land\set{\sem{BT(M)\dar_n}_\Kk \mid n\in \Nat}$.  The
subtle difference is that now $\W^0$ and $\w^0$ do not have the same
meaning. Nevertheless, the analog of Proposition~\ref{prop:semantics-of-BT}
still holds in $\Kk$.

\begin{thm}\label{thm:semantics-of-BT-in-K}
  For very closed term $M$ of type $0$: $\sem{M}_\Kk=\sem{BT(M)}_\Kk$.
\end{thm}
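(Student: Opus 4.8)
The plan is to prove the two inequalities $\sem{M}_\Kk\le\sem{BT(M)}_\Kk$ and $\sem{BT(M)}_\Kk\le\sem{M}_\Kk$ separately, relying throughout on the fact that $\Kk(Q,Q_\W,\r)$ is a model of the $\l Y$-calculus (Theorem~\ref{thm:Kmodel}), so that $\sem{\cdot}_\Kk$ is invariant under $\b\d$-reduction, and on the definition $\sem{BT(M)}_\Kk=\Land\set{\sem{BT(M)\dar_n}_\Kk\mid n\in\Nat}$. Since $M$ is closed of type $0$ over a tree signature, its head reduction either reaches a head normal form $a\,N_1N_2$ with $a\in\S_2$ (and $N_1,N_2$ again closed of type $0$), reaches a nullary constant $c\in\S_0$, or does not terminate, in which case $BT(M)=\W^0$. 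These three cases drive every induction below. A recurring simplification is that $\Kk_0$ is finite, so every decreasing chain in it stabilises; in particular each sequence $\sem{BT(N)\dar_n}_\Kk$ is eventually constant, and its meet is attained at a finite stage.

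First I would treat the easy inequality $\sem{M}_\Kk\le\sem{BT(M)}_\Kk$ by showing, by induction on $n$, that $\sem{M}_\Kk\le\sem{BT(M)\dar_n}_\Kk$ for every closed type-$0$ term $M$; the claim then follows by taking the meet over $n$. For $n=0$ the right-hand side is $\sem{\w^0}_\Kk=\ttop_0$ and there is nothing to prove. For the step, if $BT(M)=\W^0$ then $BT(M)\dar_{n+1}=\W^0$ and Proposition~\ref{prop:observing_convergence_in_Kk} gives $\sem{M}_\Kk=(\bot,Q_\W)=\sem{\W^0}_\Kk$; if $M\to^*_{\b\d}c$ the two sides are literally equal by $\b\d$-invariance; and if $M\to^*_{\b\d}a\,N_1N_2$ then, again by $\b\d$-invariance, $\sem{M}_\Kk=\r(a)(\sem{N_1}_\Kk,\sem{N_2}_\Kk)$, and applying the induction hypothesis to $N_1,N_2$ together with monotonicity of $\r(a)$ yields $\sem{M}_\Kk\le\r(a)(\sem{BT(N_1)\dar_n}_\Kk,\sem{BT(N_2)\dar_n}_\Kk)=\sem{BT(M)\dar_{n+1}}_\Kk$.

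For the reverse inequality I would first reduce it to a tree recursion. Using $\b\d$-invariance one sees that $\sem{M}_\Kk$ equals $\r(c)$, equals $\r(a)(\sem{N_1}_\Kk,\sem{N_2}_\Kk)$, or equals $(\bot,Q_\W)$ in the three cases above. Using finiteness of $\Kk_0$, so that the meet defining $\sem{BT(N_i)}_\Kk$ stabilises and hence commutes with the application of $\r(a)$, one checks that $\sem{BT(M)}_\Kk$ obeys exactly the same recursion: $\sem{BT(M)}_\Kk=\r(a)(\sem{BT(N_1)}_\Kk,\sem{BT(N_2)}_\Kk)$ in the branching case and $\sem{BT(M)}_\Kk=(\bot,Q_\W)$ in the divergent case, the latter again by Proposition~\ref{prop:observing_convergence_in_Kk}. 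Thus both $M\mapsto\sem{M}_\Kk$ and $M\mapsto\sem{BT(M)}_\Kk$ satisfy the same recursion and agree on the leaf cases, and the easy direction already gives $\sem{M}_\Kk\le\sem{BT(M)}_\Kk$; it remains to rule out a strict gap on genuinely infinite convergent branches.

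Closing this gap is the main obstacle, and it is exactly where the non-standard interpretation of $Y$ enters. On an infinite branch the truncations keep a copy of $\w=\ttop$ at depth $n$, whereas the interpretation of each fixpoint, $\Fix_\a(f)=\Land_n f^n(\fix_\a(\overline f)\promote)$, iterates $f$ downward from the strictly smaller element $\fix_\a(\overline f)\promote$ rather than from $\ttop$. To reconcile the two I would relate the finite unfoldings of the fixpoints occurring in $M$ — the iterates $f^n(\fix_\a(\overline f)\promote)$ appearing in this formula — to the reading of $BT(M)$ down to a bounded depth, and use Lemma~\ref{lemma:special-fix} to pin down the base elements: at base type $\fix_\a(\overline f)\promote$ returns $(\bot,Q_\W)$ precisely at the argument tuples whose $\Dd$-projection is $\bot$, that is, at the positions that diverge and hence carry $\W=(\bot,Q_\W)$ in $BT(M)$, and it returns $(\top,Q)=\ttop_0$ at the convergent positions, where it is dominated by the genuine content revealed at greater depth. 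Threading this through the tree recursion should give $\sem{BT(M)}_\Kk\le\sem{M}_\Kk$, and combined with the easy direction it yields the theorem. The delicate bookkeeping throughout is the interaction, mediated by the logical relation $\Ll$ and the Galois connection $\overline{(\cdot)}$, $(\cdot)\promote$, between the divergence information carried in the $\Dd$-component and the set of states computed in the second component.
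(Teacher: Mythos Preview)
Your easy direction ($\sem{M}_\Kk\le\sem{BT(M)}_\Kk$) is correct and is essentially the paper's argument, repackaged as a direct induction on $n$ instead of going through the intermediate term $\ABT_l(M)$.

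For the hard direction you have correctly located the obstacle and named the right tool (Lemma~\ref{lemma:special-fix}), but the argument you sketch does not close the gap. The ``tree recursion'' observation---that $M\mapsto\sem{M}_\Kk$ and $M\mapsto\sem{BT(M)}_\Kk$ satisfy the same local equations and agree at leaves---is true but proves nothing on its own: on an infinite B\"ohm tree there is no well-founded induction available, and many assignments of values to nodes satisfy that recursion. Your last paragraph then shifts to ``finite unfoldings of the fixpoints occurring in $M$'' and proposes to match the starting element $\fix_\a(\overline f)\promote$ against nodes of $BT(M)$; but a semantic iterate $f^n(\fix_\a(\overline f)\promote)$ lives at the type $\a$ of the fixpoint, not at type $0$, there is no direct correspondence between the iteration index $n$ and any depth in $BT(M)$, and ``threading this through the tree recursion'' is not yet an argument.

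What is missing is a \emph{syntactic} intermediary that brings the fixpoint unfoldings down to type~$0$ where the node-by-node comparison with $BT(M)$ makes sense. The paper introduces, for every subterm $YN$ of $M$ with free variables $\vec x$, a fresh constant $c_N$ with $\sem{c_N}_\Kk=\l\vec p.\,(\fix_\b(\overline{\sem{N}^{[\vec p/\vec x]}}))\promote$, sets $\iterate^0(N)=c_N\vec x$, $\iterate^{i+1}(N)=N(\iterate^i(N))$, and lets $\expand^n(M)$ be $M$ with every $YN$ replaced, innermost first, by $\iterate^n(N)$. By the very definition of $\Fix$ one has $\sem{\expand^n(M)}_\Kk=\sem{M}_\Kk$ for $n$ large enough; and since $\expand^n(M)$ is $Y$-free it is strongly normalising, so $\sem{\expand^n(M)}_\Kk=\sem{BT(\expand^n(M))}_\Kk$ trivially. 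Now one can compare the finite tree $BT(\expand^n(M))$ with $BT(M)$ along each path: they agree until a leaf of the form $c_NK_1\dots K_i$ (closed, type~$0$) is reached, and there Lemma~\ref{lemma:special-fix} says the value is $(\top,Q)$ or $(\bot,Q_\W)$ according to the $\Dd$-projection. In the first case it dominates whatever $BT(M)$ has below; in the second the corresponding subtree of $BT(M)$ is $\W^0$ by Proposition~\ref{prop:observing_convergence_in_Kk}, so the values agree. This yields $\sem{BT(\expand^n(M))}_\Kk\ge\sem{BT(M)}_\Kk$, and chaining the equalities gives the missing inequality. Your proposal contains the semantic intuition behind this step but not the syntactic construction that makes the comparison at type~$0$ possible.
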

\begin{proof}
  First we show that $\sem{M}_\Kk\leq \sem{BT(M)}_\Kk$.  For this, we
  proceed with the classical finite approximation technique. We thus
  define a finite approximation of the B\"ohm tree. \emph{The Abstract
    B\"ohm tree up to depth $l$} of a term $M$, denoted $\ABT_l(M)$,
  will be a term obtained by reducing $M$ till it resembles $BT(M)$ up
  to depth $l$ as much as possible. We define it by induction:
\begin{itemize}
\item $\ABT_0(M)=M$;
\item $\ABT_{l+1}(M)$ is $M$ if $M$ does not have head normal form,\\
  otherwise it is a term $\l \vec{x}.N_0\ABT_l(N_1)\dots\ABT_l(N_k)$, where
  $\l \vec{x}.N_0N_1\dots N_k$ is the head normal form of $M$.
\end{itemize}

\noindent Since $\ABT_l(M)$ is obtained from $M$ by a sequence of
$\b\d$-reductions, $\sem{M}_\Kk=\sem{\ABT_l(M)}_\Kk$ for every $l$. We now
show that for every term $M$ and every $l$: 
\begin{equation*}
\sem{\ABT_l(M)}_\Kk\leq \sem{BT(M)\dar_l}_\Kk.  
\end{equation*}
Up to depth $l$, the two terms have the same structure as trees. We
will see that the meaning of every leaf in $\ABT_l(M)$ is not bigger
than the meaning of the corresponding leaf of $BT(M)\dar_l$. For
leaves of depth $l$ this is trivial since on the one hand we have a
term and on the other the constant $\w$. For other leaves, the terms
are either identical and thus have the same interpretation or on one
side we have a term without head normal form and on the other $\W^0$
and thus, according to
Proposition~\ref{prop:observing_convergence_in_Kk} also have the same
interpretation.

The desired inequality $\sem{M}_\Kk\leq\sem{BT(M)}_\Kk$ follows now
directly from the definition of the semantics of $\BT(M)$ since
$\sem{M}_\Kk=\sem{\ABT_l(M)}_\Kk\leq \sem{BT(M)\dar_l}_\Kk$ for every
$l\in \Nat$; and $\sem{BT(M)}_\Kk=\Land\set{\sem{BT(M)\dar_l}_\Kk \mid
l\in \Nat}$.

For the inequality in the other direction, we also use a classical
method that consists of working with finite unfoldings of the $Y$
combinators. 
Observe that if a term $M$ does not have $Y$ combinators, then
it is strongly normalizing and the theorem is trivial. So we need
be able to deal with $Y$ combinators in $M$. For this we introduce new
constants $c_{N}$ for every subterm $YN$ of $M$. The type of $c_{N}$
is $\vec{\a}\to\b$ if $\b$ is the type of $YN$ and $\vec\a = \a_1\dots
\a_k$ is the sequence of types of the sequence of free variables $\vec
x= x_1\dots x_k$ occurring in $YN$.  We let the semantics of a
constant $c_N$ be
\begin{equation*}
  \sem{c_N}_\Kk=\l\vec
p.\left(\fix_\b(\bar{\sem{N}_\Dd^{[\vec{p}/\vec{x}]}})\right)\promote\ .
\end{equation*}
First we need to check that indeed $\sem{c_N}_\Kk$ is in $\Kk$. For this
 we have prepared Lemma~\ref{lemma:special-fix}. Indeed
$\sem{c_N}_\Kk=\lambda p_1\dots p_k.\
\left(\fix_\b(\bar{f(p_1,\dots,p_k)})\right)^\uparrow$, for $f=\lambda \vec p.\
\sem{N}^{[\vec{p}/\vec{x}]}$. So $\sem{c_N}_\Kk$ is $h$ from
Lemma~\ref{lemma:special-fix} and $\sem{c_N}_\Dd=\bar{\sem{c_N}_\Kk}$
is $g$ from that lemma. The lemma additionally gives us that for every
$p_1,\dots,p_k$,$q_1,\dots,q_l$:
  \begin{equation}\label{eq:sem-of-fixpoint}
   \sem{c_N}_\Kk (p_1,\dots ,p_k) (q_1,\dots ,q_l)=
    \begin{cases}
      (\bot, Q_\W) & \text{if $\sem{c_N}_\Dd (\bar p_1,\dots,\bar p_k) (\bar q_1,\dots,\bar
        q_l)=\bot$}\\
      (\top ,Q) & \text {if $\sem{c_N}_\Dd (\bar p_1,\dots ,\bar p_k)(\bar q_1,\dots,\bar
        q_l)=\top$}\ .
    \end{cases}
  \end{equation}
\noindent We now define term $\iterate^n(N)$ for very $n\in \Nat$.
\begin{align*}
  \iterate^0(N)=& c_{N}\vec x\\
  \iterate^{n+1}(N)=&N(\iterate^n(N))\ .
\end{align*}
where $\vec x$ is the vector of variables free in $N$. Notice that
when replacing $c_N$ in $\iterate^n(N)$ by $\l \vec x. YN$ we obtain a
term that is $\b\d$-convertible to $YN$.

From the definition of the fixpoint operator in $\Kk$ and the fact
that $\Kk_\b$ is finite it follows that
$\sem{\l \vec x.\iterate^n(N)}=\sem{\l \vec x.YN}$ for some $n$.  Now
we can apply this identity to all fixpoint subterms in $M$ starting
from the innermost subterms. So the term $\expand^i(M)$ is obtained by
repeatedly replacing occurrences of subterms of the form $YN$ in $M$
by $\iterate^i(N)$ starting from the innermost occurrences. Now taking
$n$ so that for every $N$ occurring in $M$,
$\sem{\l \vec x.\iterate^n(N)}=\sem{\l \vec x.YN}$, we obtain
$\sem{M}_\Kk=\sem{\expand^n(M)}_\Kk$.

We come back to the proof. The missing inequality will be obtained
from
  \begin{equation*}
    \sem{M}_\Kk=\sem{\expand^n(M)}_\Kk=\sem{BT(\expand^n(M))}_\Kk\geq\sem{BT(M)}_\Kk\ .    
  \end{equation*}
  The first equality we have discussed above. The second is trivial
  since $\expand^n(M)$ does not have fixpoints. To finish the proof it
  remains to show 
  $\sem{BT(\expand^n(M))}_\Kk\geq\sem{BT(M)}_\Kk$.

  Let us denote $\BT(\expand^n(M))$ by $P$. So $P$ is a term of type
  $0$ in a normal form without occurrences of $Y$. For a term $K$ let
  $\tilde K$ stand for a term obtained from $K$ by simultaneously
  replacing $c_N$ by $\l \vec{x}.YN$. Because of
  Lemma~\ref{lem:fixpoint_definition}, we have $\sem{c_N}_\Kk\geq
  \sem{\l\vec x.YN}_\Kk$ which also implies that $\sem{K}_\Kk\geq
  \sem{\tilde K}_\Kk$. Moreover, as we have remarked above that
  replacing $c_N$ in $\iterate^n(N)$ by $\l \vec x. YN$ gives a term
  $\b\d$-convertible to $YN$, we have that $\tilde P$ is
  $\b\d$-convertible to $M$. It then follows that $BT(\tilde
  P)=BT(M)$. We need to show that $\sem{P}_\Kk\geq\sem{BT(\tilde
    P)}_\Kk$.

  Let us compare the trees $BT(P)$ and $BT(\tilde{P})$ by looking on
  every path starting from the root. The first difference appears when
  a node $v$ of $BT(P)$ is labeled with $c_N$ for some $N$. Say that
  the subterm of $P$ rooted in $v$ is $c_NK_1\dots K_i$. Then at the
  same position in $\BT(\tilde{P})$ we have the B\"ohm tree of the term $(\l
  \vec{x}. YN) \tilde K_1\dots\tilde K_i$. Observe that both terms are
  closed and of type $0$. This is because on the path from the root of
  $BT(P)$ to $v$ we have only seen constants of type $0\to 0\to 0$;
  similarly for $BT(\tilde{P})$.
  We will be done if we show that
  $\sem{c_NK_1\dots K_i}_\Kk\geq\sem{BT((\l \vec{x}. YN) \tilde
    K_1\dots\tilde K_i)}_\Kk$.

  We reason by cases. If $\sem{c_{N} K_1\dots K_i}_\Dd=\top$ then
  equation~\eqref{eq:sem-of-fixpoint} gives us $\sem{c_{N} K_1\dots
    K_i}_\Kk=(\top,Q)$.  So the desired inequality holds since
  $(\top,Q)$ is the greatest element of $\Kk_0$.

  If $\sem{c_{N} K_1\dots K_i}_\Dd=\bot$ then $\sem{c_{N} \tilde
    K_1\dots \tilde K_i}_\Dd=\bot$ since $\sem{K_i}_\Kk\geq
  \sem{\tilde K_i}_\Kk$. By equation~\eqref{eq:sem-of-fixpoint} we get
  $\sem{c_{N} \tilde K_1\dots \tilde K_i}_\Dd=(\bot,Q_\W)$. Since, by
  the definition of the fixpoint operator, $\sem{c_N}_\Kk\geq
  \sem{\lambda \vec x.\ YN}_\Kk$ we get $\sem{YN \tilde K_1\dots
    \tilde K_i}_\Kk=(\bot,Q_\W)$. But then
  Proposition~\ref{prop:observing_convergence_in_Kk} implies that $YN
  K_1\dots K_i$ is unsolvable. Thus $\sem{\BT((\l \vec x NY) \tilde
    K_1\dots \tilde K_i)}_\Kk=\sem{\W}_\Kk=(\bot, Q_\W)$.
\end{proof}

% \textbf{Old proof}
%   For the inequality $\sem{M}\leq \sem{BT(M)}$ we show that
%   $\sem{M}\leq \sem{BT(M)\dar_l}$ for every $l$. Indeed we have
%   $\sem{M}=\sem{\ABT_l(M)}\leq \sem{BT(M)\dar_l}$ where the last
%   inequality follows from Lemma~\ref{lemma:bohm-tree-approx}.

%   It remains to show $\sem{M}\geq \sem{BT(M)}$. For this we introduce
%   new constants $c_{N}$ for every subterm $YN$ of $M$. The type of
%   $c_{N}$ is the same as the type of $YN$. A the semantics of a constant $c_N$ of type
%   $\b$ will be $\fix_\b(\bar{\sem{N}})\promote$. We now
%   define term $\iterate^l(N)$ for very $l\in \Nat$.
%   \begin{align*}
%     \iterate^0(N)=& c_{N}\\
%     \iterate^{l+1}(N)=&N\iterate^l(N)
%   \end{align*}
%   Form the definition of the  fixpoint operator in $\Kk$ and the
%   fact that $\Kk_\b$ is finite it follows that
%   $\sem{\iterate^l(N)}=\sem{YN}$ for some $l$.  Let us consider $M'$
%   obtained from $M$ by replacing every subterm of the form $YN$ by
%   $\iterate^l(N)$. We have
%   \begin{equation*}
%     \sem{M}=\sem{M'}=\sem{\BT(M')}\geq \sem{BT(M)}
%   \end{equation*}
%   The first equation holds because of the definition of $M'$ and the
%   choice of $l$.  The second equation is trivial since $\BT(M')$ is a
%   finite tree. The last inequality follows by comparing structures of
%   $\BT(M')$ and $\BT(M)$. The difference is that in the first tree in
%   some places we have a term $c_{N}K_1\dots K_l$ and in the second
%   $YN K_1\dots K_l$. The result follows from the
%   Lemma~\ref{lemma:special-fix}\todo{explain more}.
% \end{proof}

\begin{thm}
  Let $\Aa$ be an insightful TAC automaton with the set of states $Q$,
  initial state $q^0$, and $Q_\W$ the set of states from which $\Aa$
  accepts the constant $\W$. Let $\Kk=\Kk(Q,Q_\W)$ be a model as in
  Definition~\ref{def:Kmodel} where the constants have the
  interpretation $\r$ given page~\pageref{sec:corr-compl-model-1}. For every closed
  term $M$ of type $0$:
  \begin{equation*}
    BT(M)\in L(\Aa)\quad \text{iff}\quad \text{$q^0$ is in the second
      component of $\sem{M}_{\Kk}$}.
  \end{equation*}
\end{thm}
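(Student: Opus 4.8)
The plan is to mirror the two-way argument used for GFP-models in Proposition~\ref{prop:from_W_blind_TAC_to_GFP_models} and Lemma~\ref{lem:GFP_models_and_blind_TAC}, the only genuinely new ingredients being Theorem~\ref{thm:semantics-of-BT-in-K}, which lets me replace $\sem{M}_\Kk$ by $\sem{BT(M)}_\Kk$, and Proposition~\ref{prop:observing_convergence_in_Kk}, which pins down the meaning of divergent terms. Throughout I write $\sem{N}_\Kk=(d_N,P_N)$ for a closed term $N$ of type $0$, so that the claim is $BT(M)\in L(\Aa)$ iff $q^0\in P_M$. First I would record how the second component behaves on the relevant shapes, directly from the interpretation $\r$ fixed on page~\pageref{sec:corr-compl-model-1} and from $\sem{\w}_\Kk=\ttop_0=(\top,Q)$, $\sem{\W}_\Kk=\bbot_0=(\bot,Q_\W)$: a constant leaf $c$ gives $P_c=\set{q\mid \d_0(q,c)=\ttrue}$; a cut leaf $\w$ has second component $Q$; a $\W$-leaf has second component exactly $Q_\W$; and for an internal node $aN_1N_2$ one gets $P=\set{q\mid \d_2(q,a)\cap P_{N_1}\times P_{N_2}\neq\es}$.

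For the implication $q^0\in P_M \Rightarrow BT(M)\in L(\Aa)$ I would build an accepting run on $BT(M)$ top-down, exactly as in the right-to-left part of Proposition~\ref{prop:from_W_blind_TAC_to_GFP_models}, maintaining the invariant that at a node carrying the closed type-$0$ subterm $N$ the assigned state lies in $P_N$. If $N$ is unsolvable then $BT(N)=\W$ and, by Proposition~\ref{prop:observing_convergence_in_Kk}, $\sem{N}_\Kk=(\bot,Q_\W)$, so $P_N=Q_\W$ and the current state is accepting at the $\W$-leaf; this is precisely where insightfulness rather than $\W$-blindness is used. If the head normal form of $N$ is a nullary constant $c$, then $q\in P_N=\set{q\mid\d_0(q,c)=\ttrue}$ closes the leaf; if it is $aN_1N_2$, then the computation of $P_N$ above furnishes $(q_1,q_2)\in\d_2(q,a)$ with $q_i\in P_{N_i}$, and I recurse on the closed subterms $N_1,N_2$. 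Since every node of $BT(M)$ sits at finite depth, this defines a total run on $BT(M)$, and the invariant makes it accepting.

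For the converse I would fix an accepting run $r$ on $BT(M)$ and use Theorem~\ref{thm:semantics-of-BT-in-K} to write $\sem{M}_\Kk=\sem{BT(M)}_\Kk=\bigwedge_k\sem{BT(M)\dar_k}_\Kk$. Because $\w$ is interpreted by the greatest element $\ttop_0$, replacing depth-$k$ subtrees by $\w$ only increases the value, so by monotonicity the sequence $\sem{BT(M)\dar_k}_\Kk$ is decreasing; as $\Kk_0$ is finite it stabilises, whence $\sem{M}_\Kk=\sem{BT(M)\dar_N}_\Kk$ for some $N$ and it suffices to show $q^0\in P_{BT(M)\dar_N}$. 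I would prove, by induction on the height of a node $v$ of $BT(M)\dar_N$, that $r(v)\in P_{M_v^N}$: at a cut leaf the second component is all of $Q$; at a $\W$-leaf acceptance of $r$ gives $\d_0(r(v),\W)=\ttrue$, i.e.\ $r(v)\in Q_\W$, which is exactly the second component there; at a constant leaf $c$ acceptance gives $\d_0(r(v),c)=\ttrue$; and at an internal node labelled $a$ the run condition yields $(r(v1),r(v2))\in\d_2(r(v),a)$, which with the induction hypothesis places $r(v)$ in $P_{M_v^N}$. Taking $v=\e$ and recalling $M_\e^N=BT(M)\dar_N$ gives $q^0\in P_{BT(M)\dar_N}$.

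The only delicate point, and the reason the whole construction of $\Kk$ was needed, is the treatment of $\W$-leaves in both directions: everything hinges on the second component of the meaning of a divergent term being precisely $Q_\W$ (Proposition~\ref{prop:observing_convergence_in_Kk}), together with $\sem{M}_\Kk=\sem{BT(M)}_\Kk$ (Theorem~\ref{thm:semantics-of-BT-in-K}). Once these are granted, the remaining steps are the same bookkeeping as in the $\W$-blind case, and I expect no further obstacle; in particular the top-down construction in the first implication needs no compactness argument, since Proposition~\ref{prop:observing_convergence_in_Kk} forces the chosen state into $Q_\W$ at every divergent leaf.
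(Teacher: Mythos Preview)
Your proposal is correct and follows essentially the same approach as the paper: both directions mirror the $\W$-blind argument of Proposition~\ref{prop:from_W_blind_TAC_to_GFP_models}, with Theorem~\ref{thm:semantics-of-BT-in-K} replacing Proposition~\ref{prop:semantics-of-BT}, and Proposition~\ref{prop:observing_convergence_in_Kk} handling the $\W$-leaves. The only cosmetic difference is that for the direction $BT(M)\in L(\Aa)\Rightarrow q^0\in P_M$ you exploit finiteness of $\Kk_0$ to pass to a single stabilised truncation $BT(M)\dar_N$, whereas the paper shows $q^0$ lies in the second component of $\sem{BT(M)\dar_l}_\Kk$ for every $l$; both are equivalent once the sequence is seen to be decreasing.
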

\begin{proof}
  The proof is very similar to the case of blind TAC automata
  (Proposition~\ref{prop:from_W_blind_TAC_to_GFP_models}). The
  difference here is that we rely on Theorem~\ref{thm:semantics-of-BT-in-K}
  for our model $\Kk$, moreover the constants $\w$ and $\W$ are
  handled separately. For completeness we spell out the argument in
  full, if only to see where these modifications intervene.

%  Recall that $\Kk$ is constructed over the sets $Q$ and $Q_\W$ where
%  $Q$ is the set of states of $\Aa$ and $Q_\W$ is the set of states
%  that from which $\Aa$ accepts $\W$: $Q_\W=\set{q \mid \d(q,\W)=\tt}$.
  For the left to right implication suppose that $\Aa$ accepts
  $\BT(M)$. Since, by Theorem~\ref{thm:semantics-of-BT-in-K},
  $\sem{M}=\sem{BT(M)}$ it is enough to show that $q^0$, that is the
  initial state of $\Aa$, is in the second component of
  $\sem{\BT(M)}$. For this we show that $q^0$ is in the second
  component of $\sem{\BT(M)\dar_l}$ for every $l\in M$.  

  The tree $\BT(M)$ is a ranked tree labeled with constants from the
  signature. The run of $\Aa$ is a function $r$ assigning to every node
  a state of $\Aa$. Recall that the tree $\BT(M)\dar_l$ is a prefix of this tree
  containing nodes up to depth $l$. Let us call it $t_l$. Every node
  $v$ in the domain of $t_l$ corresponds to a subterm of
  $\BT(M)\dar_l$ that we denote $M^l_v$.

  By induction on the height of $v$ we show that $r(v)$ appears in the
  second component of $\sem{M^l_v}$. This will show the left to right
  implication. If $v$ is a leaf at depth $l$ then
  $M^l_v$ is $\w^0$. We are done since $\sem{\w^0}=(\top,Q)$. If $v$ is
  a leaf of depth smaller than $l$ then $M^l_v$ is $\W^0$ or a constant
  $c$ of type $0$. In the latter case by definition of a run, we have
  $r(v)\in \set{q\mid \d(q,c)=\ttrue}$. We are done by the semantics of
  $c$ in the model. If $M^l_v$ is $\W^0$ then $\sem{M^l_v}=(\bot,Q_\W)$
  and $r(v)$ belongs to $Q_\W$ by definition of the run.  The last
  case is when $v$ is an internal node of the tree $t_l$. In this case
  $M^l_w=aM^l_{v1}M^l_{v2}$ where $a$ is the constant labeling $v$ in
  $t_l$. By the induction assumption we have that $r(vi)$ appears in the
  second component of $\sem{M^l_{vi}}$, and we are done by using the
  semantics of $a$.

  For the direction from right to left we suppose that $q^0$ is in the
  second component of $\sem{M}$. By
  Theorem~\ref{thm:semantics-of-BT-in-K}, $\sem{M}=\sem{BT(M)}$. We
  will construct a run of $\Aa$ on $BT(M)$.

  If $M$ does not have head normal form then $\sem{M}=(\bot,Q_\W)$ by
  Proposition~\ref{prop:observing_convergence_in_Kk}. In this case
  $\BT(M)$ is the tree consisting only of the root labeled
  $\W^0$. Hence $q^0\in Q_\W$ and we are done.

  Otherwise $\BT(M)$ has some letter $a$ in the root. In case it is a
  leaf, the conclusion is immediate. In case it is a binary symbol,
  $M=_{\b\d}a M_1M_2$ for some $M_1$, $M_2$. Now, as $q_0$ is in the
  second component of $\sem{M}$, by definition of $\sem{a}$, it must
  be the case that $q_1$ and $q_2$ are in the second components of
  $\sem{M_1}$ and $\sem{M_2}$, respectively. We put $r(1)=q_1$ and
  $r(2)=q_2$ and repeat the argument starting from the nodes $1$ and
  $2$ respectively. It is easy to see that this inductive procedure
  gives a, potentially infinite, run of $\Aa$. Hence
  $\BT(M)\in L(\Aa)$ as by construction the run of $\Aa$ is accepting.
  %
  % since $q^0$ is in the second component of $\sem{BT(M)}$, it is also
  % in the second component of $\sem{BT(M)\dar_l}$ for all $l$.  We know
  % that $\BT(M)\dar_l$ is of the form $a N^l_1 N^l_2$ for some $N^l_1$,
  % $N^l_2$. As $\d_\Aa(q^0,a)$ is a finite set of pairs, there is a
  % pair $(q_1,q_2)\in\d_\Aa(q^0,a)$ such that $q_1$ is in the second
  % component of $\sem{M^l_1}$ and $q_2$ is in the second component of
  % $\sem{M^l_2}$ for all $l$. We put $r(1)=q_1$ and
  % $r(2)=q_2$ and repeat the argument starting from the nodes $1$ and
  % $2$ respectively. It is easy to see that this inductive procedure
  % gives a, potentially infinite, run of $\Aa$. Hence $\BT(M)\in
  % L(\Aa)$ as the run of $\Aa$ is accepting.
\end{proof}
%%% Local Variables: 
%%% mode: latex
%%% TeX-master: "m"
%%% End: 

%  LocalWords:  TAC unsolvability fixpoint BT iff GFP truncations YN
%  LocalWords:  combinators subterm subterms fixpoints NK eq sem aM

\section{Reflection operation}\label{sec:reflection}
The idea behind the reflection operation is to transform a term into a
term that \emph{monitors} its computation: it is aware of the value
in the model of the original term at every moment of computation.
This monitoring simply amounts to adding an extra labelling to
constants that reflect those values. Formally, we express this by the
notion of a reflective B\"ohm tree defined below. The definition can
be made more general but we will be interested only in the case of
terms of type $0$. In this section we will show that reflective B\"ohm
trees can be generated by $\l Y$-terms.

As usual we suppose that we are working with a fixed tree signature
$\S$. We will also need a signature where constants are annotated with
elements of the model. If $\Ss=\struct{\set{\Ss_\a}_{\a\in\Tt},\r}$ is
a finitary model then the extended signature $\S^\Ss$ contains
constants $a^s$ where $a$ is a constant in $\S$ (either nullary or
binary) and $s\in \Ss_0$; so semantic annotations are possible
interpretations of terms of type $0$ in~$\Ss$.

\begin{defi}\label{df:Bohm tree respecting S}
  Let $\Ss$ be a finitary model, and $M$ a closed term of type $0$,
  $\rBT_\Ss(M)$, the \emph{reflective B\"ohm tree of $M$ with respect
    to $\Ss$}, is obtained in the following way:
\begin{itemize}
\item If $M\to^*_{\b\d} bN_1N_2$ for some constant $b:0\to 0\to 0$ 
  then $\rBT_\Ss(M)$ is a tree having the root 
  labelled  by $b^{\sem{bN_1N_2}_\Ss}$ and having
  $\rBT_\Ss(N_1)$ and $\rBT_\Ss(N_2)$ as subtrees.
\item If $M\to^*_{\beta \delta} c$ for some constant $c:0$ then
  $\rBT_\Ss(M)=c^{\sem{c}_\Ss}$. 
\item Otherwise, $M$ is unsolvable and $\rBT(M)=\W^0$.
\end{itemize}
\end{defi}
To see the intention behind this definition suppose that the model
$\Ss$ has the property: $\sem{N}_\Ss=\sem{\BT(N)}_\Ss$ for every term
$N$. In this case the superscript annotation of a node in
$\rBT_\Ss(M)$ is just the value of the subtree from this node. When,
moreover, the model $\Ss$ recognizes a given property then the
superscript determines if the subtree satisfies the property. For
example, GFP-models, as well as models $\Kk$ we have constructed in
the last section will behave this way. 

We will use terms to generate reflective B\"ohm trees.
\begin{defi}
  Let $\S$ be a tree signature, and let $\Ss$ be a finitary model. For $M$ 
  a closed term of type $0$ over the signature $\S$. We say that a
  term $M'$ over the signature $\S^\Ss$ is \emph{a reflection of $M$
    in $\Ss$} if $\BT(M')=\rBT(M)$.
\end{defi}
The objective of this section is to construct reflections of terms.
Since $\l Y$-terms can be translated to schemes and vice versa, the
construction is working for schemes too. (Translations between schemes
and $\l Y$-terms that do not increase the type order are presented
in~\cite{salvati12:_recur_schem_krivin_machin_collap_pushd_autom}).

Let us fix a tree signature $\S$ and a finitary model $\Ss$.  For the
construction of reflective terms we enrich the $\l Y$-calculus with some
syntactic sugar.  Consider a type $\a$. The set $\Ss_\a$ is finite for
every type $\a$; say $\Ss_\a=\set{d_1,\dots,d_k}$. We will introduce a
new atomic type $[\a]$ and constants $d_1,\dots,d_k$ of this type;
there will be no harm in using the same names for constants and elements
of the model. We do this for every type $\a$ and consider terms over
this extended type discipline. Notice that there are no
other closed normal terms than $d_1,\dots,d_k$ of type $[\a]$.

Given a term $M$ of type $\mbrack{\a}$ and $M_1$, \ldots $M_n$ which
are all terms of type $\b$, we introduce the construct
\begin{equation*}
\text{case}^\b M\set{d_i\to M_i}_{d_i\in \Ss_\a} 
\end{equation*}
which is a term of type $\b$ and which reduces to $M_i$ when $M=d_i$.
This construct is simple syntactic sugar since we may represent the
term $d_i$ of type $\mbrack{\a}$ with the $i^{\text{th}}$ projection
$\l x_1\ldots x_n.x_i$ by letting $\mbrack{\a} = 0^k\to 0$ then, when
$\b = \b_1\to\dots \to \b_n\to 0$, $\text{case}^\b$ can be defined as the
$\l$-term $$\l y_1^{\b_1}\dots y_n^{\b_n} d^{\mbrack{\a}} f^\b_1\dots
f^\b_k. d (f_1 y_1\dots y_n)\dots (f_k y_1\dots y_n) \ .$$ When $M$
represents $d_i$, \textit{i.e.} is equal to $\l x_1\ldots x_n.x_i$,
the term $$\l y_1^{\b_1}\dots y_n^{\b_n}. M (M_1 y_1\dots y_n)\ldots
(M_k y_1\dots y_n)$$ is $\b\eta$-convertible to $M_i$ which
represents well the semantic of the $\text{case}^\b$ construct.  In
the sequel, we shall omit the type annotation on the $\text{case}$
construct.

We define a transformation on types $\a^\bullet$ by induction on their
structure as follows:
\begin{eqnarray*}
  \a^\bullet&=&\a\text{ when }\a\text{ is atomic}\\
  (\a\to\b)^\bullet&=& \a^\bullet\to\mbrack{\a}\to\b^\bullet
\end{eqnarray*}
The type translation $(\cdot)^\bullet$ makes every function dependent
on the semantics of its argument.

The translation we are looking for will be an instance of a more
general translation  $\mbrack{M,\val}$ of a term $M$ of type $\a$
into a term of type $\a^\bullet$, where $\val$ is a valuation over
$\Ss$.

\begin{align*}
  \mbrack{\l x^\a.M,\val}=&\l x^{\a^\bullet}
 \l y^{\mbrack{\a}}.\\\
 & \quad\text{case }y^{\mbrack{\a}} \set{d \to \mbrack{M,\val[d/x^\a]}}_{d\in\Ss_\a}\\
\mbrack{MN,\val}=&
\mbrack{M,\val}\mbrack{N,\val}\sem{N}^\val\\
\mbrack{a,\val}=& \lambda
x^{0}_1 \l y^{\mbrack{0}}_1 \l x^{0}_2 \l y^{\mbrack{0}}_2.\\
&  \quad\text{case }y^{\mbrack{0}}_1\set{d_1 \to\text{case }
  y^{\mbrack{0}}_2 \set{d_2 \to a^{\r(a)d_1\,d_2}x_1x_2}_{d_2\in
  \Ss_0}}_{d_1\in \Ss_0}\\
&\text{when a is a binary constant}\\
\mbrack{a,\val}=& a^{\r(a)}\text{ when $a$ is a nullary constant}\\
% ,y^0_2 \text{ with } 
% p^1_1,p^1_2 \to  a^{\sem{a}p^1_1p^1_2}x_1x_2 \cdots
% p^n_1,p^n_2 \to  a^{\sem{a}p^1_1p^1_2}x_1x_2\\
\mbrack{x^\a,\val}=&x^{\a^\bullet}\\
\mbrack{Y^{(\a\to\a)\to\a} M,\val}=&Y^{(\a^\bullet\to\a^\bullet)\to\a^\bullet} (\l x^{\a^\bullet}.\mbrack{M,\val}
x^{\a^\bullet} \sem{Y M}^\val)\ .
\end{align*}
The transformation of the terms propagates semantic information. In
the case of $\l$-abstraction, the extra-semantic argument is checked
and in each branch the valuation is updated accordingly. In the case
of application, we need to give the extra semantic parameter, so we
simply give the interpretation of the argument in the model. For
constants, the term tests the value of each of the argument and then
sends the correctly annotated constant. For variables, we just need to
update their types. Finally for fixpoints, we type them with
$(\a^\bullet\to\a^\bullet)\to \a^\bullet$. When $M$ is the argument of
a fixpoint, the type of the term $\mbrack{M,\val}$, is
$(\a\to\a)^\bullet = \a^\bullet\to\mbrack{\a}\to\a^\bullet$.  We thus
take as an argument of $Y^{(\a^\bullet\to\a^\bullet)\to \a^\bullet}$
the term of type $\a^\bullet\to\a^\bullet$: $\l
x^{\a^\bullet}. \mbrack{M,\val}x^{\a^\bullet} \sem{Y M}^\val$ because
the semantics of the argument of $\mbrack{M,\val}$ is, by definition
of a fixpoint, the semantics of $YM$.

To prove correctness of this translation, we need two lemmas.

\begin{lem}\label{lem:substitution_lemma_for_reflexion}
  Given a term $M$ and a valuation $\val$, and the terms $N_1$,
  \ldots, $N_n$ we have the following
  identity:
  \begin{equation*}
    \mbrack{M\s,\val} =
    \mbrack{M,\val'}\s'\ ,
  \end{equation*}
  where $\s =  [N_1/x^{\a_1}_1,\ldots,N_n/x^{\a_n}_n]$ is a substitution,
  $\s'=[\mbrack{N_1,v}/x^{\a_1^\bullet}_1,\ldots,\mbrack{N_n,v}/x^{\a_n^\bullet}_n]$
  and $\val' = \val[\sem{N_1}^\val/x_1^{\a_1},\ldots,
  \sem{N_n}^\val/x_n^{\a_n}]$.
\end{lem}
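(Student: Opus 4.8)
The plan is to prove the identity by structural induction on $M$, invoking the standard substitution property of the interpretation function in $\Ss$, namely $\sem{N\sigma}^\val=\sem{N}^{\val'}$ for every subterm $N$, which guarantees that the semantic annotations produced by the translation on the two sides agree. Throughout I would adopt Barendregt's variable convention, so that every variable bound in $M$ (by a $\lambda$, or introduced afresh as a case-selector by the translation, or bound by $Y$) is distinct from the $x_i$ and does not occur free in any $N_j$; this is precisely what makes the syntactic substitution $\sigma'$ and the semantic update from $\val$ to $\val'$ interact cleanly under binders. The key auxiliary observation, used in every non-trivial case, is that each annotation $\sem{N}^{\val'}$ is a \emph{closed} term of type $\mbrack{\a}$ and is therefore left untouched by $\sigma'$.

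First I would dispatch the base and non-binding composite cases. If $M=x_i$ then $M\sigma=N_i$, while $\mbrack{x_i,\val'}\sigma'=x_i^{\a_i^\bullet}\sigma'=\mbrack{N_i,\val}$, so the two sides coincide; if $M$ is a variable outside $\set{x_1,\dots,x_n}$, or a nullary or binary constant, then $\mbrack{M,\cdot}$ is a closed term independent of the valuation and unaffected by $\sigma'$, while $M\sigma=M$, so the equality is immediate. For an application $M=PN$ I would push $\sigma$ inside, apply the induction hypothesis to $P$ and to $N$, and use that $\sigma'$ distributes over application. The only genuinely semantic point is that the third factor of $\mbrack{PN,\val}=\mbrack{P,\val}\,\mbrack{N,\val}\,\sem{N}^\val$ must match: this holds because $\sem{N\sigma}^\val=\sem{N}^{\val'}$ by the substitution property, and since this annotation is closed we have $\bigl(\sem{N}^{\val'}\bigr)\sigma'=\sem{N}^{\val'}$.

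The two binding cases carry the real content. For $M=\lambda y^\beta.P$ I would expand $\mbrack{M\sigma,\val}=\lambda y^{\beta^\bullet}\lambda z^{\mbrack{\beta}}.\,\text{case }z\,\set{d\to\mbrack{P\sigma,\val[d/y]}}_{d\in\Ss_\beta}$ with $y,z$ fresh, and apply the induction hypothesis to $P$ in each branch under the valuation $\val[d/y]$. To close the step I must check the coherence $(\val[d/y])'=\val'[d/y]$: since $y$ is fresh it does not occur free in any $N_j$, so $\sem{N_j}^{\val[d/y]}=\sem{N_j}^\val$, and the two valuation updates commute because $y\neq x_i$. Freshness of $y$ and $z$ also lets $\sigma'$ commute through both abstractions and into each branch of the case. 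The fixpoint case $M=Y^{(\a\to\a)\to\a}P$ is analogous: after pushing $\sigma$ through $Y$ and applying the induction hypothesis to the argument, I would use $\sem{(YP)\sigma}^\val=\sem{YP}^{\val'}$ (again a closed annotation) together with the freshness of the bound $x^{\a^\bullet}$ to commute $\sigma'$ past the binder, recovering exactly $\mbrack{YP,\val'}\sigma'$.

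The main obstacle is not any single computation but the bookkeeping in the binder cases: one must be scrupulous that $\sigma'$ commutes with the freshly introduced binders and $\text{case}$ constructs, and simultaneously that the semantic side stays synchronized through the identity $(\val[d/y])'=\val'[d/y]$. Stating the variable convention and the closedness of the $\sem{\cdot}$-annotations precisely up front is what makes every case collapse to routine rewriting; the standard substitution lemma $\sem{N\sigma}^\val=\sem{N}^{\val'}$ for interpretations in a model is the one external ingredient.
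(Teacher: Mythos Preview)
Your proposal is correct and follows essentially the same approach as the paper: structural induction on $M$, with the $\lambda$-abstraction case carrying the main content. The paper only spells out the $\lambda$ case and declares the others ``similar''; your treatment is in fact more thorough, making explicit the semantic substitution lemma $\sem{N\sigma}^\val=\sem{N}^{\val'}$, the closedness of the annotation constants, and the coherence $(\val[d/y])'=\val'[d/y]$ that the paper uses silently.
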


\proof
  We proceed by induction on the structure of $M$. We will only show
  the case of $\lambda$-abstraction, the others being similar.

%   In case $M$ is a variable different from the variables $x_1^{\a_1}$,
%   \ldots, $x_n^{\a_n}$ or is a constant, then the result is
%   obvious. In case $M=x^{\a_i}_i$, the conclusion also follows with no
%   difficulty.  

%   In case $M = M_1M_2$ then, $\mbrack{M.\s,\val} =
%   \mbrack{M_1.\s,v}\mbrack{M_2.\s,v}\sem{M_2.\s}^v$, but, by
%   induction, $\mbrack{M_1.\s,v} = \mbrack{M_1,v'}.\s'$,
%   $\mbrack{M_2.\s,v} = \mbrack{M_2,v'}.\s'$; moreover, $\sem{M_2.\s}^v
%   = \sem{M_2}^{v'}$.  Now we have that 
%   \begin{eqnarray*}
% \mbrack{M,v'}.\s'& =&
%   \mbrack{M_1,v'}.\s'\mbrack{M_2,v'}.\s'\sem{M_2}^{v'}\\
%   &=& \mbrack{M_1.\s,v}\mbrack{M_2.\s,v}\sem{M_2.\s}^v\\
%   &=& \mbrack{M.\s,v}
% \end{eqnarray*}

In case $M = \l x^\a.N$ (we assume that $x^\a$ is different from the
variables $x_i^{\a_i}$ used in the substitution), then $\mbrack{\l x^\a.M\s,\val} = \l x^{\a^\bullet}
y^{\mbrack{\a}}.\text{case } y^{\mbrack{\a}}\set{f \to
  M\s,\val[f/x^\a]}_{f\in \Ss_\a}$.  By induction we have that, for
every $f$ in $\Mm_\a$
$\mbrack{M\s,\val[f/x^\a]} = \mbrack{M,\val'[f/x^\a]}\s'$.  But,
\begin{eqnarray*}
  \mbrack{\l x^\a.M,\val'}\s' &=& (\l x^{\a^\bullet}
  y^{\mbrack{\a}}.\text{case } y^{\mbrack{\a}}\set{f\to
    \mbrack{M,\val'[f/x^\a]}}_{f\in \Ss_\a})\s'\\
  &=& \l x^{\a^\bullet}
  y^{\mbrack{\a}}.\text{case } y^{\mbrack{\a}}\set{f\to
    \mbrack{M,\val'[f/x^\a]}\s'}_{f\in \Ss_\a}\\
  &= &\l x^{\a^\bullet}
  y^{\mbrack{\a}}.\text{case } y^{\mbrack{\a}} \set{f\to
    \mbrack{M\s,\val[f/x^\a]}}_{f\in \Ss_\a}\\
  &=& \mbrack{\l x^\a.M\s,\val}\ .\rlap{\hbox to 214 pt{\hfill\qEd}}
\end{eqnarray*}

%   In case $M = YN$, then $\mbrack{M\s,\val} = Y(\l
%   y^{\a^\bullet}. \mbrack{N\s,\val} y^{\a^\bullet}\sem{M}^\val)$. By
%   induction, we have that $\mbrack{M\s,\val} = \mbrack{M,\val'}\s'$;
%   furthermore, $\sem{M}^{\val'} = \sem{M\s}^\val$ so that,
%   \begin{eqnarray*}
% \mbrack{M,\val'}\s' &=& Y(\l y^{\a^\bullet}. \mbrack{N,\val'}
%   y^{\a^\bullet}\sem{M}^{\val'})\s'\\
%   &=& Y(\l
%   y^{\a^\bullet}. \mbrack{N,\val'}\s' y^{\a^\bullet}\sem{M}^{\val'})\\
%   &=& Y(\l y^{\a^\bullet}. \mbrack{N\s,\val}
%   y^{\a^\bullet}\sem{M\s}^{\val})\\
%   &=& \mbrack{M\s,\val}
% \end{eqnarray*}

\noindent We can now show that the translation is compatible with head $\b\d$
reduction. 
\begin{lem}\label{lem:head_reduction_commutes_with_reflexion}
  If $M\to_{ h} M'$, then
  $\mbrack{M,\val}\to^{+}_{ h}\mbrack{M',\val}$.
\end{lem}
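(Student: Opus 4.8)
The plan is to argue by a case analysis on the head redex of $M$, which by definition is either a $\b$-redex $(\l y.P)Q$ or a $\d$-redex $YR$. Writing $M$ in head form $N_0N_1\dots N_p$ with $N_0$ the head redex, and repeatedly unfolding the application clause $\mbrack{RN,\val}=\mbrack{R,\val}\,\mbrack{N,\val}\,\sem{N}^\val$, I would peel off the translated arguments $\mbrack{N_1,\val}\sem{N_1}^\val\cdots\mbrack{N_p,\val}\sem{N_p}^\val$ and reduce the problem to head-reducing the translation of $N_0$. Because the lemma is applied to closed terms of type $0$, the application $N_0N_1\dots N_p$ is at base type, and this fact is what will make the bookkeeping in the $\b$-case come out exactly.

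The $\d$-case is straightforward. The clause for $Y$ gives $\mbrack{YR,\val}=Y G$ with $G=\l x^{\a^\bullet}.\,\mbrack{R,\val}\,x^{\a^\bullet}\,\sem{YR}^\val$. One head $\d$-step followed by one head $\b$-step turn $YG$ into $\mbrack{R,\val}(YG)\,\sem{YR}^\val$, and since $YG$ is literally $\mbrack{YR,\val}$, the application clause identifies this term as $\mbrack{R,\val}\,\mbrack{YR,\val}\,\sem{YR}^\val=\mbrack{R(YR),\val}$. These are genuine head steps of the whole term, so reattaching $N_1,\dots,N_p$ yields $\mbrack{M',\val}$ directly.

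The $\b$-case carries the real work. Here $\mbrack{(\l y.P)Q,\val}=(\l y^\bullet\, z.\ \text{case } z\set{d\to\mbrack{P,\val[d/y]}})\,\mbrack{Q,\val}\,\sem{Q}^\val$, and two head $\b$-steps expose $\text{case }\sem{Q}^\val\set{d\to\mbrack{P,\val[d/y]}[\mbrack{Q,\val}/y^\bullet]}$. The scrutinee is now the value $d^\ast=\sem{Q}^\val$, so the case picks its $d^\ast$-branch; but unfolding the syntactic sugar, selecting a branch first $\eta$-expands it over as many fresh variables as its type has arguments. The decisive observation is that this $\eta$-arity equals $2p$ — two per source argument, for the translation and its annotation — since the branch has type $(\text{type of }P)^\bullet$ and the whole application lands in $0$. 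Thus the fresh variables are consumed, head-first, by the pending arguments $\mbrack{N_1,\val}\sem{N_1}^\val\cdots$, leaving $\mbrack{P,\val[d^\ast/y]}[\mbrack{Q,\val}/y^\bullet]$ applied to them. Finally, Lemma~\ref{lem:substitution_lemma_for_reflexion}, with valuation $\val[\sem{Q}^\val/y]$, rewrites this branch as $\mbrack{P[Q/y],\val}$, so the whole term is $\mbrack{(P[Q/y])N_1\dots N_p,\val}=\mbrack{M',\val}$.

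The delicate step is exactly this $\eta$-bookkeeping in the $\b$-case: I must check that the fresh variables produced by the desugared $\text{case}$ match the pending translated arguments one-for-one and in the right order, so that the two head $\b$-steps, the projection steps selecting the branch, and the steps consuming the $\eta$-variables are all genuine head steps that compose into a single $\to_{h}^{+}$ landing syntactically on $\mbrack{M',\val}$. This matching, together with the placement of the semantic annotations $\sem{N_i}^\val$ in the translation, is the heart of the argument and the reason the statement holds for the base-type applications to which it is applied.
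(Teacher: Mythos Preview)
Your argument is correct and differs from the paper chiefly in how $\text{case}$ is handled. The paper proceeds by structural induction on $M$, reducing via the trivially inductive application case to $M$ being the redex itself. It then treats $\text{case}$ as a primitive with its own head-reduction rule (``reduces to $M_i$ when $M=d_i$'', as declared when the construct is introduced), so the $\b$-case collapses to two $\b$-steps, one $\text{case}$-step, and an appeal to Lemma~\ref{lem:substitution_lemma_for_reflexion}; no $\eta$-bookkeeping appears at all.

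You instead desugar $\text{case}$ fully and confront the $\eta$-expansion $\l y_1\dots y_n.\,M_{i^*}y_1\dots y_n$ that the encoding produces; the pending translated arguments $\mbrack{N_j,\val}\,\sem{N_j}^\val$ are what consume the $y_i$'s, which is precisely why you need the whole application at base type so that $n=2p$. This is a genuine observation: with $\text{case}$ desugared, $\mbrack{(\l y.P)Q,\val}$ in isolation head-reduces only to an $\eta$-expansion of $\mbrack{P[Q/y],\val}$, not to the term itself, so the lemma as stated for arbitrary $M$ really does lean on the paper's primitive-$\text{case}$ convention. Since Corollary~\ref{coro:solvability_head_reduction}, the lemma's sole client, concerns closed terms of type~$0$, your restriction is harmless. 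In short, the paper buys brevity by keeping $\text{case}$ abstract; you buy faithfulness to the underlying $\l Y$-calculus at the price of the base-type hypothesis and the arity check you carry out.
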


\proof
  We proceed by induction on the structure of $M$. We only treat the
  cases where $M$ is a redex, the other cases being trivial by
  induction.  We are left with two cases: $M = (\l x^\a.P) Q$ and $M =
  Y^{(\a\to\a)\to\a} P$.

  In case $M =  (\l x^\a.P) Q$,  we have that $M' = P[Q/x^\a]$, and
  using the Lemma~\ref{lem:substitution_lemma_for_reflexion} we have
  that $\mbrack{M',\val} =
  \mbrack{P,\val[\sem{Q}^\val/x^\a]}[\mbrack{Q,\val}/x^\a]$.
  But then we have
  \begin{eqnarray*}
    \mbrack{M,\val} &=& \mbrack{\l x^\a.P,\val} \mbrack{Q,\val}
    \sem{Q}^\val\\
    &=& (\l x^{\a^\bullet} y^{\mbrack{\a}}. \text{case }
    y^{\mbrack{\a}}\set{f\to \mbrack{P,\val[f/x^\a]}}_{f\in \Ss_\a})\mbrack{Q,\val}
    \sem{Q}^\val\\
    &\to_{ h}^+&
    \mbrack{P,\val[\sem{Q}^\val/x^\a]}[\mbrack{Q,\val}/x^\a]\\
    &=&\mbrack{M',\val}\ .
  \end{eqnarray*}

  In case $M= Y^{(\a\to\a)\to\a}P$, we have $M' = P M$ and:
  \begin{eqnarray*}
    \mbrack{M,\val}&=& Y^{(\a^\bullet\to\a^\bullet)\to\a^\bullet} (\l
    x^{\a^\bullet}. \mbrack{P,\val} x^{\a^\bullet} \sem{M}^\val)\\
    &\to_{ h}& (\l
    x^{\a^\bullet}. \mbrack{P,\val} x^{\a^\bullet} \sem{M}^\val)
    \mbrack{M,\val}\\
    &\to_{ h}& \mbrack{P,\val} \mbrack{M,\val}\sem{M}^\val\\
    &=&\mbrack{P M,\val}\\
    &=&\mbrack{M',\val}\ .\rlap{\hbox to 224 pt{\hfill\qEd}}
  \end{eqnarray*}

% \begin{lem}\label{lem:head_reduction_back}
%   If $\mbrack{M,\val}\stackrel{\ast}{\to}^h_{\b\d} N$ then there is
%   $M'$ such that $N \stackrel{\ast}{\to}^h_{\b\d} \mbrack{M',\val}$.
% \end{lem}

% \begin{proof}
%   To prove this Lemma it suffices to remark that in case a redex is of
%   the form $(\l x^\a.P)Q$, then we have (using
%   Lemma~\ref{lem:substitution_lemma_for_reflexion}):
%    \begin{eqnarray*}
%     \mbrack{(\l x^\a.P)Q,\val} &=& \mbrack{\l x^\a.P,\val} \mbrack{Q,\val}
%     \sem{Q}^\val\\
%     &=& (\l x^{\a^\bullet} y^{\mbrack{\a}}. \text{case }
%     y^{\mbrack{\a}}\set{f\to \mbrack{P,\val[f/x^\a]}}_{f\in \Mm_\a})\mbrack{Q,\val}
%     \sem{Q}^\val\\
%     &\stackrel{\ast}{\to}^h_{\b\d}&
%     \mbrack{P,\val[\sem{Q}^\val/x^\a]}[\mbrack{Q,\val}/x^\a]\\
%     &=&\mbrack{M[Q/x^\a],\val}
%   \end{eqnarray*}
%   and in case of a redex of the form $YP$ we have (using
%   Lemma~\ref{lem:substitution_lemma_for_reflexion}):
% \begin{eqnarray*}
%     \mbrack{YP,\val}&=& Y^{(\a^\bullet\to\a^\bullet)\to\a^\bullet} (\l
%     x^{\a^\bullet}. \mbrack{P,\val} x^{\a^\bullet} \sem{YP}^\val)\\
%     &\to^h_\d& (\l
%     x^{\a^\bullet}. \mbrack{P,\val} x^{\a^\bullet} \sem{YP}^\val)
%     \mbrack{YP,\val}\\
%     &\to^h_\b& \mbrack{P,\val} \mbrack{YP,\val}\sem{YP}^\val\\
%     &=&\mbrack{P (YP),\val}\\
%   \end{eqnarray*}
%   Thus when applying head-reduction to a term like $\mbrack{M,\val}$ we
%   eventually obtain a term of the form $\mbrack{M',\val}$ such that
%   $M\to_{\b\d} M'$
% \end{proof}

\begin{cor}\label{coro:solvability_head_reduction}
  Given a term $M$ of type $0$ and a valuation $\val$:
  \begin{equation*}
  M\to^{*}_{ h} aM_1 M_2\quad\text{iff}\quad
  \mbrack{M,\val}\to^{*}_{ h} a^{\sem{M}^\val}
  \mbrack{M_1,\val}\mbrack{M_2,\val}\ .    
  \end{equation*}
\end{cor}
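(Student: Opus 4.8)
The plan is to deduce the corollary from Lemma~\ref{lem:head_reduction_commutes_with_reflexion} together with the determinism of head reduction: the head redex of a term is unique, so head reduction is a deterministic strategy and each term has at most one head normal form. For the implication from left to right, assume $M\to^*_h aM_1M_2$. Writing this as a finite sequence of single head steps and applying Lemma~\ref{lem:head_reduction_commutes_with_reflexion} to each step yields $\mbrack{M,\val}\to^*_h\mbrack{aM_1M_2,\val}$. Unfolding the application clause twice gives $\mbrack{aM_1M_2,\val}=\mbrack{a,\val}\,\mbrack{M_1,\val}\,\sem{M_1}^\val\,\mbrack{M_2,\val}\,\sem{M_2}^\val$, and feeding these four arguments to the explicit $\l$-term defining $\mbrack{a,\val}$ and then resolving the two nested $\text{case}$ constructs head-reduces this to $a^{\r(a)\,\sem{M_1}^\val\,\sem{M_2}^\val}\,\mbrack{M_1,\val}\,\mbrack{M_2,\val}$. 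The $\text{case}$ constructs resolve by head reduction because they are encoded by iterated projections whose scrutinees $\sem{M_1}^\val,\sem{M_2}^\val$ are precisely the closed normal representatives of the corresponding elements of $\Ss_0$, so the branches indexed by $\sem{M_1}^\val$ and $\sem{M_2}^\val$ are selected. Finally, since $\Ss$ is sound under $\b\d$-conversion, $\r(a)\,\sem{M_1}^\val\,\sem{M_2}^\val=\sem{aM_1M_2}^\val=\sem{M}^\val$, so the annotation is exactly $\sem{M}^\val$, as required.

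For the converse I would use determinism. Since $M$ has type $0$, its head reduction either diverges or reaches a head normal form that is a nullary constant $c$ (for instance $c\in\set{\w^0,\W^0}$), a variable-headed term $xN_1\cdots N_p$, or a binary-constant term $aN_1N_2$. If $M$ diverges then, by Lemma~\ref{lem:head_reduction_commutes_with_reflexion}, $\mbrack{M,\val}$ has an infinite head reduction and hence no head normal form, contradicting the assumed reduction to the head normal form $a^{\sem{M}^\val}\mbrack{M_1,\val}\mbrack{M_2,\val}$. If $M$ reaches a nullary constant or a variable-headed term then, again by Lemma~\ref{lem:head_reduction_commutes_with_reflexion}, $\mbrack{M,\val}$ reaches $\mbrack{c,\val}=c^{\r(c)}$ or a term whose head is a variable; neither is of the binary-headed shape on the right, so determinism (uniqueness of the head normal form) excludes these cases. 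Hence $M\to^*_h aN_1N_2$, and by the left-to-right direction the unique head normal form of $\mbrack{M,\val}$ is $a^{\sem{M}^\val}\mbrack{N_1,\val}\mbrack{N_2,\val}$; by uniqueness it coincides with the given one, which identifies $N_1,N_2$ with $M_1,M_2$ and yields $M\to^*_h aM_1M_2$.

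The main obstacle I anticipate is the bookkeeping in the left-to-right computation: I must make sure that unfolding $\mbrack{a,\val}$ and, above all, resolving the two $\text{case}$ constructs proceed purely by head $\b\d$-reduction, so that the whole chain stays inside $\to^*_h$, and that the branches actually selected are those indexed by $\sem{M_1}^\val$ and $\sem{M_2}^\val$. This rests on the encoding of $\text{case}$ as iterated projection and on the annotations being the normal representatives of the model values. The only other point of substance, used to identify the superscript, is the invariance $\sem{M}^\val=\sem{aM_1M_2}^\val$, which is soundness of the model under head $\b\d$-reduction.
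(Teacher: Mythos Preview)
Your proof is correct and follows essentially the same route as the paper: both directions rest on Lemma~\ref{lem:head_reduction_commutes_with_reflexion} together with the determinism of head reduction, and your step-by-step unfolding of $\mbrack{aM_1M_2,\val}$ is exactly the computation the paper leaves implicit when it calls the forward direction ``a simple consequence'' of the lemma. If anything you are more careful than the paper: you explicitly handle the possibility that the head normal form of $M$ is a nullary constant or variable-headed (cases the paper silently skips), and your worry about the \emph{case} reductions being head reductions is justified but harmless---since the result type of the nested cases in $\mbrack{a,\val}$ is $0$, the encoding carries no $\eta$-expansion variables and each case resolves by a finite sequence of head $\beta$-steps on the projection term representing $\sem{M_i}^\val$.
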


\begin{proof}
  The direction from left to right is a simple consequence of
  Lemma~\ref{lem:head_reduction_commutes_with_reflexion}.  For the
  direction from right to left, we use the well-known fact
  (see~\cite{statman04}) that a $\l Y$-term has a head normal form iff
  it can be head-reduced to a head normal form.  Let us suppose that
  $\mbrack{M,\val}$ reduces to $a^{\sem{M}^\val} P_1 P_2$ in $k$ steps
  of head-reduction.  There are two cases. In case $M$ has no head
  normal form, then let $P$ be a term obtained from $M$ by $k+1$ steps
  of $\b\d$ reduction, in symbols $M{\to}^{k+1}_{ h} P$. By an
  iterative use of
  Lemma~\ref{lem:head_reduction_commutes_with_reflexion}, we must have
  $\mbrack{M,\val}{\to}^m_{ h}\mbrack{P,\val}$ with $k< m$. A
  contradiction since $P$ is not a head-normal form. The second case
  is when $M$ has a head-normal form. So after some number of steps of
  head $\b\d$-reduction we obtain $b N_1 N_2$. A simple use of
  Lemma~\ref{lem:head_reduction_commutes_with_reflexion} gives that $b
  = a$, $P_1 = \mbrack{N_1,\val}$ and $P_2=\mbrack{N_2,\val}$.
\end{proof}

A direct inductive argument using the above corollary gives us the
main result of this section.
\begin{thm}\label{thm:reflection}
  For every finitary model $\Ss$ and a closed term $M$ of type $0$:
  $$BT(\mbrack{M,\es})=\rBT_\Ss(M)\ .$$
\end{thm}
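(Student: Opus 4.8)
The plan is to prove $BT(\mbrack{M,\es}) = \rBT_\Ss(M)$ by coinduction on the tree structure, matching the two trees node-by-node. Both $BT(\mbrack{M,\es})$ and $\rBT_\Ss(M)$ are potentially infinite binary trees, so the natural approach is to show that their roots agree and that the immediate subtrees are themselves related instances of the same statement. Since $M$ ranges over closed terms of type $0$ but the subterms $N_1,N_2$ appearing in reductions need not stay closed in the same valuation, I would first strengthen the statement to an open form: for every term $M$ of type $0$ and every valuation $\val$ defined on the free variables of $M$, one has $BT(\mbrack{M,\val}) = \rBT_\Ss(M)$ where $\rBT_\Ss$ is computed using $\val$ to interpret free variables. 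The theorem is then the special case $\val = \es$.

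\textbf{The engine of the proof is Corollary~\ref{coro:solvability_head_reduction}}, which gives the crucial equivalence
\begin{equation*}
  M\to^{*}_{h} aM_1M_2\quad\text{iff}\quad \mbrack{M,\val}\to^{*}_{h} a^{\sem{M}^\val}\mbrack{M_1,\val}\mbrack{M_2,\val}.
\end{equation*}
First I would do a case analysis on the head behaviour of $M$, following exactly the three clauses of Definition~\ref{df:Bohm tree respecting S}. If $M$ head-reduces to $bN_1N_2$ for a binary constant $b$, then by the corollary $\mbrack{M,\val}$ head-reduces to $b^{\sem{bN_1N_2}^\val}\mbrack{N_1,\val}\mbrack{N_2,\val}$, so by Definition~\ref{df:Bohm tree} the root of $BT(\mbrack{M,\val})$ is labelled $b^{\sem{M}^\val}$; this is precisely the root label $b^{\sem{bN_1N_2}_\Ss}$ prescribed by $\rBT_\Ss(M)$, since $\sem{M}^\val = \sem{bN_1N_2}^\val$ by soundness. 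For the nullary case $M\to^*_{h} c$, a direct (simpler) instance of Lemma~\ref{lem:head_reduction_commutes_with_reflexion} gives $\mbrack{M,\val}\to^*_{h} c^{\r(c)}$, matching the leaf $c^{\sem{c}_\Ss}$. For the unsolvable case, I must argue that $\mbrack{M,\val}$ is itself unsolvable, so that both trees are $\W^0$.

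\textbf{I expect the unsolvable case to be the main obstacle}, because it requires the contrapositive direction of the head-reduction correspondence. The right-to-left implication of Corollary~\ref{coro:solvability_head_reduction} already does the real work here: if $\mbrack{M,\val}$ had a head normal form, then by that corollary (using the cited fact that a term has a head normal form iff it head-reduces to one) $M$ would head-reduce to some $aM_1M_2$ or to a constant, contradicting unsolvability of $M$. Hence $\mbrack{M,\val}$ is unsolvable and $BT(\mbrack{M,\val}) = \W^0 = \rBT_\Ss(M)$. Having established that the roots agree in all three cases, the coinductive step closes immediately: in the binary case the two immediate subtrees are $BT(\mbrack{N_i,\val})$ and $\rBT_\Ss(N_i)$, which are equal by the (coinductive) hypothesis applied to $N_i$ with the same valuation $\val$. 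Since a binary tree is determined by its root label together with its two subtrees, the two trees coincide everywhere, completing the proof.
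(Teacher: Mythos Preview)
Your argument is correct and is exactly the ``direct inductive argument using the above corollary'' that the paper alludes to in one line; you have simply spelled out the three cases and the coinductive closure. One small remark: your strengthening to open terms with an arbitrary valuation is unnecessary. When $M$ is a closed term of type $0$ and $M\to^{*}_{h} bN_1N_2$ with $b$ a binary constant, the reduct $bN_1N_2$ is itself closed (head reduction does not introduce free variables and $b$ binds nothing), so $N_1$ and $N_2$ are again closed terms of type $0$ and the recursion stays within the original statement with $\val=\es$. The generalized statement is harmless, but the motivation you give for it is mistaken.
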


% For the proof of the correctness of the translation we will first need
% to extend the translation to environments, closures, and stacks
% Recall that an environment $\r$ is a partial function assigning
% closures to variables; closures are pairs $(N,\r)$ where $N$ is a term
% and $\r$ is an environment defined for all free variables of $N$. By
% $\exp(N,\r)$ we denote the term obtained by hereditary applying $\r$ to
% $x$. 

% An environment $\r$ defines obviously a valuation $\sem{\r}:Vars\to
% \Kk$ by $\sem{\r}(x)=\sem{\exp(x,\r)}$.  With this we define
% \begin{align*}
%   [\r](x)=&[\r(x)]\\
%   [(N,\r)]=&([N,\sem{\r}],[\r])\\
%   [(N,\r)::S]=&[(N,\r)]::\sem{N}^{\sem{\r}}::[S]
% \end{align*}

% We will show that for every closed term $M_0$ of type $0$ $BT(M_0)$ is the
% same as $BT([M_0,\e])$ modulo annotations on constants, that are by the
% way correct. For this we use characterization of $BT(M_0)$ with the
% Krivine machine, and show that the computations of the Krivine machine
% from the two terms are simulating each other. The invariant is that
% the configurations on both sides are as follows
% \begin{equation*}
%   (M,\r,S) \qquad ([M,\sem{\r}],[\r],[S])
% \end{equation*}

% More efficient translation. We will assume that every constant $a$ is
% replaced by $\lambda z_1\lambda z_2.\ az_1,z_2$.

\paragraph{Remark:} If the divergence can be observed in the model
$\Ss$ (as it is the case for GFP models and for the model $\Kk$, cf.
Proposition~\ref{prop:observing_convergence_in_Kk}) then in the
translation above we could add the rule $\mbrack{M,\val} = \W$
whenever $\sem{M}^\val$ denotes a diverging term. We would obtain a
term which would always converge. A different construction for
achieving the same goal is proposed
in~\cite{DBLP:journals/corr/abs-1202-3498}.

\paragraph{Remark:} Even though the presented translation preserves
the structure of a term, it makes the term much bigger due to
the \emph{case} construction in the clause for $\l$-abstraction. The
blow-up is unavoidable due to complexity lower-bounds on the
model-checking problem. Nevertheless, one can try to limit the use of
the \emph{case} construct. We present below a slightly more efficient
translation that takes the value of the known arguments into account
and thus avoids the unnecessary use of the \emph{case} construction.  For
this, the translation is now parametrized also with a stack of values
from $\Ss$ so as to recall the values taken by the arguments.  For the
sake of simplicity, we also assume that the constants always have all
their arguments (this can be achieved by putting the $\l$-term in
$\eta$-long form). This translation is essentially obtained from the
previous one by techniques of constant propagation as used in partial
evaluation~\cite{jones93:_partial_evaluat_autom_progr_gener}.

\begin{eqnarray*}
  \mbrack{\l x^\a.M,\val,d::S} &=& \l x^{\a^\bullet}
  y^{\mbrack{\a}}.\mbrack{M,\val[d/x^\a],S}\\
  \mbrack{\l x^\a.M,\val,\e}&=&\l x^{\a^\bullet}
  y^{\mbrack{\a}}.\text{case }y^{\mbrack{\a}} \set{d \to
  \mbrack{M,\val[d/x^\a],\e}}_{d\in \Ss_\a}\\
\mbrack{MN,\val,S}&=&
\mbrack{M,\val,\sem{N}^\val::S}\mbrack{N,\val,\e}\sem{N}^\val\\
\mbrack{a,\val,d_1::d_2::\e}&=& \lambda x^0_1\l y^{\mbrack{0}}_1 \l
x^{0}_2\l y^{\mbrack{0}}_2.\
a^{\sem{a}d_1 d_2}x_1x_2\text{ when $a$ is a binary constant}\\  
\mbrack{a,\val}&=&a^{\r(a)}\text{ when $a$ is a nullary constant}\\
\mbrack{x^\a,\val,S}&=&x^{\a^\bullet}\\
\mbrack{Y M,\val,S}&=&Y \mbrack{M,\val,\sem{YM}^\val::S}\\
\end{eqnarray*}

%%% Local Variables: 
%%% mode: latex
%%% TeX-master: "m"
%%% End: 

\section{Conclusions}

We have considered the class of properties expressible by TAC
automata. These automata can talk about divergence as opposed to
$\W$-blind TAC automata that are usually considered in the
literature. We have given some example properties that require
TAC automata that are not $\W$-blind (cf.~page~\pageref{ex:properties}).
We have presented  the model-based approach to
model-checking problem for TAC automata.
While a
priori it is more difficult to construct a finitary model than to come
up with a decision procedure, in our opinion this additional effort is
justified. It allows, as we show here, to use the techniques of the
theory of the $\l$-calculus. It opens new ways of looking at the
algorithmics of the model-checking problem. Since typing in
intersection type systems~\cite{DBLP:conf/popl/Kobayashi09} and step
functions in models are in direct
correspondence~\cite{salvati12:_loader_urzyc_logic_relat}, the
model-based approach can also benefit from all the developments in
algorithms based on typing. Finally, this approach allows us to get
new constructions as demonstrated by our transformation of a scheme to
a scheme reflecting a given property. Observe that this transformation
is general and does not depend on our particular model.

As we have seen, the model-based approach is particularly
straightforward for $\W$-blind TAC automata. It uses standard
observations on models of the $\l Y$-calculus and
Proposition~\ref{prop:from_W_blind_TAC_to_GFP_models} with a simple
inductive proof. The model we propose for insightful automata may seem
involved; nevertheless, the construction is based on simple and
standard techniques. Moreover, this model implements an interesting
interaction between components. It succeeds in mixing a GFP model for
$\W$-blind automaton with the model $\Dd$ for detecting solvability.

%  Let us recall that there is a general correspondence
% between models and intersection type
% systems~\cite{salvati12:_loader_urzyc_logic_relat}: types become step
% functions and intersections become joins in the model.

The approach using models opens several new perspectives. One can
try to characterize which kinds of fixpoints correspond to which class
of automata conditions. More generally, models hint a possibility to
have an Eilenberg like variety theory for
lambda-terms~\cite{eilenberg}. This theory would cover infinite
regular words and trees too as they can be represented by $\l
Y$-terms. Finally, considering model-checking algorithms,
the model-based approach puts a focus on computing fixpoints in finite
partial orders. This means that a number of techniques, ranging from
under/over-approximations, to program optimization can be applied.

%%% Local Variables: 
%%% mode: latex
%%% TeX-master: "m"
%%% End: 

%  LocalWords:  finitary TAC solvability fixpoints

\bibliographystyle{alpha}
\bibliography{biblio}

% \newpage

% \section{Alternate proof of correctness and completeness of the
%   model}\label{sec:altern-proof-corr}
% \input{alternate_proof}

\end{document}